\documentclass{eptcs}

\usepackage{bbm}

\usepackage{titling}
\usepackage{hhline}
\usepackage{bookmark}
\usepackage[table]{xcolor}
\usepackage[all,cmtip]{xy}

\newcommand{\R}{\mathbb{R}}

\usepackage{tikzit}

\tikzstyle{H}=[draw, color=black, fill={rgb:black,1;white,3}, shape=rectangle, tikzit fill=yellow]
\tikzstyle{thick}=[draw, line width=0.5mm]

\tikzstyle{Z}=[rounded rectangle, minimum height=12pt, font={\boldmath}, text centered,inner sep=1.5mm, outer sep=-2mm, scale=.8, draw=black, fill=white, tikzit draw=black, tikzit fill=whites]

\tikzstyle{X}=[rounded rectangle, minimum height=12pt, font={\large\boldmath},text=white,text centered, inner sep=1.5mm, outer sep=-2mm, scale=.8, draw=black, fill=black!55, tikzit draw=black, tikzit fill={rgb,255: red,191; green,191; blue,191}]

\tikzstyle{classicalwire}=[color=black,decorate,decoration={coil,segment length=3pt,aspect=1,amplitude=.8pt}]

\tikzstyle{electricalcontrol}=[color=blue!20!black!80, dash pattern=on 3pt off 2pt]

\tikzstyle{Zthick}=[rounded rectangle, minimum height=12pt, font={\large},text centered, inner sep=1.5mm, outer sep=-2mm, scale=.8, draw={rgb:black,3;white,1}, fill=white,line width=0.75mm, tikzit draw=black, tikzit fill=whites]
\tikzstyle{Xthick}=[rounded rectangle, minimum height=12pt, font={\large\boldmath}, text centered, text=white, inner sep=1.5mm, outer sep=-2mm, scale=.8, draw={rgb:black,3;white,1},fill=black!55,line width=0.75mm, tikzit draw=black, tikzit fill={rgb,255: red,191; green,191; blue,191}]

\tikzstyle{phase}=[draw, fill=white, diamond, scale=1, inner sep=0pt, minimum size=10pt]
\tikzstyle{discard}=[draw, xscale=2.2, ground, rotate=90]
\tikzstyle{mmixed}=[draw, quantum, yscale=-2.2, ground, rotate=180]
\tikzstyle{quantum}=[line width=.6mm]
\tikzstyle{map}=[draw, color=black, fill=white, rectangle]
\tikzstyle{mapperp}=[draw, color=white, fill=black, text=white, rectangle]
\tikzstyle{s}=[draw, color=black, fill=gray, rectangle]
\tikzstyle{mapthick}=[draw, color=black, fill=white, rectangle, inner sep=0pt, minimum size=15pt, line width=0.5mm]
\tikzstyle{otimes}=[draw, fill=white, rotate=45, scale=0.9, minimum height=.1cm, circle, append after command={[shorten >=\pgflinewidth, shorten <=\pgflinewidth]
(\tikzlastnode.north) edge (\tikzlastnode.south)
(\tikzlastnode.east) edge (\tikzlastnode.west)}]
\tikzstyle{dot}=[thick, fill=black, circle, scale=1, inner sep=.05cm]
\tikzstyle{oplus}=[draw, scale=0.9, minimum height=.1cm, circle, append after command={[shorten >=\pgflinewidth, shorten <=\pgflinewidth]
(\tikzlastnode.north) edge (\tikzlastnode.south)
(\tikzlastnode.east) edge (\tikzlastnode.west)
}]
\tikzstyle{andin}=[draw, and gate US, rotate=90, scale=1, fill=white, label={center:{\it \&}}]
\tikzstyle{mulin}=[draw, and gate US, rotate=90, scale=1, fill=white, label={center:{}}]
\tikzstyle{andout}=[draw, and gate US, rotate=-90, scale=1, fill=white, label={center:{\it \&}}]
\tikzstyle{scalar}=[draw, rounded corners=1ex, rectangle round south west=false, rectangle round south east=false, tikzit fill={rgb,255: red,129; green,253; blue,255}, tikzit shape=rectangle]
\tikzstyle{scalarop}=[draw, rounded corners=1ex, rectangle round north west=false, rectangle round north east=false, tikzit fill={rgb,255: red,116; green,172; blue,255}, tikzit shape=rectangle]
\tikzstyle{fanin}=[draw, shape border rotate=30, regular polygon, regular polygon sides=3, fill=white, inner sep=.1cm]
\tikzstyle{fanout}=[draw, shape border rotate=-30, regular polygon, regular polygon sides=3, fill=white, inner sep=.1cm]
\tikzstyle{onein}=[draw, shape border rotate=30, regular polygon, regular polygon sides=3, fill=black, inner sep=.04cm, scale=1.2]
\tikzstyle{oneout}=[draw, shape border rotate=-30, regular polygon, regular polygon sides=3, fill=black, inner sep=.04cm, scale=1.2]
\tikzstyle{zeroin}=[draw, shape border rotate=30, regular polygon, regular polygon sides=3, fill=white, inner sep=.04cm, scale=1.2]
\tikzstyle{zeroout}=[draw, shape border rotate=-30, regular polygon, regular polygon sides=3, fill=white, inner sep=.04cm, scale=1.2]


\input{lagrel.tikzdefs}

\usepackage{mdframed}
\usepackage{arydshln}
\usepackage{multicol}

\usepackage{everypage}
\usepackage{lipsum}
\usepackage{amsthm}
\usepackage[inline]{enumitem}   
\usepackage{scalerel,stackengine}
\stackMath
\renewcommand\hat[1]{%
\savestack{\tmpbox}{\stretchto{%
  \scaleto{%
    \scalerel*[\widthof{\ensuremath{#1}}]{\kern-.6pt\bigwedge\kern-.6pt}%
    {\rule[-\textheight/2]{1ex}{\textheight}}
  }{\textheight}%
}{0.5ex}}%
\stackon[1pt]{#1}{\tmpbox}%
}
\parskip 1ex

\makeatletter
\newcommand{\inlineitem}[1][]{%
\ifnum\enit@type=\tw@
    {\descriptionlabel{#1}}
  \hspace{\labelsep}%
\else
  \ifnum\enit@type=\z@
       \refstepcounter{\@listctr}\fi
    \quad\@itemlabel\hspace{\labelsep}%
\fi}
\makeatother
\parindent=0pt

\renewcommand{\phi}{\varphi}

\newcommand{\xrightarrowtail}[1]{\!\!{\xymatrix@C=1em{\ar@{>->}[r]^{#1}&}}\!\!\!}
\newcommand{\xleftarrowtail}[1]{\!\!\!{\xymatrix@C=1em{&\ar@{>->}[l]_{#1}}}\!\!}

\newcommand{\xrightarrowiso}[1]{\!\!{\xymatrix@C=1em{\ar@{->}[r]^{#1}_\cong&}}\!\!\!}
\newcommand{\xleftarrowiso}[1]{\!\!\!{\xymatrix@C=1em{&\ar@{->}[l]_{#1}^\cong}}\!\!}

\theoremstyle{theorem}
%

  \newtheorem{theorem}{Theorem}[section]
  \newtheorem{corollary}[theorem]{Corollary}
  \newtheorem{lemma}[theorem]{Lemma}

\theoremstyle{definition}

  \newtheorem{definition}[theorem]{Definition}
  \newtheorem{example}[theorem]{Example}
  
  \newtheorem{remark}[theorem]{Remark}

\newcommand{\Mat}{\mathsf{Mat}}

\newcommand{\Not}{\mathsf{not}}

\newcommand{\Rel}{\mathsf{Rel}}
\newcommand{\op}{\mathsf{op}}

\newcommand{\FdHilb}{\mathsf{FHilb}}
\newcommand{\FHilb}{\mathsf{FHilb}}
\newcommand{\CPM}{\mathsf{CPM}}

\newcommand{\Aff}{\mathsf{Aff}}

\newcommand{\cb}{{\sf cb}}

\newcommand{\F}{\mathbb{F}}
\newcommand{\X}{\mathbb{X}}

\newcommand{\Z}{\mathbb{Z}}
\newcommand{\N}{\mathbb{N}}

\newcommand{\Lag}{\mathsf{Lag}}
\newcommand{\im}{\mathsf{im}}
\newcommand{\ZX}{\mathsf{ZX}}

\DeclareMathSymbol{\bot}{\mathord}{symbols}{"3F}

\renewcommand{\epsilon}{\varepsilon}
\renewcommand{\bar}[1]{\overline{#1}\hspace*{.01cm}}

\newcommand{\Stab}{{\sf Stab}}
\newcommand{\LinRel}{\sf LinRel}

\newcommand{\Isot}{{\sf Isot}}
\newcommand{\Co}{{\sf Co}}


\usepackage{amsmath}
\usepackage{pict2e}

\newdir{|>}{-<5pt,0pt>{
\begin{tikzpicture}[scale=.7]
	\begin{pgfonlayer}{nodelayer}
		\node [style=none] (0) at (0, 0) {};
		\node [style=none] (1) at (1, 0) {};
		\node [style=none] (2) at (-1, -0.25) {};
	\end{pgfonlayer}
	\begin{pgfonlayer}{edgelayer}
		\draw (2.center) to (0.center);
		\draw (0.center) to (1.center);
	\end{pgfonlayer}
\end{tikzpicture}
}}
\newdir{|<}{-<5pt,0pt>{
\begin{tikzpicture}[scale=.9]
	\begin{pgfonlayer}{nodelayer}
		\node [style=none] (0) at (0, -0.25) {};
		\node [style=none] (1) at (-1, -0.25) {};
		\node [style=none] (2) at (1, 0) {};
	\end{pgfonlayer}
	\begin{pgfonlayer}{edgelayer}
		\draw (2.center) to (0.center);
		\draw (0.center) to (1.center);
	\end{pgfonlayer}
\end{tikzpicture}
}}

\DeclareFontFamily{U}{mathx}{\hyphenchar\font45}
\DeclareFontShape{U}{mathx}{m}{n}{
      <5> <6> <7> <8> <9> <10>
      <10.95> <12> <14.4> <17.28> <20.74> <24.88>
      mathx10
      }{}
\DeclareSymbolFont{mathx}{U}{mathx}{m}{n}
\DeclareFontSubstitution{U}{mathx}{m}{n}
\DeclareMathAccent{\widecheck}{0}{mathx}{"71}
\DeclareMathAccent{\wideparen}{0}{mathx}{"75}

\usepackage{amsmath}

\usepackage{hyperref}

\newcommand\numeq[2]%
  {\label{#2}\stackrel{\scriptscriptstyle(\mkern-1.5mu#1\mkern-1.5mu)}{=}}

\tikzset{meter/.append style={draw, inner sep=10, rectangle, font=\vphantom{A}, minimum width=30, line width=.8,
 path picture={\draw[black] ([shift={(.1,.3)}]path picture bounding box.south west) to[bend left=50] ([shift={(-.1,.3)}]path picture bounding box.south east);\draw[black,-latex] ([shift={(0,.1)}]path picture bounding box.south) -- ([shift={(.3,-.1)}]path picture bounding box.north);}}}

\xymatrixrowsep{.5cm}
\xymatrixcolsep{.65cm}

\title{The Algebra for Stabilizer Codes}

\date{\today}
\author{Cole Comfort \\ Department of Computer Science, University of Oxford}

\begin{document}

\title{The Algebra for Stabilizer Codes}

\maketitle

\begin{abstract}
There is a bijection between odd prime dimensional qudit pure stabilizer states modulo invertible scalars and affine Lagrangian subspaces of finite dimensional symplectic $\F_p$-vector spaces.  In the language of the stabilizer formalism, full rank stabilizer tableaux are exactly the bases for affine Lagrangian subspaces. This correspondence extends to an isomorphism of props: the composition of stabilizer circuits corresponds to the relational composition of affine subspaces spanned by the tableaux, the tensor product corresponds to the direct sum.  In this paper, we extend this correspondence between stabilizer circuits and tableaux to the mixed setting;  regarding stabilizer codes as affine coisotropic subspaces (again only in odd prime qudit dimension/for qubit CSS codes).  We show that by splitting the projector for a stabilizer code we recover the error detection protocol and the error correction protocol with affine classical processing power.
\end{abstract}

\section{Introduction}
\label{sec:intro}

The connection between the stabilizer formalism and symplectic geometry has been known for quite a while, at least as early as the papers of Calderbank, Rains, Shor and Sloane
\cite{css,cssone}.
This was also discovered by Gross at a later date \cite{gross}. However, its role in the stabilizer formalism is often underplayed, for example, it is not explicitly mentioned in Gottesman's highly influential PhD thesis \cite{gottesman}.  Perhaps a reason for this is that despite their dominance in the quantum computer literature,
qubit stabilizer circuits do not conform so nicely to the symplectic geometric framework as do all other prime qudit dimensions (one could argue, because 2 is the oddest prime).
\footnote{Most research in quantum computing has traditionally studied qubits, which has hindered this formalism. However qudit quantum computing has started to become more researched, especially in the ZX-calculus community, which aims to create graphical languages to be used as tools for optimizing and for performing calculations involving quantum circuits.  For example, the odd prime qudit stabilizer ZX-calculus has recently enjoyed a traditional completeness theorem \cite{booth}, shortly following a ``doubled completeness theorem'' in a preceding paper \cite{lagrel}, being preceded itself by a proper completeness theorem for qutrit stabilizer circuits \cite{wangqutrit}. There has also been work on the qufinite ZX-calculus, which is universal for all finite dimensional Hilbert spaces \cite{qufinite}.  The ZXW calculus  also has recently enjoyed a completeness theorem for qudits \cite{zxw}.  Similarly recent research has established a universal (but not necessarily complete) presentation of the qudit ZH calculus \cite{roy}.  We argue that such research into qudit circuits is not in vein; it can illuminate nontrivial structure, which can potentially be very useful!}

Recently, research into Spekkens' toy model has shed more light on the connection between symplectic geometry and stabilizer circuits.  The qubit version of the model was discovered by Spekkens, where the connection to symplectic geometry was not initially recognized \cite{spekkens}.  Spekkens initially remarked that the toy model had nice properties not satisfied by full quantum theory: for example,  the states and dynamics can be interpreted in terms of  incomplete knowledge about local, noncontextual hidden variables. He observed that this model also has many similarities to quantum theory; however, he proves that the obvious interpretation sending toy bits to qubits isn't sound.   Spekkens' toy model was later formalized in terms of symplectic geometry when it was extended to other qudit dimensions \cite{spekkens2016quasi}. It has since been shown to have the same measurement statistics as odd prime dimensional stabilizer quantum mechanics \cite{catani}.  Even more recently, this correspondence between Spekkens' toy model and stabilizer circuits in odd prime qudit dimensions has been extended to an isomorphism (modulo nonzero scalars), when one regards Spekkens' toy model as the prop of affine Lagrangian relations over $\F_p$ and stabilizer circuits as a monoidal subcategory of Hilbert spaces \cite{lagrel}.  That is to say, in odd prime dimensions, up to scalars, stabilizer circuits and Spekkens' toy model are the same.  This gives a nice categorical semantics for odd prime dimensional qudit stabilizer circuits.  In physical terms,  this reveals that odd prime dimensional qudit stabilizer circuits  are a {\em subtheory} (not merely a toy theory) of quantum theory whose states and dynamics can be interpreted in terms of  incomplete knowledge about local, noncontextual hidden variables.  This yields an sound, universal and complete interpretation of Spekkens' qubit toy model in stabilizer circuits modulo nonzero scalars, sending a toy qubit to two qubits.

Sections \ref{sec:linrel} and \ref{sec:lagrel} give the necessary background needed to understand this correspondence between Spekkens' toy model and stabilizer circuits. Section \ref{sec:linrel} reviews graphical linear and affine algebra.  That is, we review the  graphical presentations linear and affine relations: the monoidal categories whose maps are linear and affine subspaces, respectively.  Section \ref{sec:lagrel} builds on graphical linear algebra and reviews the graphical calculus for linear and affine Lagrangian relations: the monoidal categories whose morphisms are linear and affine Lagrangian subspaces of symplectic vector spaces. This lays the foundation for the rest of the paper, giving a graphical, categorical semantics for Spekkens' toy model and odd prime stabilizer circuits when working over the field $\F_p$, as mentioned in the previous paragraph.

In Section \ref{sec:coisotrel}, we extend the correspondence between affine Lagrangian relations over $\F_p$ and stabilizer circuits to capture quantum mixing.  We show that affine {\em coisotropic} relations over $\F_p$ correspond to stabilizer codes.  Affine coisotropic relations are obtained by adding the discard relation (which acts as the quantum discard for stabilizer circuits) to affine Lagrangian relations.  The interpretation in terms of affine relations/the $X$-gate fragment of the ZX-calculus is as follows (where we draw composition from bottom to top and the tensor from left to right):
$$
\left\llbracket\
\begin{tikzpicture}
	\begin{pgfonlayer}{nodelayer}
		\node [style=none] (0) at (21, 5) {};
		\node [style=none] (1) at (22, 5) {};
		\node [style=none] (2) at (21, 2.5) {};
		\node [style=none] (3) at (22, 2.5) {};
		\node [style=Z] (4) at (21.5, 3.75) {$a,b$};
		\node [style=none] (5) at (21.5, 4.5) {$\cdots$};
		\node [style=none] (6) at (21.5, 3) {$\cdots$};
		\node [style=none] (7) at (21.5, 4.75) {};
		\node [style=none] (8) at (21.5, 2.75) {};
	\end{pgfonlayer}
	\begin{pgfonlayer}{edgelayer}
		\draw [in=150, out=-90, looseness=0.75] (0.center) to (4);
		\draw [in=90, out=-150, looseness=0.75] (4) to (2.center);
		\draw [in=-30, out=90, looseness=0.75] (3.center) to (4);
		\draw [in=-90, out=30, looseness=0.75] (4) to (1.center);
	\end{pgfonlayer}
\end{tikzpicture}
\ \right\rrbracket
=
\begin{tikzpicture}
	\begin{pgfonlayer}{nodelayer}
		\node [style=none] (78) at (402.5, 0.75) {};
		\node [style=none] (79) at (402.5, -2.75) {};
		\node [style=Z] (80) at (402.1, -1.75) {};
		\node [style=none] (81) at (402.145, 0.65) {$\cdots$};
		\node [style=none] (82) at (402.125, -2.65) {$\cdots$};
		\node [style=none] (83) at (400.75, 0.75) {};
		\node [style=none] (84) at (400, -2.75) {};
		\node [style=X] (85) at (400.425, -0.25) {$a$};
		\node [style=none] (86) at (400.375, 0.65) {$\cdots$};
		\node [style=none] (87) at (400.395, -2.625) {$\cdots$};
		\node [style=none] (88) at (400.75, -2.75) {};
		\node [style=none] (89) at (401.75, -2.75) {};
		\node [style=none] (90) at (400, 0.75) {};
		\node [style=none] (91) at (401.75, 0.75) {};
		\node [style=scalar] (92) at (401.25, -1) {$b$};
	\end{pgfonlayer}
	\begin{pgfonlayer}{edgelayer}
		\draw [in=-45, out=90] (79.center) to (80);
		\draw [in=-90, out=60, looseness=0.75] (80) to (78.center);
		\draw [in=90, out=-120, looseness=0.75] (85) to (84.center);
		\draw [in=-90, out=45] (85) to (83.center);
		\draw [in=-60, out=90, looseness=0.75] (88.center) to (85);
		\draw [in=-135, out=90] (89.center) to (80);
		\draw [in=-90, out=150] (85) to (90.center);
		\draw [in=270, out=120, looseness=0.75] (80) to (91.center);
		\draw [in=-75, out=150] (80) to (92);
		\draw [in=330, out=90] (92) to (85);
	\end{pgfonlayer}
\end{tikzpicture},
\hspace*{.3cm}
\left\llbracket \
\begin{tikzpicture}
	\begin{pgfonlayer}{nodelayer}
		\node [style=none] (0) at (21, 5) {};
		\node [style=none] (1) at (22, 5) {};
		\node [style=none] (2) at (21, 2.5) {};
		\node [style=none] (3) at (22, 2.5) {};
		\node [style=X] (4) at (21.5, 3.75) {$a,b$};
		\node [style=none] (5) at (21.5, 4.5) {$\cdots$};
		\node [style=none] (6) at (21.5, 3) {$\cdots$};
		\node [style=none] (7) at (21.5, 4.75) {};
		\node [style=none] (8) at (21.5, 2.75) {};
	\end{pgfonlayer}
	\begin{pgfonlayer}{edgelayer}
		\draw [in=150, out=-90, looseness=0.75] (0.center) to (4);
		\draw [in=90, out=-150, looseness=0.75] (4) to (2.center);
		\draw [in=-30, out=90, looseness=0.75] (3.center) to (4);
		\draw [in=-90, out=30, looseness=0.75] (4) to (1.center);
	\end{pgfonlayer}
\end{tikzpicture}
\ \right\rrbracket
=
\begin{tikzpicture}
	\begin{pgfonlayer}{nodelayer}
		\node [style=none] (93) at (403.5, 0.75) {};
		\node [style=none] (94) at (403.5, -2.75) {};
		\node [style=Z] (95) at (403.9, -1.75) {};
		\node [style=none] (96) at (403.88, 0.65) {$\cdots$};
		\node [style=none] (97) at (403.85, -2.65) {$\cdots$};
		\node [style=none] (98) at (405.25, 0.75) {};
		\node [style=none] (99) at (406, -2.75) {};
		\node [style=X] (100) at (405.575, -0.25) {$a$};
		\node [style=none] (101) at (405.625, 0.65) {$\cdots$};
		\node [style=none] (102) at (405.605, -2.625) {$\cdots$};
		\node [style=none] (103) at (405.25, -2.75) {};
		\node [style=none] (104) at (404.25, -2.75) {};
		\node [style=none] (105) at (406, 0.75) {};
		\node [style=none] (106) at (404.25, 0.75) {};
		\node [style=scalar] (107) at (404.75, -1) {$b$};
	\end{pgfonlayer}
	\begin{pgfonlayer}{edgelayer}
		\draw [in=-135, out=90] (94.center) to (95);
		\draw [in=-90, out=120, looseness=0.75] (95) to (93.center);
		\draw [in=90, out=-60, looseness=0.75] (100) to (99.center);
		\draw [in=-90, out=135] (100) to (98.center);
		\draw [in=-120, out=90, looseness=0.75] (103.center) to (100);
		\draw [in=-45, out=90] (104.center) to (95);
		\draw [in=-90, out=30] (100) to (105.center);
		\draw [in=-90, out=60, looseness=0.75] (95) to (106.center);
		\draw [in=-105, out=30] (95) to (107);
		\draw [in=-150, out=90] (107) to (100);
	\end{pgfonlayer}
\end{tikzpicture}
,
\hspace*{.3cm}
\left\llbracket \
\begin{tikzpicture}
	\begin{pgfonlayer}{nodelayer}
		\node [style=none] (0) at (23, 4.5) {};
		\node [style=none] (1) at (23, 3) {};
		\node [style=scalar] (2) at (23, 3.75) {$a$};
	\end{pgfonlayer}
	\begin{pgfonlayer}{edgelayer}
		\draw (1.center) to (2);
		\draw (2) to (0.center);
	\end{pgfonlayer}
\end{tikzpicture}
\ \right\rrbracket
=
\begin{tikzpicture}
	\begin{pgfonlayer}{nodelayer}
		\node [style=none] (3) at (24.25, 4.5) {};
		\node [style=none] (4) at (24.25, 3) {};
		\node [style=scalarop] (5) at (24.25, 3.75) {$a$};
		\node [style=none] (6) at (25, 4.5) {};
		\node [style=none] (7) at (25, 3) {};
		\node [style=scalar] (8) at (25, 3.75) {$a$};
	\end{pgfonlayer}
	\begin{pgfonlayer}{edgelayer}
		\draw (4.center) to (5);
		\draw (5) to (3.center);
		\draw (7.center) to (8);
		\draw (8) to (6.center);
	\end{pgfonlayer}
\end{tikzpicture}\ ,
\hspace*{.3cm}
\left\llbracket \
\begin{tikzpicture}[yscale=-1]
	\begin{pgfonlayer}{nodelayer}
		\node [style=none] (0) at (0.25, 0) {};
		\node [ground] (1) at (0.25, -0.5) {};
	\end{pgfonlayer}
	\begin{pgfonlayer}{edgelayer}
		\draw (1) to (0.center);
	\end{pgfonlayer}
\end{tikzpicture}\
\right\rrbracket
=
\begin{tikzpicture}
	\begin{pgfonlayer}{nodelayer}
		\node [style=Z] (2) at (4, 0) {};
		\node [style=Z] (3) at (4.5, 0) {};
		\node [style=none] (4) at (4, -1) {};
		\node [style=none] (5) at (4.5, -1) {};
	\end{pgfonlayer}
	\begin{pgfonlayer}{edgelayer}
		\draw (4.center) to (2);
		\draw (3) to (5.center);
	\end{pgfonlayer}
\end{tikzpicture}
$$
Where the spiders have the following interpretations as density matrices:

\

\hfil\scalebox{.9}{$ \displaystyle
\left\llbracket\
\begin{tikzpicture}
	\begin{pgfonlayer}{nodelayer}
		\node [style=none] (0) at (21, 5) {};
		\node [style=none] (1) at (22, 5) {};
		\node [style=none] (2) at (21, 2.5) {};
		\node [style=none] (3) at (22, 2.5) {};
		\node [style=Z] (4) at (21.5, 3.75) {$\hspace*{.05cm}n,m\hspace*{.05cm}$};
		\node [style=none] (5) at (21.5, 4.5) {$\cdots$};
		\node [style=none] (6) at (21.5, 3) {$\cdots$};
		\node [style=none] (7) at (21.5, 4.75) {};
		\node [style=none] (8) at (21.5, 2.75) {};
	\end{pgfonlayer}
	\begin{pgfonlayer}{edgelayer}
		\draw [in=150, out=-90, looseness=0.75] (0.center) to (4);
		\draw [in=90, out=-150, looseness=0.75] (4) to (2.center);
		\draw [in=-30, out=90, looseness=0.75] (3.center) to (4);
		\draw [in=-90, out=30, looseness=0.75] (4) to (1.center);
	\end{pgfonlayer}
\end{tikzpicture}\
\right\rrbracket
\propto 
\sum_{a=0}^{p-1}  e^{\pi\cdot i /p (n\cdot a+m\cdot a^2)}| |a, \ldots, a \rangle\rangle \langle \langle a, \ldots, a||
\ , \ \
\left\llbracket\
\begin{tikzpicture}
	\begin{pgfonlayer}{nodelayer}
		\node [style=none] (0) at (21, 5) {};
		\node [style=none] (1) at (22, 5) {};
		\node [style=none] (2) at (21, 2.5) {};
		\node [style=none] (3) at (22, 2.5) {};
		\node [style=X] (4) at (21.5, 3.75) {$\hspace*{.05cm}n,m\hspace*{.05cm}$};
		\node [style=none] (5) at (21.5, 4.5) {$\cdots$};
		\node [style=none] (6) at (21.5, 3) {$\cdots$};
		\node [style=none] (7) at (21.5, 4.75) {};
		\node [style=none] (8) at (21.5, 2.75) {};
	\end{pgfonlayer}
	\begin{pgfonlayer}{edgelayer}
		\draw [in=150, out=-90, looseness=0.75] (0.center) to (4);
		\draw [in=90, out=-150, looseness=0.75] (4) to (2.center);
		\draw [in=-30, out=90, looseness=0.75] (3.center) to (4);
		\draw [in=-90, out=30, looseness=0.75] (4) to (1.center);
	\end{pgfonlayer}
\end{tikzpicture}\
\right\rrbracket
\propto 
\sum_{a,b=0}^{p-1} 
e^{\pi\cdot i/ p \cdot (n\cdot b-m\cdot b^2)} ||b, \ldots,b \rangle\rangle \langle\langle a, \ldots, a|| 
$}

\

We show that by splitting the decoherence map for the $Z$/$X$ observables we get a calculus for measurement and state preparation. The classical channels then become single wires where the measurement and state preparation are interpreted as injecting and projecting between the single and doubled wires.  The classical wires are drawn coiled for clarity:
$$
\left\llbracket \
\begin{circuitikz}
\node[meter] (meter) at (0,0) {};
\draw (.1,.5) to (.1,1);
\draw (-.1,.5) to (-.1,1);
\draw (0,-1) to (0,-.5);
\end{circuitikz} \ 
\right\rrbracket 
=

$$
In Example \ref{ex:rep} we use the threefold repetition code as a specific example.

\section{Linear/affine relations and the ZX-calculus}
\label{sec:linrel}

In this section, we review the necessary theory to reason diagrammatically about linear  and affine subspaces.  This will allow us to be able to treat stabilizer tableaux in a compositional manner using string diagrams in following sections.  For reference to literature on stabilizer tableaux: see \cite{aaronson} for the ``ordinary'' treatment of  stabilizer tableaux in the qubit setting and \cite{niel} for a generalization to all finite qudit dimensions.

\begin{lemma}
Given a field $k$, the prop of matrices over $k$ under the direct sum, $\Mat_k$, is presented by the following generators and equations, for $a,b \in k$:
$$
\begin{tikzpicture}
	\begin{pgfonlayer}{nodelayer}
		\node [style=none] (0) at (3.5, -2.75) {};
		\node [style=Z] (1) at (3.5, -2.25) {};
		\node [style=none] (2) at (3, -0.75) {};
		\node [style=Z] (3) at (3.5, -2.25) {};
		\node [style=none] (4) at (3.75, -1.5) {};
		\node [style=Z] (6) at (3.75, -1.5) {};
		\node [style=none] (7) at (3.5, -0.75) {};
		\node [style=Z] (8) at (3.75, -1.5) {};
		\node [style=none] (9) at (4, -0.75) {};
	\end{pgfonlayer}
	\begin{pgfonlayer}{edgelayer}
		\draw (0.center) to (1);
		\draw [in=124, out=-90] (2.center) to (3);
		\draw [in=-90, out=56] (3) to (4.center);
		\draw [in=124, out=-90] (7.center) to (8);
		\draw [in=-90, out=56] (8) to (9.center);
	\end{pgfonlayer}
\end{tikzpicture}
=
\begin{tikzpicture}[xscale=-1]
	\begin{pgfonlayer}{nodelayer}
		\node [style=none] (0) at (3.5, -2.75) {};
		\node [style=Z] (1) at (3.5, -2.25) {};
		\node [style=none] (2) at (3, -0.75) {};
		\node [style=Z] (3) at (3.5, -2.25) {};
		\node [style=none] (4) at (3.75, -1.5) {};
		\node [style=Z] (6) at (3.75, -1.5) {};
		\node [style=none] (7) at (3.5, -0.75) {};
		\node [style=Z] (8) at (3.75, -1.5) {};
		\node [style=none] (9) at (4, -0.75) {};
	\end{pgfonlayer}
	\begin{pgfonlayer}{edgelayer}
		\draw (0.center) to (1);
		\draw [in=124, out=-90] (2.center) to (3);
		\draw [in=-90, out=56] (3) to (4.center);
		\draw [in=124, out=-90] (7.center) to (8);
		\draw [in=-90, out=56] (8) to (9.center);
	\end{pgfonlayer}
\end{tikzpicture},
\hspace*{.2cm}
\begin{tikzpicture}[yscale=-1]
	\begin{pgfonlayer}{nodelayer}
		\node [style=none] (12) at (3.5, -0.75) {};
		\node [style=Z] (15) at (3.5, -1.25) {};
		\node [style=none] (18) at (3.25, -2) {};
		\node [style=Z] (19) at (3.5, -1.25) {};
		\node [style=none] (20) at (3.75, -2) {};
	\end{pgfonlayer}
	\begin{pgfonlayer}{edgelayer}
		\draw (12.center) to (15);
		\draw [in=-124, out=90] (18.center) to (19);
		\draw [in=90, out=-56] (19) to (20.center);
	\end{pgfonlayer}
\end{tikzpicture}
=
\begin{tikzpicture}[yscale=-1]
	\begin{pgfonlayer}{nodelayer}
		\node [style=none] (12) at (3.5, -0.75) {};
		\node [style=Z] (15) at (3.5, -1.25) {};
		\node [style=none] (18) at (3.25, -1.75) {};
		\node [style=Z] (19) at (3.5, -1.25) {};
		\node [style=none] (20) at (3.75, -1.75) {};
		\node [style=none] (21) at (3.75, -2.5) {};
		\node [style=none] (22) at (3.25, -2.5) {};
	\end{pgfonlayer}
	\begin{pgfonlayer}{edgelayer}
		\draw (12.center) to (15);
		\draw [in=-124, out=90] (18.center) to (19);
		\draw [in=90, out=-56] (19) to (20.center);
		\draw [in=270, out=90] (22.center) to (20.center);
		\draw [in=270, out=90] (21.center) to (18.center);
	\end{pgfonlayer}
\end{tikzpicture}
\hspace*{.2cm}
\begin{tikzpicture}
	\begin{pgfonlayer}{nodelayer}
		\node [style=none] (0) at (3.5, -2.25) {};
		\node [style=Z] (1) at (3.5, -1.75) {};
		\node [style=none] (2) at (3.75, -0.75) {};
		\node [style=Z] (3) at (3.5, -1.75) {};
		\node [style=none] (4) at (3.25, -1) {};
		\node [style=Z] (6) at (3.25, -1) {};
	\end{pgfonlayer}
	\begin{pgfonlayer}{edgelayer}
		\draw (0.center) to (1);
		\draw [in=56, out=-90] (2.center) to (3);
		\draw [in=-90, out=124] (3) to (4.center);
	\end{pgfonlayer}
\end{tikzpicture}
=
\begin{tikzpicture}
	\begin{pgfonlayer}{nodelayer}
		\node [style=none] (0) at (3.75, -2.25) {};
		\node [style=none] (2) at (3.75, -0.75) {};
	\end{pgfonlayer}
	\begin{pgfonlayer}{edgelayer}
		\draw (0.center) to (2.center);
	\end{pgfonlayer}
\end{tikzpicture},
\hspace*{.2cm}
\begin{tikzpicture}[yscale=-1]
	\begin{pgfonlayer}{nodelayer}
		\node [style=none] (0) at (3.5, -2.75) {};
		\node [style=X] (1) at (3.5, -2.25) {};
		\node [style=none] (2) at (3, -0.75) {};
		\node [style=X] (3) at (3.5, -2.25) {};
		\node [style=none] (4) at (3.75, -1.5) {};
		\node [style=X] (6) at (3.75, -1.5) {};
		\node [style=none] (7) at (3.5, -0.75) {};
		\node [style=X] (8) at (3.75, -1.5) {};
		\node [style=none] (9) at (4, -0.75) {};
	\end{pgfonlayer}
	\begin{pgfonlayer}{edgelayer}
		\draw (0.center) to (1);
		\draw [in=124, out=-90] (2.center) to (3);
		\draw [in=-90, out=56] (3) to (4.center);
		\draw [in=124, out=-90] (7.center) to (8);
		\draw [in=-90, out=56] (8) to (9.center);
	\end{pgfonlayer}
\end{tikzpicture}
=
\begin{tikzpicture}[xscale=-1,yscale=-1]
	\begin{pgfonlayer}{nodelayer}
		\node [style=none] (0) at (3.5, -2.75) {};
		\node [style=X] (1) at (3.5, -2.25) {};
		\node [style=none] (2) at (3, -0.75) {};
		\node [style=X] (3) at (3.5, -2.25) {};
		\node [style=none] (4) at (3.75, -1.5) {};
		\node [style=X] (6) at (3.75, -1.5) {};
		\node [style=none] (7) at (3.5, -0.75) {};
		\node [style=X] (8) at (3.75, -1.5) {};
		\node [style=none] (9) at (4, -0.75) {};
	\end{pgfonlayer}
	\begin{pgfonlayer}{edgelayer}
		\draw (0.center) to (1);
		\draw [in=124, out=-90] (2.center) to (3);
		\draw [in=-90, out=56] (3) to (4.center);
		\draw [in=124, out=-90] (7.center) to (8);
		\draw [in=-90, out=56] (8) to (9.center);
	\end{pgfonlayer}
\end{tikzpicture},
\hspace*{.2cm}
\begin{tikzpicture}
	\begin{pgfonlayer}{nodelayer}
		\node [style=none] (12) at (3.5, -0.75) {};
		\node [style=X] (15) at (3.5, -1.25) {};
		\node [style=none] (18) at (3.25, -2) {};
		\node [style=X] (19) at (3.5, -1.25) {};
		\node [style=none] (20) at (3.75, -2) {};
	\end{pgfonlayer}
	\begin{pgfonlayer}{edgelayer}
		\draw (12.center) to (15);
		\draw [in=-124, out=90] (18.center) to (19);
		\draw [in=90, out=-56] (19) to (20.center);
	\end{pgfonlayer}
\end{tikzpicture}
=
\begin{tikzpicture}
	\begin{pgfonlayer}{nodelayer}
		\node [style=none] (12) at (3.5, -0.75) {};
		\node [style=X] (15) at (3.5, -1.25) {};
		\node [style=none] (18) at (3.25, -1.75) {};
		\node [style=X] (19) at (3.5, -1.25) {};
		\node [style=none] (20) at (3.75, -1.75) {};
		\node [style=none] (21) at (3.75, -2.5) {};
		\node [style=none] (22) at (3.25, -2.5) {};
	\end{pgfonlayer}
	\begin{pgfonlayer}{edgelayer}
		\draw (12.center) to (15);
		\draw [in=-124, out=90] (18.center) to (19);
		\draw [in=90, out=-56] (19) to (20.center);
		\draw [in=270, out=90] (22.center) to (20.center);
		\draw [in=270, out=90] (21.center) to (18.center);
	\end{pgfonlayer}
\end{tikzpicture}
\hspace*{.2cm}
\begin{tikzpicture}[yscale=-1]
	\begin{pgfonlayer}{nodelayer}
		\node [style=none] (0) at (3.5, -2.25) {};
		\node [style=X] (1) at (3.5, -1.75) {};
		\node [style=none] (2) at (3.75, -0.75) {};
		\node [style=X] (3) at (3.5, -1.75) {};
		\node [style=none] (4) at (3.25, -1) {};
		\node [style=X] (6) at (3.25, -1) {};
	\end{pgfonlayer}
	\begin{pgfonlayer}{edgelayer}
		\draw (0.center) to (1);
		\draw [in=56, out=-90] (2.center) to (3);
		\draw [in=-90, out=124] (3) to (4.center);
	\end{pgfonlayer}
\end{tikzpicture}
=
\begin{tikzpicture}[yscale=-1]
	\begin{pgfonlayer}{nodelayer}
		\node [style=none] (0) at (3.75, -2.25) {};
		\node [style=none] (2) at (3.75, -0.75) {};
	\end{pgfonlayer}
	\begin{pgfonlayer}{edgelayer}
		\draw (0.center) to (2.center);
	\end{pgfonlayer}
\end{tikzpicture}
$$

$$
\begin{tikzpicture}
	\begin{pgfonlayer}{nodelayer}
		\node [style=scalar] (0) at (6, -0.25) {$b$};
		\node [style=scalar] (1) at (5, -0.25) {$a$};
		\node [style=Z] (2) at (5.5, -1) {};
		\node [style=X] (3) at (5.5, 0.5) {};
		\node [style=none] (4) at (5.5, 1) {};
		\node [style=none] (5) at (5.5, -1.5) {};
	\end{pgfonlayer}
	\begin{pgfonlayer}{edgelayer}
		\draw (5.center) to (2);
		\draw [in=-90, out=150] (2) to (1);
		\draw [in=-150, out=90] (1) to (3);
		\draw (3) to (4.center);
		\draw [in=90, out=-30] (3) to (0);
		\draw [in=30, out=-90] (0) to (2);
	\end{pgfonlayer}
\end{tikzpicture}
=
\begin{tikzpicture}
	\begin{pgfonlayer}{nodelayer}
		\node [style=scalar] (1) at (5.5, -0.25) {$a+b$};
		\node [style=none] (4) at (5.5, 1) {};
		\node [style=none] (5) at (5.5, -1.5) {};
	\end{pgfonlayer}
	\begin{pgfonlayer}{edgelayer}
		\draw (5.center) to (1);
		\draw (1) to (4.center);
	\end{pgfonlayer}
\end{tikzpicture},
\hspace*{.2cm}
\begin{tikzpicture}
	\begin{pgfonlayer}{nodelayer}
		\node [style=scalar] (1) at (5.5, -0.25) {$0$};
		\node [style=none] (4) at (5.5, 1) {};
		\node [style=none] (5) at (5.5, -1.5) {};
	\end{pgfonlayer}
	\begin{pgfonlayer}{edgelayer}
		\draw (5.center) to (1);
		\draw (1) to (4.center);
	\end{pgfonlayer}
\end{tikzpicture}
:=
\begin{tikzpicture}
	\begin{pgfonlayer}{nodelayer}
		\node [style=Z] (0) at (5.5, -0.75) {};
		\node [style=X] (1) at (5.5, 0.25) {};
		\node [style=none] (2) at (5.5, 1) {};
		\node [style=none] (3) at (5.5, -1.5) {};
	\end{pgfonlayer}
	\begin{pgfonlayer}{edgelayer}
		\draw (3.center) to (0);
		\draw (1) to (2.center);
	\end{pgfonlayer}
\end{tikzpicture},
\hspace*{.2cm}
\begin{tikzpicture}
	\begin{pgfonlayer}{nodelayer}
		\node [style=none] (4) at (0.5, -2) {};
		\node [style=Z] (5) at (1, -1.25) {};
		\node [style=none] (6) at (1.5, -2) {};
		\node [style=X] (7) at (1, -0.75) {};
		\node [style=none] (8) at (0.5, 0) {};
		\node [style=none] (9) at (1.5, 0) {};
	\end{pgfonlayer}
	\begin{pgfonlayer}{edgelayer}
		\draw [in=-124, out=90] (4.center) to (5);
		\draw [in=90, out=-56] (5) to (6.center);
		\draw (5) to (7);
		\draw [in=-90, out=45] (7) to (9.center);
		\draw [in=-90, out=135] (7) to (8.center);
	\end{pgfonlayer}
\end{tikzpicture}
=
\begin{tikzpicture}
	\begin{pgfonlayer}{nodelayer}
		\node [style=X] (13) at (2.5, -1.5) {};
		\node [style=X] (16) at (3.5, -1.5) {};
		\node [style=Z] (20) at (2.5, -0.5) {};
		\node [style=Z] (23) at (3.5, -0.5) {};
		\node [style=none] (24) at (2.5, 0) {};
		\node [style=none] (25) at (3.5, 0) {};
		\node [style=none] (26) at (2.5, -2) {};
		\node [style=none] (27) at (3.5, -2) {};
	\end{pgfonlayer}
	\begin{pgfonlayer}{edgelayer}
		\draw (16) to (20);
		\draw (23) to (13);
		\draw [bend right=45, looseness=1.25] (16) to (23);
		\draw [bend right=45, looseness=1.25] (20) to (13);
		\draw (26.center) to (13);
		\draw (27.center) to (16);
		\draw (23) to (25.center);
		\draw (20) to (24.center);
	\end{pgfonlayer}
\end{tikzpicture},
\hspace*{.2cm}
\begin{tikzpicture}
	\begin{pgfonlayer}{nodelayer}
		\node [style=Z] (1) at (1, -1.25) {};
		\node [style=X] (3) at (1, -0.75) {};
	\end{pgfonlayer}
	\begin{pgfonlayer}{edgelayer}
		\draw (1) to (3);
	\end{pgfonlayer}
\end{tikzpicture}
=
\begin{tikzpicture}
	\begin{pgfonlayer}{nodelayer}
		\node [style=none] (0) at (2, 0) {};
		\node [style=none] (1) at (2, -1) {};
		\node [style=none] (2) at (3, -1) {};
		\node [style=none] (3) at (3, 0) {};
	\end{pgfonlayer}
	\begin{pgfonlayer}{edgelayer}
		\draw[style=dashed] (3.center) to (0.center);
		\draw[style=dashed] (0.center) to (1.center);
		\draw[style=dashed] (1.center) to (2.center);
		\draw[style=dashed] (2.center) to (3.center);
	\end{pgfonlayer}
\end{tikzpicture}
$$

$$
\begin{tikzpicture}
	\begin{pgfonlayer}{nodelayer}
		\node [style=scalar, fill=white] (6) at (6.5, 0.25) {$b$};
		\node [style=none] (7) at (6.5, 1) {};
		\node [style=none] (8) at (6.5, -1.5) {};
		\node [style=scalar, fill=white] (9) at (6.5, -0.75) {$a$};
	\end{pgfonlayer}
	\begin{pgfonlayer}{edgelayer}
		\draw (8.center) to (6);
		\draw (6) to (7.center);
	\end{pgfonlayer}
\end{tikzpicture}
=
\begin{tikzpicture}
	\begin{pgfonlayer}{nodelayer}
		\node [style=none] (7) at (6.5, 1) {};
		\node [style=none] (8) at (6.5, -1.5) {};
		\node [style=scalar, fill=white] (9) at (6.5, -0.25) {$ab$};
	\end{pgfonlayer}
	\begin{pgfonlayer}{edgelayer}
		\draw (8.center) to (7.center);
	\end{pgfonlayer}
\end{tikzpicture},
\hspace*{.2cm}
\begin{tikzpicture}
	\begin{pgfonlayer}{nodelayer}
		\node [style=none] (7) at (6.5, 1) {};
		\node [style=none] (8) at (6.5, -1.5) {};
		\node [style=scalar, fill=white] (9) at (6.5, -0.25) {$1$};
	\end{pgfonlayer}
	\begin{pgfonlayer}{edgelayer}
		\draw (8.center) to (7.center);
	\end{pgfonlayer}
\end{tikzpicture}=
\begin{tikzpicture}
	\begin{pgfonlayer}{nodelayer}
		\node [style=none] (7) at (6.5, 1) {};
		\node [style=none] (8) at (6.5, -1.5) {};
	\end{pgfonlayer}
	\begin{pgfonlayer}{edgelayer}
		\draw (8.center) to (7.center);
	\end{pgfonlayer}
\end{tikzpicture},
\hspace*{.2cm}
\begin{tikzpicture}
	\begin{pgfonlayer}{nodelayer}
		\node [style=Z] (0) at (6, 0) {};
		\node [style=none] (1) at (5.5, 0.5) {};
		\node [style=none] (2) at (6.5, 0.5) {};
		\node [style=scalar] (3) at (6, -0.75) {$a$};
		\node [style=none] (4) at (6, -1.5) {};
	\end{pgfonlayer}
	\begin{pgfonlayer}{edgelayer}
		\draw (4.center) to (3);
		\draw (3) to (0);
		\draw [in=-90, out=165] (0) to (1.center);
		\draw [in=-90, out=15] (0) to (2.center);
	\end{pgfonlayer}
\end{tikzpicture}
=
\begin{tikzpicture}
	\begin{pgfonlayer}{nodelayer}
		\node [style=Z] (5) at (8, -0.75) {};
		\node [style=none] (6) at (7.5, -0.25) {};
		\node [style=none] (7) at (8.5, -0.25) {};
		\node [style=none] (9) at (8, -1.5) {};
		\node [style=scalar,fill=white] (10) at (8.5, -0.25) {$a$};
		\node [style=scalar,fill=white] (11) at (7.5, -0.25) {$a$};
		\node [style=none] (12) at (7.5, 0.5) {};
		\node [style=none] (13) at (8.5, 0.5) {};
	\end{pgfonlayer}
	\begin{pgfonlayer}{edgelayer}
		\draw [in=-90, out=165] (5) to (6.center);
		\draw [in=-90, out=15] (5) to (7.center);
		\draw (11) to (12.center);
		\draw (10) to (13.center);
		\draw (9.center) to (5);
	\end{pgfonlayer}
\end{tikzpicture},
\hspace*{.2cm}
\begin{tikzpicture}
	\begin{pgfonlayer}{nodelayer}
		\node [style=Z] (14) at (10, 0) {};
		\node [style=scalar] (17) at (10, -0.75) {$a$};
		\node [style=none] (18) at (10, -1.5) {};
	\end{pgfonlayer}
	\begin{pgfonlayer}{edgelayer}
		\draw (18.center) to (17);
		\draw (17) to (14);
	\end{pgfonlayer}
\end{tikzpicture}
=
\begin{tikzpicture}
	\begin{pgfonlayer}{nodelayer}
		\node [style=Z] (19) at (11.25, -0.5) {};
		\node [style=none] (22) at (11.25, -1.25) {};
	\end{pgfonlayer}
	\begin{pgfonlayer}{edgelayer}
		\draw (22.center) to (19);
	\end{pgfonlayer}
\end{tikzpicture},
\hspace*{.2cm}
\begin{tikzpicture}
	\begin{pgfonlayer}{nodelayer}
		\node [style=X] (23) at (12.75, -1) {};
		\node [style=none] (24) at (12.25, -1.5) {};
		\node [style=none] (25) at (13.25, -1.5) {};
		\node [style=scalar] (26) at (12.75, -0.25) {$a$};
		\node [style=none] (27) at (12.75, 0.5) {};
	\end{pgfonlayer}
	\begin{pgfonlayer}{edgelayer}
		\draw (27.center) to (26);
		\draw (26) to (23);
		\draw [in=90, out=-165] (23) to (24.center);
		\draw [in=90, out=-15] (23) to (25.center);
	\end{pgfonlayer}
\end{tikzpicture}
=
\begin{tikzpicture}
	\begin{pgfonlayer}{nodelayer}
		\node [style=X] (28) at (14.75, -0.25) {};
		\node [style=none] (29) at (14.25, -0.75) {};
		\node [style=none] (30) at (15.25, -0.75) {};
		\node [style=none] (31) at (14.75, 0.5) {};
		\node [style=scalar, fill=white] (32) at (15.25, -0.75) {$a$};
		\node [style=scalar, fill=white] (33) at (14.25, -0.75) {$a$};
		\node [style=none] (34) at (14.25, -1.5) {};
		\node [style=none] (35) at (15.25, -1.5) {};
	\end{pgfonlayer}
	\begin{pgfonlayer}{edgelayer}
		\draw [in=90, out=-165] (28) to (29.center);
		\draw [in=90, out=-15] (28) to (30.center);
		\draw (33) to (34.center);
		\draw (32) to (35.center);
		\draw (31.center) to (28);
	\end{pgfonlayer}
\end{tikzpicture},
\hspace*{.2cm}
\begin{tikzpicture}
	\begin{pgfonlayer}{nodelayer}
		\node [style=X] (36) at (16.75, -1) {};
		\node [style=scalar] (37) at (16.75, -0.25) {$a$};
		\node [style=none] (38) at (16.75, 0.5) {};
	\end{pgfonlayer}
	\begin{pgfonlayer}{edgelayer}
		\draw (38.center) to (37);
		\draw (37) to (36);
	\end{pgfonlayer}
\end{tikzpicture}
=
\begin{tikzpicture}
	\begin{pgfonlayer}{nodelayer}
		\node [style=X] (39) at (18, -0.5) {};
		\node [style=none] (40) at (18, 0.25) {};
	\end{pgfonlayer}
	\begin{pgfonlayer}{edgelayer}
		\draw (40.center) to (39);
	\end{pgfonlayer}
\end{tikzpicture}
$$
\end{lemma}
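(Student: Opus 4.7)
The plan is to follow the standard graphical-linear-algebra strategy for proving that the ZX-style presentation is complete for $\Mat_k$ under direct sum. The proof naturally splits into soundness plus a normal-form argument for completeness. For soundness, I would interpret the Z-spider with $n$ inputs and $m$ outputs as the $m \times n$ all-ones matrix (i.e.\ the composite of the $n$-fold diagonal $k \to k^n$ with the $m$-fold codiagonal $k \to k^m$), interpret the X-spider dually, and interpret the scalar box labelled $a$ as multiplication by $a$. Then each displayed equation becomes a direct matrix calculation: the Z- and X-equations assert that each is a commutative bimonoid; the white--black tangle is the bialgebra law; the scalar equations say that $(k,+,\cdot)$ acts on the bialgebra as a semiring; and the ``$\Zdot$--$\Xdot$ wire'' equation encodes the unit/counit interaction.

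For completeness I would introduce a bipartite normal form for matrices. Given $M \in \Mat_k(m,n)$, the normal form is the diagram with an $n$-ary Z-fanout on each of the $m$ input wires, an $m$-ary X-fanin on each of the $n$ output wires, and on each of the resulting $mn$ crossing wires a scalar box labelled $M_{ij}$. Uniqueness of normal forms is trivial: two normal forms for the same matrix are literally identical diagrams. It then suffices to show:
\begin{enumerate}
\item Every generator is equivalent to a normal form (immediate from the scalar and unit/counit equations).
\item The tensor of two normal forms is the normal form of the block-diagonal matrix (trivial: no wires cross between the blocks).
\item The composition of two normal forms is the normal form of the matrix product.
\end{enumerate}

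The third step is the main obstacle, and it is where all the equations earn their keep. The key manoeuvre is to push the upper layer of Z-fanouts past the lower layer of X-fanins via the bialgebra law, iterated and combined with the coassociativity/cocommutativity of Z and associativity/commutativity of X; this transforms a ``fanin followed by fanout'' into a grid of $\ell$ parallel Z--X bialgebra crossings (where $\ell$ is the middle dimension). Along each resulting path one has a composition of scalars which collapses via $a\cdot b$-multiplication to a single scalar box, and several parallel scalars on a common wire collapse via $a+b$-addition to a single scalar box. Reading off the entries shows the resulting diagram is exactly the normal form of $M_2 M_1$. Termination of the rewriting needs a well-founded measure (e.g.\ lexicographic on the number of internal Z--X pairs and diagram size), since isolated applications of the bialgebra law can temporarily grow the diagram; organising the rewrite to proceed layer-by-layer through the composite makes termination clear and finishes the proof.
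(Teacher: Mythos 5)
The paper does not actually prove this lemma: it is recalled as a known presentation result from the graphical linear algebra literature (the same family of results as the presentation of $\LinRel_k$ cited to the interacting Hopf algebras work), so there is no in-paper argument to compare against. Your proposal is, in substance, the standard proof of that known result, and it is correct: soundness by direct interpretation of the bimonoid, bialgebra and scalar equations, and completeness by rewriting every diagram to the ``matrix'' normal form (Z-fanouts on inputs, X-fanins on outputs, a scalar box $M_{ij}$ on each crossing wire) and checking closure under tensor and composition. Two small remarks. First, in step (3) you should say explicitly that besides the bialgebra law you also need the bone law (white--black dot $=$ empty diagram) and the scalar-copying equations (a scalar pushed through a Z-copy or an X-add duplicates), since after the bialgebra shuffle these are exactly what let you slide the scalars onto the individual paths before collapsing them with the $ab$ and $a+b$ rules; your phrase ``parallel scalars on a common wire collapse via $a+b$'' is really the copy--scalars--add configuration of the stated axiom, not literal juxtaposition. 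Second, the termination measure you gesture at is unnecessary if you organise completeness as a structural induction: prove that precomposing a normal form with a single generator (or symmetry) again yields a normal form, and induct on the sequential/parallel decomposition of the second diagram; this is the usual way the literature avoids reasoning about termination of unrestricted bialgebra rewriting, and your ``layer-by-layer'' remark is essentially this point, so it would be cleaner to state the induction outright. With those clarifications the argument goes through and matches the standard proof that the paper implicitly relies on.
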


To interpret a diagram as a matrix, label the wires on the bottom and then apply the following rules inductively:
$$
\begin{tikzpicture}
	\begin{pgfonlayer}{nodelayer}
		\node [style=Z] (0) at (-0.5, 0) {};
		\node [style=none] (1) at (-1, 1) {};
		\node [style=none] (2) at (0, 1) {};
		\node [style=none] (3) at (-0.5, -1) {};
		\node [style=X] (4) at (3.25, 0) {};
		\node [style=none] (5) at (2.75, -1) {};
		\node [style=none] (6) at (3.75, -1) {};
		\node [style=none] (7) at (3.25, 1) {};
		\node [style=none] (8) at (-0.75, -0.5) {$a$};
		\node [style=none] (9) at (-1.25, 0.5) {$a$};
		\node [style=none] (10) at (0.25, 0.5) {$a$};
		\node [style=none] (11) at (2.5, -0.5) {$a$};
		\node [style=none] (12) at (4, -0.5) {$b$};
		\node [style=none] (13) at (2.75, 0.75) {$a+b$};
		\node [style=Z] (14) at (1.25, 0) {};
		\node [style=none] (15) at (1.25, -1) {};
		\node [style=none] (16) at (1, -0.5) {$a$};
		\node [style=X] (17) at (5.5, 0) {};
		\node [style=none] (20) at (5.5, 1) {};
		\node [style=none] (23) at (5, 0.75) {$0$};
		\node [style=none] (24) at (0, 0) {,};
		\node [style=none] (25) at (1.75, 0) {,};
		\node [style=none] (26) at (4.25, 0) {,};
		\node [style=none] (27) at (6, 0) {,};
		\node [style=none] (29) at (7, 1) {};
		\node [style=none] (30) at (7, -1) {};
		\node [style=scalar] (31) at (7, 0) {$b$};
		\node [style=none] (32) at (6.75, -0.5) {$a$};
		\node [style=none] (33) at (6.5, 0.5) {$ab$};
	\end{pgfonlayer}
	\begin{pgfonlayer}{edgelayer}
		\draw [in=150, out=-90] (1.center) to (0);
		\draw [in=-90, out=30] (0) to (2.center);
		\draw (0) to (3.center);
		\draw (7.center) to (4);
		\draw [in=90, out=-30] (4) to (6.center);
		\draw [in=90, out=-150] (4) to (5.center);
		\draw (14) to (15.center);
		\draw (20.center) to (17);
		\draw (30.center) to (31);
		\draw (31) to (29.center);
	\end{pgfonlayer}
\end{tikzpicture}
$$

\begin{example}
$$
\begin{tikzpicture}
	\begin{pgfonlayer}{nodelayer}
		\node [style=Z] (0) at (42.75, 0) {};
		\node [style=X] (1) at (43.75, 0.75) {};
		\node [style=none] (2) at (42.75, -1) {};
		\node [style=none] (3) at (44, -0.25) {};
		\node [style=none] (4) at (42.5, 1) {};
		\node [style=none] (5) at (43.75, 1.5) {};
		\node [style=X] (6) at (44, -1) {};
		\node [style=none] (7) at (42.25, -0.95) {$a$};
		\node [style=none] (9) at (44.75, 1.1) {$a+0=a$};
		\node [style=scalar,fill=white] (10) at (42.5, 1) {$b$};
		\node [style=none] (11) at (42.5, 1.75) {};
		\node [style=none] (12) at (41.75, 1.3) {$ab$};
		\node [style=scalar,fill=white] (13) at (44, -0.25) {$c$};
		\node [style=none] (14) at (44.75, -0.65) {$0$};
		\node [style=none] (15) at (44.75, 0.35) {$0c=0$};
	\end{pgfonlayer}
	\begin{pgfonlayer}{edgelayer}
		\draw (2.center) to (0);
		\draw [in=285, out=90] (3.center) to (1);
		\draw (1) to (5.center);
		\draw (0) to (1);
		\draw [in=-90, out=105] (0) to (4.center);
		\draw (10) to (11.center);
		\draw (6) to (13);
	\end{pgfonlayer}
\end{tikzpicture}
\Rightarrow
\left\llbracket\
\begin{tikzpicture}
	\begin{pgfonlayer}{nodelayer}
		\node [style=Z] (0) at (42.75, 0) {};
		\node [style=X] (1) at (43.75, 0.75) {};
		\node [style=none] (2) at (42.75, -1) {};
		\node [style=none] (3) at (44, -0.25) {};
		\node [style=none] (4) at (42.5, 1) {};
		\node [style=none] (5) at (43.75, 1.5) {};
		\node [style=X] (6) at (44, -1) {};
		\node [style=scalar,fill=white] (10) at (42.5, 1) {$b$};
		\node [style=none] (11) at (42.5, 1.75) {};
		\node [style=scalar,fill=white] (13) at (44, -0.25) {$c$};
	\end{pgfonlayer}
	\begin{pgfonlayer}{edgelayer}
		\draw (2.center) to (0);
		\draw [in=285, out=90] (3.center) to (1);
		\draw (1) to (5.center);
		\draw (0) to (1);
		\draw [in=-90, out=105] (0) to (4.center);
		\draw (10) to (11.center);
		\draw (6) to (13);
	\end{pgfonlayer}
\end{tikzpicture}
\ \right\rrbracket
=
\begin{pmatrix}
b\\
1
\end{pmatrix}
$$
\end{example}

There is a very much underappreciated category whose maps are not matrices, but rather, linear subspaces; the exposition of which is crucial for this paper:

\begin{definition}
Given a field $k$, the $\dag$-compact closed prop of linear relations over $k$, $\LinRel_{k}$ has:

\begin{description}
\item[\ \ Objects:] Natural numbers.

\item[\ \ Maps:] A linear relation $n\to m$ is a linear subspace of $k^n \oplus k^m$.

\item[\ \ Composition:] Relational composition, so that for $R \subseteq k^n \oplus k^m$  and $S \subseteq k^m \oplus k^\ell$:
$$
SR := \{  (x,z) \in k^{n} \oplus k^{\ell} : \exists y \in k^{m}, (x,y) \in R \wedge (y,z) \in S \} \subseteq k^n \oplus k^\ell
$$ 

\item[\ \ Tensor product:] Direct sum, so that for $R \subseteq k^n \oplus k^m$ and $S \subseteq k^\ell \oplus k^q$:

$$R\oplus S : =
\left\{
\left(
\begin{pmatrix}
a_1\\a_2
\end{pmatrix},
\begin{pmatrix}
b_1\\b_2
\end{pmatrix}
:
\forall (a_1,b_1) \in R, (a_2,b_2) \in S
\right)
\right\} \subseteq k^{n+\ell}\oplus k^{m+q}
$$

\item[\ \ Dagger:] Relational converse, so that for $R \subseteq k^{n}\oplus k^m$:

$$
R^T := \{ (b,a) : \forall (a,b) \in R \} \subseteq k^{m} \oplus k^n
$$
\end{description}
\end{definition}

\begin{lemma}[\cite{ihpub}]
Given a field $k$, $\LinRel_{k}$ is generated by the generators and equations of the presentation of $\Mat_k$ as well as those of $\Mat_k^{\op}$ (drawn as the vertically flipped generators of $\Mat_k$) modulo the equations for all $a \in k$, $a\neq 0$:

$$
\begin{tikzpicture}
	\begin{pgfonlayer}{nodelayer}
		\node [style=none] (0) at (1.5, -0.5) {};
		\node [style=none] (1) at (0.5, -0.5) {};
		\node [style=none] (2) at (1, -0.5) {$\cdots$};
		\node [style=none] (3) at (0.5, -2.75) {};
		\node [style=Z] (4) at (1, -1.25) {};
		\node [style=none] (5) at (2, -0.5) {};
		\node [style=none] (6) at (1.5, -2.75) {$\cdots$};
		\node [style=none] (7) at (1, -2.75) {};
		\node [style=Z] (8) at (1.5, -2) {};
		\node [style=none] (9) at (2, -2.75) {};
		\node [style=none] (10) at (1.25, -1.5) {\reflectbox{$\ddots$}};
	\end{pgfonlayer}
	\begin{pgfonlayer}{edgelayer}
		\draw [in=-124, out=90] (3.center) to (4);
		\draw [in=-90, out=56] (4) to (0.center);
		\draw [in=124, out=-90] (1.center) to (4);
		\draw [in=-124, out=90] (7.center) to (8);
		\draw [in=90, out=-56] (8) to (9.center);
		\draw [in=-90, out=56] (8) to (5.center);
		\draw [bend left=45, looseness=1.25] (8) to (4);
		\draw [bend left=45, looseness=1.25] (4) to (8);
	\end{pgfonlayer}
\end{tikzpicture}
=
\begin{tikzpicture}
	\begin{pgfonlayer}{nodelayer}
		\node [style=none] (11) at (4, -0.5) {};
		\node [style=none] (12) at (3, -0.5) {};
		\node [style=none] (13) at (3.5, -0.5) {$\cdots$};
		\node [style=none] (14) at (2.5, -2) {};
		\node [style=Z] (15) at (3.5, -1.25) {};
		\node [style=none] (16) at (4.5, -0.5) {};
		\node [style=none] (17) at (3.5, -2) {$\cdots$};
		\node [style=none] (18) at (3, -2) {};
		\node [style=Z] (19) at (3.5, -1.25) {};
		\node [style=none] (20) at (4, -2) {};
	\end{pgfonlayer}
	\begin{pgfonlayer}{edgelayer}
		\draw [in=-150, out=90] (14.center) to (15);
		\draw [in=-90, out=56] (15) to (11.center);
		\draw [in=124, out=-90] (12.center) to (15);
		\draw [in=-124, out=90] (18.center) to (19);
		\draw [in=90, out=-56] (19) to (20.center);
		\draw [in=-90, out=30] (19) to (16.center);
	\end{pgfonlayer}
\end{tikzpicture},
\hspace*{1cm}
\begin{tikzpicture}
	\begin{pgfonlayer}{nodelayer}
		\node [style=none] (0) at (1.5, -0.5) {};
		\node [style=none] (1) at (0.5, -0.5) {};
		\node [style=none] (2) at (1, -0.5) {$\cdots$};
		\node [style=none] (3) at (0.5, -2.75) {};
		\node [style=X] (4) at (1, -1.25) {};
		\node [style=none] (5) at (2, -0.5) {};
		\node [style=none] (6) at (1.5, -2.75) {$\cdots$};
		\node [style=none] (7) at (1, -2.75) {};
		\node [style=X] (8) at (1.5, -2) {};
		\node [style=none] (9) at (2, -2.75) {};
		\node [style=none] (10) at (1.25, -1.5) {\reflectbox{$\ddots$}};
	\end{pgfonlayer}
	\begin{pgfonlayer}{edgelayer}
		\draw [in=-124, out=90] (3.center) to (4);
		\draw [in=-90, out=56] (4) to (0.center);
		\draw [in=124, out=-90] (1.center) to (4);
		\draw [in=-124, out=90] (7.center) to (8);
		\draw [in=90, out=-56] (8) to (9.center);
		\draw [in=-90, out=56] (8) to (5.center);
		\draw [bend left=45, looseness=1.25] (8) to (4);
		\draw [bend left=45, looseness=1.25] (4) to (8);
	\end{pgfonlayer}
\end{tikzpicture}
=
\begin{tikzpicture}
	\begin{pgfonlayer}{nodelayer}
		\node [style=none] (11) at (4, -0.5) {};
		\node [style=none] (12) at (3, -0.5) {};
		\node [style=none] (13) at (3.5, -0.5) {$\cdots$};
		\node [style=none] (14) at (2.5, -2) {};
		\node [style=X] (15) at (3.5, -1.25) {};
		\node [style=none] (16) at (4.5, -0.5) {};
		\node [style=none] (17) at (3.5, -2) {$\cdots$};
		\node [style=none] (18) at (3, -2) {};
		\node [style=X] (19) at (3.5, -1.25) {};
		\node [style=none] (20) at (4, -2) {};
	\end{pgfonlayer}
	\begin{pgfonlayer}{edgelayer}
		\draw [in=-150, out=90] (14.center) to (15);
		\draw [in=-90, out=56] (15) to (11.center);
		\draw [in=124, out=-90] (12.center) to (15);
		\draw [in=-124, out=90] (18.center) to (19);
		\draw [in=90, out=-56] (19) to (20.center);
		\draw [in=-90, out=30] (19) to (16.center);
	\end{pgfonlayer}
\end{tikzpicture},
\hspace*{1cm}
\begin{tikzpicture}
	\begin{pgfonlayer}{nodelayer}
		\node [style=Z] (0) at (3.75, -1) {};
	\end{pgfonlayer}
\end{tikzpicture}
=
\begin{tikzpicture}
	\begin{pgfonlayer}{nodelayer}
		\node [style=X] (0) at (3.75, -1) {};
	\end{pgfonlayer}
\end{tikzpicture}
=
\begin{tikzpicture}
	\begin{pgfonlayer}{nodelayer}
		\node [style=none] (0) at (2, 0) {};
		\node [style=none] (1) at (2, -1) {};
		\node [style=none] (2) at (3, -1) {};
		\node [style=none] (3) at (3, 0) {};
	\end{pgfonlayer}
	\begin{pgfonlayer}{edgelayer}
		\draw[style=dashed] (3.center) to (0.center);
		\draw[style=dashed] (0.center) to (1.center);
		\draw[style=dashed] (1.center) to (2.center);
		\draw[style=dashed] (2.center) to (3.center);
	\end{pgfonlayer}
\end{tikzpicture},
\hspace*{1cm}
\begin{tikzpicture}
	\begin{pgfonlayer}{nodelayer}
		\node [style=none] (3) at (17, 1.5) {};
		\node [style=none] (4) at (17, -0.75) {};
		\node [style=scalarop] (5) at (17, 0.75) {$a$};
		\node [style=scalar] (6) at (17, 0) {$a$};
	\end{pgfonlayer}
	\begin{pgfonlayer}{edgelayer}
		\draw (4.center) to (6);
		\draw (6) to (5);
		\draw (5) to (3.center);
	\end{pgfonlayer}
\end{tikzpicture}
=
\begin{tikzpicture}
	\begin{pgfonlayer}{nodelayer}
		\node [style=none] (3) at (17, 1.5) {};
		\node [style=none] (4) at (17, -0.75) {};
		\node [style=scalar] (5) at (17, 0.75) {$a$};
		\node [style=scalarop] (6) at (17, 0) {$a$};
	\end{pgfonlayer}
	\begin{pgfonlayer}{edgelayer}
		\draw (4.center) to (6);
		\draw (6) to (5);
		\draw (5) to (3.center);
	\end{pgfonlayer}
\end{tikzpicture}
=
\begin{tikzpicture}
	\begin{pgfonlayer}{nodelayer}
		\node [style=none] (3) at (17, 1.5) {};
		\node [style=none] (4) at (17, -0.75) {};
	\end{pgfonlayer}
	\begin{pgfonlayer}{edgelayer}
		\draw (4.center) to (3.center);
	\end{pgfonlayer}
\end{tikzpicture}
$$
\end{lemma}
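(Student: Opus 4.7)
The proof proceeds in two halves: soundness (the listed equations hold in $\LinRel_k$) and completeness (any two diagrams denoting the same linear relation can be equated using the given axioms). The plan follows the strategy of Bonchi-Sobocinski-Zanasi (\cite{ihpub}), adapted to the notation used here.

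\textbf{Soundness} is a routine, finite check. For each listed equation I would interpret both sides as subspaces of $k^n \oplus k^m$ using the inductive rule given just after the previous lemma, extended by interpreting the vertically flipped generators as the converse relations. The Frobenius-type equations (the first two diagrams on the final line, one for $Z$ and one for $X$) correspond to the fact that the diagonal subspace $\{(x,x) : x \in k\} \subseteq k \oplus k$ together with the codiagonal relation give a special commutative Frobenius structure on $k$. The scalar equation $\scalarop{a}\circ\scalar{a} = \mathrm{id}$ (for $a \neq 0$) expresses that in $\LinRel_k$ the graph of multiplication-by-$a$ composed with its converse is the identity, which requires $a$ to be invertible.

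\textbf{Completeness} is the substantive part. The strategy is to prove a span/cospan normal form theorem: every morphism $R : n \to m$ of the syntactic prop can be rewritten, using the given equations, into the form $M^T \circ N$ (or dually $N' \circ (M')^T$) where $M : p \to n$ and $N : p \to m$ are matrix diagrams (using only generators from $\Mat_k$). The proof of this normal form proceeds by structural induction on the diagram: the Frobenius equations let one push $\Mat_k^{\op}$-generators past $\Mat_k$-generators to one side, while the bialgebra equations (which already appear inside the presentation of $\Mat_k$ itself via the $Z$/$X$ interactions) let one recombine the resulting spans. The invertible-scalar equation is needed precisely to clear denominators when one has to divide through by a pivot in this sorting procedure.

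Once the span normal form is established, completeness reduces to proving that if two spans $k^n \xleftarrow{M_1} k^p \xrightarrow{N_1} k^m$ and $k^n \xleftarrow{M_2} k^q \xrightarrow{N_2} k^m$ have the same image subspace in $k^n \oplus k^m$, then they can be related by the equations. This is handled by reducing both to a canonical span where the middle object has minimal dimension (i.e., equals the dimension of the relation), using the invertible scalar equation and the axioms of $\Mat_k$ to perform Gaussian elimination diagrammatically. Uniqueness of this canonical span up to change of basis on the middle object then completes the argument, since changes of basis are exactly the invertible matrices available in $\Mat_k$.

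\textbf{The main obstacle} is the normal form / sorting step: formally proving that any interleaving of $\Mat_k$ and $\Mat_k^{\op}$ generators can be untangled into span form using only the Frobenius and scalar-invertibility equations. This is where the interaction between the two copies of the Hopf/Frobenius structure does all the work, and it is essentially the content of the ``interacting Hopf algebras'' theorem cited. Rather than redoing this calculation, I would invoke \cite{ihpub} directly for the sorting argument and focus my exposition on verifying that the equations displayed here match the axioms of that presentation up to the cosmetic relabelling of generators (white/black spiders for $Z$/$X$ nodes and the scalar/scalarop boxes for multiplication and its partial inverse).
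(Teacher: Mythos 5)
Your proposal is correct and follows essentially the same route as the paper, which offers no independent proof of this lemma and simply cites the interacting Hopf algebras presentation of \cite{ihpub}; your soundness check together with the span normal-form and Gaussian-elimination completeness sketch is exactly the strategy of that source, to which you also defer for the sorting step. The only remaining work is the dictionary between the spiders and (co)scalar boxes used here and the generators of \cite{ihpub}, which you already identify.
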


To interpret a string diagram as a linear subspaces, we must do the same as before; this time labeling the bottom and top wires and generating equations when the labels meet.  These equations carve out the linear subspace:

\begin{example}
$$
\begin{tikzpicture}
	\begin{pgfonlayer}{nodelayer}
		\node [style=Z] (0) at (42.75, 0) {};
		\node [style=X] (1) at (43.75, 0.75) {};
		\node [style=none] (2) at (42.25, -0.75) {};
		\node [style=none] (3) at (43.25, -0.75) {};
		\node [style=none] (4) at (44.25, -0.75) {};
		\node [style=none] (5) at (42.25, 1.5) {};
		\node [style=none] (6) at (43.25, 1.5) {};
		\node [style=none] (7) at (44.25, 1.5) {};
		\node [style=none] (8) at (42.25, 2.25) {};
		\node [style=none] (9) at (43.25, 2.25) {};
		\node [style=none] (10) at (44.25, 2.25) {};
		\node [style=none] (11) at (42.25, -1.5) {};
		\node [style=none] (12) at (43.25, -1.5) {};
		\node [style=none] (13) at (44.25, -1.5) {};
		\node [style=none] (14) at (42, -1) {$a_1$};
		\node [style=none] (15) at (43, -1) {$a_2$};
		\node [style=none] (16) at (44, -1) {$a_3$};
		\node [style=none] (17) at (42, 1.75) {$b_1$};
		\node [style=none] (18) at (43, 1.75) {$b_2$};
		\node [style=none] (19) at (44, 1.75) {$b_3$};
		\node [style=none] (20) at (41.5, 0.25) {$a_1=a_2=b_1$};
		\node [style=none] (21) at (45.75, 0.25) {$a_1+a_3=b_2+b_3$};
	\end{pgfonlayer}
	\begin{pgfonlayer}{edgelayer}
		\draw [in=-135, out=90] (2.center) to (0);
		\draw [in=90, out=-45] (0) to (3.center);
		\draw [in=285, out=90] (4.center) to (1);
		\draw [in=-90, out=135] (1) to (6.center);
		\draw [in=-90, out=45] (1) to (7.center);
		\draw (0) to (1);
		\draw [in=-90, out=105] (0) to (5.center);
		\draw (5.center) to (8.center);
		\draw (6.center) to (9.center);
		\draw (7.center) to (10.center);
		\draw (13.center) to (4.center);
		\draw (12.center) to (3.center);
		\draw (11.center) to (2.center);
	\end{pgfonlayer}
\end{tikzpicture}
\Rightarrow
\left\llbracket
\begin{tikzpicture}
	\begin{pgfonlayer}{nodelayer}
		\node [style=Z] (0) at (42.75, 0.25) {};
		\node [style=X] (1) at (43.25, 0.75) {};
		\node [style=none] (2) at (42.5, -0.25) {};
		\node [style=none] (3) at (43, -0.25) {};
		\node [style=none] (4) at (43.5, -0.25) {};
		\node [style=none] (5) at (42.5, 1.25) {};
		\node [style=none] (6) at (43, 1.25) {};
		\node [style=none] (7) at (43.5, 1.25) {};
	\end{pgfonlayer}
	\begin{pgfonlayer}{edgelayer}
		\draw [in=-135, out=90] (2.center) to (0);
		\draw [in=90, out=-45] (0) to (3.center);
		\draw [in=285, out=90] (4.center) to (1);
		\draw [in=-90, out=135] (1) to (6.center);
		\draw [in=-90, out=45] (1) to (7.center);
		\draw (0) to (1);
		\draw [in=-90, out=105] (0) to (5.center);
	\end{pgfonlayer}
\end{tikzpicture}
\right\rrbracket
=
\left\{
\left(
\begin{pmatrix}
           a_{1} \\
           a_{2} \\
           a_{3}
\end{pmatrix}
,
\begin{pmatrix}
           b_{1} \\
           b_{2} \\
           b_{3}
\end{pmatrix}
\right)
:
a_1=a_2=b_1\wedge
a_1+a_3 = b_2+b_3
\right\}
$$
\end{example}

The astute reader may recognize this structure from another setting:

\begin{definition}
Given fixed qudit dimension $d$, the phase-free ZX-calculus is, modulo scalars, generated by two unlabeled spiders, interpreted as the following complex-valued matrices:

$$
\left\llbracket\ 
\begin{tikzpicture}
	\begin{pgfonlayer}{nodelayer}
		\node [style=none] (0) at (4, -0.5) {};
		\node [style=none] (1) at (3, -0.5) {};
		\node [style=none] (2) at (3.5, -0.75) {$\cdots$};
		\node [style=Z] (4) at (3.5, -1.25) {};
		\node [style=none] (6) at (3.5, -1.75) {$\cdots$};
		\node [style=none] (7) at (3, -2) {};
		\node [style=Z] (8) at (3.5, -1.25) {};
		\node [style=none] (9) at (4, -2) {};
		\node [style=none] (10) at (3.5, -2) {$n$};
		\node [style=none] (11) at (3.5, -0.5) {$m$};
	\end{pgfonlayer}
	\begin{pgfonlayer}{edgelayer}
		\draw [in=-90, out=56] (4) to (0.center);
		\draw [in=124, out=-90] (1.center) to (4);
		\draw [in=-124, out=90] (7.center) to (8);
		\draw [in=90, out=-56] (8) to (9.center);
	\end{pgfonlayer}
\end{tikzpicture}
\ \right\rrbracket
\propto
\sum_{a=0}^{p-1} | a, \ldots, a\rangle \langle a,\ldots, a|,
\hspace*{1cm} 
\left\llbracket\ 
\begin{tikzpicture}
	\begin{pgfonlayer}{nodelayer}
		\node [style=none] (0) at (4, -0.5) {};
		\node [style=none] (1) at (3, -0.5) {};
		\node [style=none] (2) at (3.5, -0.75) {$\cdots$};
		\node [style=X] (4) at (3.5, -1.25) {};
		\node [style=none] (6) at (3.5, -1.75) {$\cdots$};
		\node [style=none] (7) at (3, -2) {};
		\node [style=none] (8) at (3.5, -1.25) {};
		\node [style=none] (9) at (4, -2) {};
		\node [style=none] (10) at (3.5, -2) {$n$};
		\node [style=none] (11) at (3.5, -0.5) {$m$};
	\end{pgfonlayer}
	\begin{pgfonlayer}{edgelayer}
		\draw [in=-90, out=56] (4) to (0.center);
		\draw [in=124, out=-90] (1.center) to (4);
		\draw [in=-124, out=90] (7.center) to (8);
		\draw [in=90, out=-56] (8) to (9.center);
	\end{pgfonlayer}
\end{tikzpicture}
\ \right\rrbracket
\propto
\sum_{\sum  a_i = \sum b_j \mod p} | b_1 ,\ldots, b_m \rangle \langle  a_1,\ldots, a_n|
$$
\end{definition}

\begin{definition}
A unitary map $f:\mathcal{H}\to \mathcal{H}$ is a {\bf stabilizer} of a state $|\phi\rangle$ on $\mathcal H$ in case $|\phi\rangle$ is a +1-eigenvector of $f$ so that $f| \phi\rangle = |\phi \rangle$.

The qudit ${\cal X}$-gate (qubit $\Not$-gate)  shifts the X-basis vectors by $a$ mod $d$:
$$
{\cal X} := \sum_{b=0}^{d-1} | b+1\rangle \langle b|
\ , \hspace*{.2cm} \text{where \ \ }
{\cal X}^a = \sum_{b=0}^{d-1} | b+a\rangle \langle b|
$$

Similarly, the  qudit ${\cal Z}$-gate  shifts the Z-basis vectors by $a$ mod $d$:
$$
{\cal Z}
:=
{\cal F}
{\cal X}
{\cal F}^\dag
=
\sum_{b=0}^{d-1}
e^{2\cdot\pi \cdot b/d} | b\rangle \langle b|
\ , \hspace*{.2cm} \text{where \ \ }
{\cal Z}^z
=
{\cal F}
{\cal X}^z
{\cal F}^\dag
=
\sum_{b=0}^{d-1}
e^{2\cdot\pi \cdot z\cdot b/d} | b\rangle \langle b|
$$

An $X$-stabilizer of an $n$-qudit state $\phi$ is a stabilizer of the form:
$$
{\cal X}^{a_0}
\otimes  {\cal X}^{a_1}
\otimes 
\cdots 
\otimes {\cal X}^{a_{n-2}}
\otimes {\cal X}^{a_{n-1}}
$$
Similarly, a $Z$-stabilizer of an $n$-qudit state $\phi$ is a stabilizer of the form:
$$
{\cal Z}^{a_0}
\otimes  {\cal Z}^{a_1} 
\otimes 
\cdots
\otimes {\cal Z}^{a_{n-2}}
\otimes {\cal Z}^{a_{n-1}}
$$

Where the Fourier transform is defined as follows:
$$\mathcal{F} := \dfrac{1}{\sqrt{d}} \sum_{j,k=0}^{d-1} e^{2\pi\cdot i \cdot j \cdot k/d} | k\rangle \langle j | $$
\end{definition}

\begin{lemma}[\cite{cole}]
Given an odd prime $p$, $\LinRel_{\F_p}$ is isomorphic to the $p$-dimensional qudit phase-free ZX-calculus modulo invertible scalars.
\end{lemma}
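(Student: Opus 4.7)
The plan is to build mutually inverse prop homomorphisms between $\LinRel_{\F_p}$ and the phase-free ZX-calculus modulo invertible scalars, leveraging the presentation of $\LinRel_{\F_p}$ given in the preceding lemma. First I would define a functor $F\colon \LinRel_{\F_p} \to \ZX_p/\!\sim$ on generators: the Z-spider of $\LinRel_{\F_p}$ maps to the ZX Z-spider and similarly for X (since the defining semantics of the phase-free ZX spiders on the computational basis is precisely the diagonal/sum-equal linear relation). The subtle case is the $\F_p$-scalar generator. For $a \in \F_p$, define $F$ of the scalar $a$ to be the ZX diagram obtained by a Z-spider with $1$ input and $a$ outputs followed by an X-spider with $a$ inputs and $1$ output. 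Reading this in the computational basis, the Z-spider copies $|x\rangle$ to $a$ wires and the X-spider enforces that the output equals $\sum_{i=1}^{a} x = a\cdot x \bmod p$, realising the intended scalar multiplication. The scalar $0$ is built by composing a Z-spider with zero outputs (delete) and an X-spider with zero inputs (initialize).

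Second, I would verify that every defining equation in the $\LinRel_{\F_p}$-presentation is satisfied in phase-free ZX modulo invertible scalars. The two commutative Frobenius algebra bundles of equations for Z and for X follow directly from spider fusion. The Z/X bialgebra equation is strong complementarity, which holds in phase-free qudit ZX modulo scalars. The scalar-arithmetic axioms ($a + b = $ pointwise sum, $a \cdot b = ab$, $1 = $ identity, $0 \cdot x = 0$) need a small argument each: for multiplicativity, composing the ZX-realisation of $a$ with that of $b$ yields an alternating Z-copy/X-add chain, which by bialgebra and spider-fusion collapses to a Z-spider with $ab$ outputs followed by an X-spider with $ab$ inputs, i.e.\ the ZX-realisation of $ab$. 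The converse/dagger axiom $\scalarop{a}\cdot\scalar{a}=\text{id}$ for $a\neq 0$ is the loop computation: composing the diagram for $a$ with its vertical flip yields a loop of $a$ parallel wires joined by a Z-spider on each end, which by spider fusion and invertibility of $a \in \F_p^{\times}$ reduces to identity on the nose after scalar removal.

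Third, I would construct the inverse functor $G\colon \ZX_p/\!\sim \to \LinRel_{\F_p}$ by sending each ZX generator to its computational-basis support subspace: the Z-spider to the diagonal $\{(a,\ldots,a)\}$ and the X-spider to $\{\sum a_i = \sum b_j\}$, as made explicit in the paper's interpretation. This is well defined modulo invertible complex scalars because scalars do not change the support pattern. A straightforward induction on string diagrams then shows that $G \circ F$ is the identity on generators of $\LinRel_{\F_p}$, and that $F \circ G$ agrees with the identity on the ZX-generators (both spiders are tautologically preserved). Composition and monoidal structure are preserved because relational composition of support subspaces matches matrix composition up to proportionality.

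The main obstacle is faithfulness of $F$, equivalently completeness of phase-free qudit ZX modulo scalars with respect to $\F_p$-linear relation semantics: one must argue that no ZX diagram beyond what $F$ produces lies in the image, and dually that every linear relation can be built from the spider fragment. Fullness can be handled constructively by mimicking the normal form used to prove the presentation of $\LinRel_{\F_p}$ (factor each linear relation through its matrix/relation factorisation using $\Mat_{\F_p}$-generators together with the transposed generators, then translate each generator via $F$). Faithfulness reduces to showing that the equations of the $\LinRel_{\F_p}$ presentation already contain the full set of relations imposed by the phase-free ZX-axioms modulo scalars; here the key non-trivial step is verifying that the Hopf/bialgebra interaction in ZX, when cut down by invertible scalars, imposes no additional identifications on the $\F_p$-scalar fragment beyond those listed. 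This final check is where the oddness of $p$ is essential: the ZX relation $\mathcal{F}^{2}\propto$ negation provides the ``multiplication by $-1$'' scalar in a compatible way, and for odd $p$ all units in $\F_p$ are then generated from Z/X-spider interactions without collapsing.
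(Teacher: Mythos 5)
Your route is genuinely different from the paper's. The paper proves this lemma semantically via stabilizers: to a phase-free diagram $D$ it assigns the set of pairs $(\vec a,\vec b)$ such that conjugating the boundary by $\mathcal X^{a_i}$, $\mathcal X^{b_j}$ fixes $D$ (its $X$-stabilizers), observes this is an $\F_p$-linear subspace, and inverts by sending a subspace to the projector onto the joint $+1$-eigenspace of the corresponding $X$-stabilizers, read as a state and bent back into a process. You instead exploit the presentation of $\LinRel_{\F_p}$: push the generators to spiders (with the scalar $a$ realised as copy-then-add on $a$ wires), check the presenting equations hold among the spider matrices modulo invertible scalars, and invert by taking computational-basis supports. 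Your version makes compositionality explicit (the paper's sketch never verifies that composition of diagrams matches relational composition of stabilizer subspaces), while the paper's version sets up the stabilizer dictionary that the rest of the paper runs on.

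Two points need repair. First, the one load-bearing semantic fact in your argument --- that the support map $G$ is functorial, i.e.\ that for composites of the (entrywise nonnegative) spider matrices the support of a product is the relational composite of the supports --- is asserted, not proved. It is true (no cancellation can occur among nonnegative entries; more precisely the witness sets are affine cosets over $\F_p$, hence nonempty witness sets have positive cardinality), but without it $G$ is not a functor, $G\circ F=\mathrm{id}$ has no content, and the whole argument collapses; spell it out. Second, your final paragraph misidentifies the remaining difficulty. The target category here is the sub-prop of matrices generated by the two unlabelled spiders modulo invertible scalars --- a semantically defined prop --- so there is no syntactic ``completeness of the ZX axioms'' left to establish: faithfulness of $F$ already follows from $G\circ F=\mathrm{id}$ in your third paragraph, and fullness is immediate because the two spiders lie in the image of $F$. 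Correspondingly, the claim that oddness of $p$ is essential at that point is unsupported; the phase-free support argument makes no use of it (the restriction in the statement comes from the ambient odd-prime stabilizer setting of the paper), and all you actually need about scalars is that closed phase-free diagrams evaluate to nonzero numbers (they count solutions of linear systems), so nothing is destroyed when quotienting by invertible scalars.
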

\begin{proof}
Given $p$-dimensional qudit phase-free ZX-diagram it is easy to see how the $X$-stabilizers form a linear subspace over $\F_p$ as follows:
$$
\left\llbracket
D
\right\rrbracket_X
:=
\left\{ 
\left(

$$
is the isomorphism $\LinRel_{k}\cong\LinRel_{k}$ which takes linear subspaces $V \subseteq k^n$ to their orthogonal complement:
$$
V \mapsto V^\perp := \{  v \in k^n: \forall w \in V ,  v^Tw = 0\}
$$
Notice that the orthogonal complement reverses the order of inclusion of linear subspaces so that $V \subseteq W \iff W^\perp \subseteq V^\perp$ and thus it is involution so that $(V^\perp)^\perp = V$.
\end{lemma}

\begin{example}
Since we know the $Z$ and $X$ stabilizers of our running example, we can check that the two subspaces which they determine are orthogonal to each other.

Take

$$v=\left(
\begin{pmatrix}
           a_{1} \\
           a_{2} \\
           a_{3}
\end{pmatrix}
,
\begin{pmatrix}
           b_{1} \\
           b_{2} \\
           b_{3}
\end{pmatrix}
\right) \in \left\llbracket
\begin{tikzpicture}
	\begin{pgfonlayer}{nodelayer}
		\node [style=Z] (0) at (42.75, 0.25) {};
		\node [style=X] (1) at (43.25, 0.75) {};
		\node [style=none] (2) at (42.5, -0.25) {};
		\node [style=none] (3) at (43, -0.25) {};
		\node [style=none] (4) at (43.5, -0.25) {};
		\node [style=none] (5) at (42.5, 1.25) {};
		\node [style=none] (6) at (43, 1.25) {};
		\node [style=none] (7) at (43.5, 1.25) {};
	\end{pgfonlayer}
	\begin{pgfonlayer}{edgelayer}
		\draw [in=-135, out=90] (2.center) to (0);
		\draw [in=90, out=-45] (0) to (3.center);
		\draw [in=285, out=90] (4.center) to (1);
		\draw [in=-90, out=135] (1) to (6.center);
		\draw [in=-90, out=45] (1) to (7.center);
		\draw (0) to (1);
		\draw [in=-90, out=105] (0) to (5.center);
	\end{pgfonlayer}
\end{tikzpicture}
\right\rrbracket_X,
\hspace*{1cm}
w=\left(
\begin{pmatrix}
           a_{1}' \\
           a_{2}' \\
           a_{3}'
\end{pmatrix}
,
\begin{pmatrix}
           b_{1}' \\
           b_{2}' \\
           b_{3}'
\end{pmatrix}
\right) \in \left\llbracket
\begin{tikzpicture}
	\begin{pgfonlayer}{nodelayer}
		\node [style=Z] (0) at (42.75, 0.25) {};
		\node [style=X] (1) at (43.25, 0.75) {};
		\node [style=none] (2) at (42.5, -0.25) {};
		\node [style=none] (3) at (43, -0.25) {};
		\node [style=none] (4) at (43.5, -0.25) {};
		\node [style=none] (5) at (42.5, 1.25) {};
		\node [style=none] (6) at (43, 1.25) {};
		\node [style=none] (7) at (43.5, 1.25) {};
	\end{pgfonlayer}
	\begin{pgfonlayer}{edgelayer}
		\draw [in=-135, out=90] (2.center) to (0);
		\draw [in=90, out=-45] (0) to (3.center);
		\draw [in=285, out=90] (4.center) to (1);
		\draw [in=-90, out=135] (1) to (6.center);
		\draw [in=-90, out=45] (1) to (7.center);
		\draw (0) to (1);
		\draw [in=-90, out=105] (0) to (5.center);
	\end{pgfonlayer}
\end{tikzpicture}
\right\rrbracket_Z$$

So that
$$
a_1=a_2=b_1,\
a_1+a_3 = b_2+b_3,\ 
b_2'=b_3'=-a_3',\
-a_1'-a_2'=b_1'+b_2'
$$

Then
\begin{align*}
  v^Tw
=& a_1a_1' + a_2a_2'+a_3a_3'+b_1b_1' + b_2b_2'+b_3b_3'\\
=& a_1a_1' + a_1a_2'-a_3b_2'+a_1b_1' + b_2b_2'+b_3b_2'\\
=& a_1a_1' + a_1a_2'-(b_2+b_3-a_1)b_2'+a_1(-a_1'-a_2'-b_2') + b_2b_2'+b_3b_2'\\
=& a_1a_1' + a_1a_2'-b_2b_2'-b_3b_2'+a_1b_2'-a_1a_1'-a_1a_2'-a_1b_2' + b_2b_2'+b_3b_2'\\
=& (a_1a_1'-a_1a_1') + (a_1a_2'-a_1a_2')+(b_2b_2'-b_2b_2')+(b_3b_2'-b_3b_2')+(a_1b_2'-a_1b_2')\\
=&0+0+0+0\\
=0
\end{align*}
\end{example}

We can get an even larger fragment of the ZX-calculus by looking at affine subspaces.  First, we shall look at affine matrices:

\begin{definition}
Given a field $k$, the prop of affine matrices, $\Aff\Mat_k$ is presented by adding the affine shift to the presentation of $\Mat_k$, modulo the following equations
$$

\right)
:
a_1=a_2=a_3\wedge
a_1+a_3+c = b_2+b_3
\right\}
$$

\end{example}

\section{Linear and Affine Lagrangian relations}
\label{sec:lagrel}

We first develop the basic theory of linear symplectic geometry, then we give compositional account of this theory in terms of linear relations and graphical linear algebra. This will allow us to capture even larger fragments of quantum theory with this semantics of  $\F_p$ linear/affine subspaces.

\subsection{Linear symplectic geometry}
In this section we give a basic introduction to linear symplectic geometry.  A more detailed introduction can be found in \cite{weinsteinsymplectic}.
\begin{definition}
  Given a field  $k$ and a $k$-vector space $V$, a {\bf symplectic form} on $V$ is a bilinear map $\omega:V\times V\to k$ which is:
\begin{description}
 \item[\ \ Alternating] $\forall v \in V$, $\omega(v,v)=0$.
 \item[\ \ Non-degenerate] if $\exists v \in V: \forall w \in V: \omega(v,w)=0$, then $v=0$.
\end{description}
  A {\bf symplectic vector space} is a vector space equipped with a symplectic form. A (linear) {\bf symplectomorphism} is a linear isomorphism between symplectic vector spaces that preserves the symplectic form. Two symplectic vector spaces are {\bf symplectomorphic}, when there is a symplectomorpism between them.
\end{definition}

\begin{lemma}
\label{lemma:sform}
Every vector space $k^{2n}$ with a chosen basis is equipped with a bilinear form given by the following block matrix:
$$
\omega:=
\begin{bmatrix}
0_n & I_n\\
-I_n & 0_n
\end{bmatrix}
$$
so that $\omega(v,w) := v \omega w^T$.
Moreover, every finite dimensional symplectic vector space over $k$ is symplectomorphic to one of the form $k^{2n}$ with such a symplectic form.
\end{lemma}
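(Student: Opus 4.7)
The statement naturally decomposes into two claims that I would prove in sequence.

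\textbf{Part 1 (the standard form is symplectic).} Writing $v = (x_1,\dots,x_n,y_1,\dots,y_n)$ and $w = (x_1',\dots,x_n',y_1',\dots,y_n')$, a direct block-matrix computation gives $\omega(v,w) = \sum_i x_i y_i' - \sum_i y_i x_i'$. Alternation then follows by inspection (each term cancels its swapped partner), and non-degeneracy is immediate from the fact that the block matrix is invertible (its determinant is $\pm 1$), so $v\omega = 0$ forces $v = 0$. Bilinearity is clear since the form is given by a matrix product.

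\textbf{Part 2 (classification).} The plan is a symplectic Gram--Schmidt argument: construct a Darboux basis by induction on $\dim V$. The base case $\dim V = 0$ is trivial (take $n=0$). For the inductive step, pick any nonzero $e \in V$; by non-degeneracy there exists $f' \in V$ with $\omega(e, f') \neq 0$, and rescaling $f := f'/\omega(e,f')$ gives $\omega(e,f) = 1$. Let $W := \mathrm{span}(e,f)$ and $W^\perp := \{v \in V : \omega(v,w) = 0 \text{ for all } w \in W\}$. I would then verify:
\begin{enumerate}
\item $W \cap W^\perp = 0$, since any $v = \alpha e + \beta f \in W^\perp$ satisfies $0 = \omega(v,f) = \alpha$ and $0 = \omega(v,e) = -\beta$.
\item $V = W + W^\perp$, by writing any $v \in V$ as $v = (\omega(v,f)e - \omega(v,e)f) + (v - \omega(v,f)e + \omega(v,e)f)$, where the second summand lies in $W^\perp$ by direct computation using $\omega(e,f) = 1$.
\item $\omega$ restricted to $W^\perp$ is again a symplectic form: alternation is inherited, and if $v \in W^\perp$ is $\omega$-orthogonal to all of $W^\perp$, then since it is also $\omega$-orthogonal to $W$, non-degeneracy of $\omega$ on $V$ forces $v = 0$.
\end{enumerate}
The inductive hypothesis applied to $W^\perp$ yields a Darboux basis $e_2,\dots,e_n,f_2,\dots,f_n$ of $W^\perp$; prepending $e_1 := e$ and $f_1 := f$ and reordering as $(e_1,\dots,e_n,f_1,\dots,f_n)$ gives a basis of $V$ in which $\omega$ takes the standard block-matrix form. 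In particular, $\dim V = 2n$ must be even, and the map sending this basis to the standard basis of $k^{2n}$ is the desired symplectomorphism.

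\textbf{Main obstacle.} The only genuinely non-routine step is the direct-sum decomposition $V = W \oplus W^\perp$; everything else is bookkeeping. The projection formula for $v$ above is the key trick, and it works precisely because we normalized so that $\omega(e,f) = 1$, making $(e,f)$ behave like a hyperbolic pair. Once that decomposition is in hand, the rest of the proof is a straightforward induction.
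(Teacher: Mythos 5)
Your proof is correct. The paper itself supplies no proof of this lemma — it is the standard linear Darboux (symplectic basis) theorem, deferred to the cited introduction to symplectic geometry — and your argument is exactly that standard one: verify the block form $\begin{bmatrix}0 & I\\ -I & 0\end{bmatrix}$ is alternating and non-degenerate, then classify by symplectic Gram--Schmidt, splitting off a hyperbolic plane $W=\mathrm{span}(e,f)$ with $\omega(e,f)=1$ and inducting on $W^{\perp}$ via the decomposition $V=W\oplus W^{\perp}$. Two minor points worth making explicit if you write this up: evenness of $\dim V$ follows because the induction strips off two dimensions at a time and a one-dimensional non-degenerate alternating space is impossible (for $f'=\lambda e$ one has $\omega(e,f')=\lambda\,\omega(e,e)=0$); and since you only ever use the alternating condition $\omega(v,v)=0$ (never symmetry assumptions), the argument is valid in characteristic $2$ as well, which matters for this paper's use of $\F_2$.
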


\begin{definition}

Let $W \subseteq V$ be a linear subspace of a symplectic space $V$.
The {\bf symplectic dual} of the subspace $W$ is defined to be
$
W^\omega:= \{v \in V : \forall w \in W, \omega(v,w)=0 \}
$.
A linear subspace  $W$ of a symplectic vector space $V$ is {\bf isotropic} when $W^\omega \supseteq W$, {\bf coisotropic} when $W^\omega \subseteq W$ and {\bf Lagrangian} when $W^\omega=W$.
\end{definition}
Notice that the symplectic complement reverses the order of inclusion, so that coisotropic subspaces are turned into isotropic subspaces and vice versa.  All of these subspaces generalize the notion of symplectomorphism:

\begin{lemma}
Every symplectomorphism $f:V\to V$ induces a Lagrangian subspace $\Gamma_f:=\{ (fv, v) | v \in V \}$.
\end{lemma}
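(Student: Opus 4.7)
The plan is to exhibit $\Gamma_f$ as an isotropic subspace of $V \oplus V$ whose dimension is half that of the ambient space, and conclude Lagrangianness from this. The first thing to fix is the ambient symplectic structure: for this statement to make sense, $V \oplus V$ must be equipped with the symplectic form
$$\Omega\bigl((v_1,v_2),(w_1,w_2)\bigr) := \omega(v_1,w_1) - \omega(v_2,w_2),$$
i.e.\ the direct sum of $\omega$ with the negated form $-\omega$. (This is the sign convention consistent with the $\dag$-compact structure on $\LinRel_k$ from Section~\ref{sec:linrel}, where the converse of a relation corresponds to flipping an input/output and therefore to negating the symplectic form on that factor. Without this sign flip the claim would simply be false, so pinning down the convention is really the only subtle point.)

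Next I would check that $\Gamma_f$ is isotropic with respect to $\Omega$. Picking two arbitrary elements $(fv,v), (fw,w) \in \Gamma_f$, direct computation gives
$$\Omega\bigl((fv,v),(fw,w)\bigr) = \omega(fv,fw) - \omega(v,w) = \omega(v,w) - \omega(v,w) = 0,$$
where the middle equality uses exactly that $f$ is a symplectomorphism. Thus $\Gamma_f \subseteq \Gamma_f^{\Omega}$.

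Finally I would argue that $\Gamma_f$ has the correct dimension. The map $v \mapsto (fv,v)$ is linear and injective (its second component alone recovers $v$), and $f$ is a linear isomorphism, so $\Gamma_f$ is a linear subspace of dimension $\dim V = \tfrac12 \dim(V \oplus V)$. A standard consequence of non-degeneracy (together with the formula $\dim W + \dim W^{\Omega} = \dim(V \oplus V)$) is that any isotropic subspace of half the ambient dimension must equal its symplectic dual, and is therefore Lagrangian. Applying this to $\Gamma_f$ concludes the proof. The only real obstacle is the initial convention fix; once the sign is in place the rest is a one-line computation and a dimension count.
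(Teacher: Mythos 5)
Your proof is correct, and the paper actually states this lemma without proof, so there is no in-text argument to compare against beyond the surrounding conventions. Your main worry --- that the statement only makes sense once $V\oplus V$ carries the twisted form $\Omega=\omega\oplus(-\omega)$ --- is well placed and correctly resolved: with the plain direct sum $\omega\oplus\omega$ even the diagonal $\Gamma_{\mathrm{id}}$ fails to be isotropic in odd characteristic, and the sign twist is exactly the bookkeeping the paper performs elsewhere (the $-1$ introduced by bending a wire with one colour spider and unbending with the other, the negated powers of $\mathcal{X}$ on inputs when computing $Z$-stabilizers, and the conjugation functor of Definition \ref{def:conj}, which negates the $Z$-grading). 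The one caveat worth recording is that over $\F_2$ the two conventions coincide since $-1=1$, so ``the claim would simply be false'' should be read as ``false in odd characteristic''. The body of your argument --- isotropy of $\Gamma_f$ from $\omega(fv,fw)=\omega(v,w)$, plus the dimension count $\dim\Gamma_f=\dim V=\tfrac12\dim(V\oplus V)$ combined with $\dim W+\dim W^{\Omega}=\dim(V\oplus V)$ --- is the standard proof and is valid in the paper's finite-dimensional setting (Lemma \ref{lemma:sform}). If you wanted to avoid the dimension count, you could instead show $\Gamma_f^{\Omega}\subseteq\Gamma_f$ directly: $(u_1,u_2)\in\Gamma_f^{\Omega}$ means $\omega(u_1,fw)=\omega(u_2,w)$ for all $w$, i.e.\ $\omega(f^{-1}u_1,w)=\omega(u_2,w)$, and non-degeneracy then forces $u_1=fu_2$; but your route is perfectly sound.
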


As a matter of convention, we consider linear subspaces as being represented as the row space of a matrix.
An isotropic subspace can equivalently be characterized as the $k$-linear row space of a matrix $[Z|X]$ so that $[Z|X] \omega [Z|X]^T = 0$.
Moreover, a Lagrangian subspace can be described as a matrix satisfying this equation which additionally has rank $n$.

\begin{definition}
Given a field $k$, the prop of {\bf Lagrangian relations},  $\Lag\Rel_k$ has morphisms $n\to m$ being Lagrangian subspaces of the symplectic vector space $k^{n+m} \oplus k^{n+m}$ with respect to the symplectic form given above.  Composition is given by relational composition and the tensor product is given by the direct sum (where the $Z$ and $X$ gradients are grouped together).

The props of {\bf isotropic relations} and {\bf coisotropic relations}, $\Isot\Rel_{k}$ and $\Co\Isot\Rel_{k}$ are defined in the obvious analogous ways.
\end{definition}

In linear mechanical systems, symplectic vector spaces over $\R$ are interpreted as  the phase space: ie the space of all allowable configurations of position and momentum.  The Lagrangian subspaces are interpreted as the initial configurations of the mechanical system, and the symplectomorphisms describe the reversible evolution of the system.  The category of Lagrangian relations allows one to tensor and compose the various configurations and evolutions of the system.  Coisotropic and isotropic relations allows more or less degrees of freedom of the configurations and evolution of the phase space.

\subsection{String diagrams for Lagrangian relations}
The following subsection reviews the results of \cite{lagrel}.
The CPM construction of \cite{cpm} arises in two ways in this paper. We present it using a conjugation rather than a dagger because this is more natural for our purposes.  Also, we will assume that our categories are self dual compact closed, so that $X^*=X$ for no reason other than convenience:

\begin{definition}
Given a self dual  compact closed category $\X$ with an identity on objects strict symmetric monoidal involution $\bar{(-)}:\X\to \X$, the symmetric monoidal category $\CPM(\X,\bar{(-)})$ has:

\begin{description}
\item[\ \ Objects:] Same as $\X$.

\item[\ \ Maps:] Maps $\hat f:X \to Y$ in $\CPM(\X)$ are given by by equivalence classes of {\bf purifications} \\$(Z,f:X\to Y\otimes Z)$ in $\X$, so that $(Z,f:X\to Y\otimes Z)\sim (Z,f':X\to Y\otimes Z')$ if and only if:

\hfil$
\begin{tikzpicture}
	\begin{pgfonlayer}{nodelayer}
		\node [style=none] (7) at (2.25, 3.25) {};
		\node [style=none] (8) at (3.75, 3.25) {};
		\node [style=none] (9) at (2.5, 1.5) {};
		\node [style=none] (10) at (4, 1.5) {};
		\node [style=Z] (11) at (3.25, 2.75) {};
		\node [style=map] (12) at (2.5, 2) {$\bar f$};
		\node [style=map] (13) at (4, 2) {$ f$};
		\node [style=none] (14) at (2.5, 1.25) {$X$};
		\node [style=none] (15) at (4, 1.25) {$X$};
		\node [style=none] (16) at (2.25, 3.5) {$Y$};
		\node [style=none] (17) at (3.75, 3.5) {$Y$};
		\node [style=none] (18) at (2.75, 2.75) {$Z$};
	\end{pgfonlayer}
	\begin{pgfonlayer}{edgelayer}
		\draw [bend left] (12) to (11);
		\draw [in=60, out=0, looseness=1.50] (11) to (13);
		\draw [in=-90, out=120] (13) to (8.center);
		\draw (10.center) to (13);
		\draw (9.center) to (12);
		\draw [in=270, out=120] (12) to (7.center);
	\end{pgfonlayer}
\end{tikzpicture}
=
\begin{tikzpicture}
	\begin{pgfonlayer}{nodelayer}
		\node [style=none] (7) at (2.25, 3.25) {};
		\node [style=none] (8) at (3.75, 3.25) {};
		\node [style=none] (9) at (2.5, 1.5) {};
		\node [style=none] (10) at (4, 1.5) {};
		\node [style=Z] (11) at (3.25, 2.75) {};
		\node [style=map] (12) at (2.5, 2) {$\bar f'$};
		\node [style=map] (13) at (4, 2) {$ {f'}$};
		\node [style=none] (14) at (2.5, 1.25) {$X$};
		\node [style=none] (15) at (4, 1.25) {$X$};
		\node [style=none] (16) at (2.25, 3.5) {$Y$};
		\node [style=none] (17) at (3.75, 3.5) {$Y$};
		\node [style=none] (18) at (2.75, 2.75) {$Z'$};
	\end{pgfonlayer}
	\begin{pgfonlayer}{edgelayer}
		\draw [bend left] (12) to (11);
		\draw [in=60, out=0, looseness=1.50] (11) to (13);
		\draw [in=-90, out=120] (13) to (8.center);
		\draw (10.center) to (13);
		\draw (9.center) to (12);
		\draw [in=270, out=120] (12) to (7.center);
	\end{pgfonlayer}
\end{tikzpicture}
$

\item[\ \ Composition:] Vertical stacking.

\item[\ \ Symmetric monoidal structure:] Pointwise in $\X$.

\end{description}
The canonical functor $\X\to\CPM(\X,
\bar{(-)})$ taking
$$
\begin{tikzpicture}
	\begin{pgfonlayer}{nodelayer}
		\node [style=map] (15) at (7.25, -0.5) {$f$};
		\node [style=none] (17) at (7.25, -1.25) {};
		\node [style=none] (19) at (7.25, 0.25) {};
	\end{pgfonlayer}
	\begin{pgfonlayer}{edgelayer}
		\draw (17.center) to (15.center);
		\draw (15.center) to (19.center);
	\end{pgfonlayer}
\end{tikzpicture}
\mapsto
\begin{tikzpicture}
	\begin{pgfonlayer}{nodelayer}
		\node [style=map] (10) at (6.25, -0.5) {$\bar f$};
		\node [style=map] (15) at (7.25, -0.5) {$f$};
		\node [style=none] (16) at (6.25, -1.25) {};
		\node [style=none] (17) at (7.25, -1.25) {};
		\node [style=none] (18) at (6.25, 0.25) {};
		\node [style=none] (19) at (7.25, 0.25) {};
	\end{pgfonlayer}
	\begin{pgfonlayer}{edgelayer}
		\draw (16.center) to (10.center);
		\draw (17.center) to (15.center);
		\draw (10.center) to (18.center);
		\draw (15.center) to (19.center);
	\end{pgfonlayer}
\end{tikzpicture}
$$
 is called {\bf doubling}. The maps in the image of this functor are {\bf pure}, and those which aren't are {\bf mixed}.

The map given by connecting both sides is the {\bf discard map}:
$$
\begin{tikzpicture}
	\begin{pgfonlayer}{nodelayer}
		\node [style=none] (10) at (6.25, -0.5) {};
		\node [style=none] (15) at (7.25, -0.5) {};
		\node [style=none] (16) at (6.25, -1) {};
		\node [style=none] (17) at (7.25, -1) {};
		\node [style=Z] (18) at (6.75, 0) {};
	\end{pgfonlayer}
	\begin{pgfonlayer}{edgelayer}
		\draw (16.center) to (10.center);
		\draw (17.center) to (15.center);
		\draw [in=-15, out=90, looseness=0.75] (15.center) to (18.center);
		\draw [in=90, out=-150] (18.center) to (10.center);
	\end{pgfonlayer}
\end{tikzpicture}
$$
All maps can be obtained by composing pure maps with discard maps.  Given a mixed map $f$ in $\CPM(\X)$ such a factorization into a pure map followed by a discard map is a {\bf purification} of $f$.

\end{definition}

We will omit the conjugation functor when using this construction when it is clear from context.

This construction was first described to give a categorical semantics for density matrices:
\begin{example}

$\CPM(\FdHilb,\bar{(-)})$  is the category of density matrices between finite dimensional Hilbert spaces.
The pure states in this setting are pure quantum states, up to a scalar factor and the discard is the quantum discard.
\end{example}

Surprisingly, Lagrangian relations also arise from this construction, but in a very different way from the orthogonal complement of linear relations; here the discard adds phase:
\begin{theorem}[{\cite[Corollary 3.4]{lagrel}}]
 $\Lag\Rel_{\F_p} \cong \CPM(\LinRel_{\F_p}, (-)^\perp)$
\end{theorem}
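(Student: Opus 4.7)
The plan is to construct explicit functors in both directions and verify they are mutually inverse. For the forward direction $F \colon \CPM(\LinRel_{\F_p}, (-)^\perp) \to \Lag\Rel_{\F_p}$, I would send a representative purification $R \subseteq \F_p^{n+m+k}$ of a $\CPM$ morphism $n \to m$ to the subspace of $\F_p^{2(n+m)}$ obtained by first forming the doubled subspace $R^\perp \oplus R \subseteq \F_p^{2(n+m+k)}$ and then contracting the $k$ environment pairs via the cup, i.e.\ identifying the corresponding coordinates across the two copies.

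The first thing to check is that $R^\perp \oplus R$ is Lagrangian with respect to the standard symplectic form of Lemma~\ref{lemma:sform}. Its dimension is $\dim R + \dim R^\perp = n+m+k$, which is the required half-dimension; isotropy follows from $\omega((u_1,v_1),(u_2,v_2)) = u_1^{\top} v_2 - u_2^{\top} v_1 = 0$ whenever $u_i \in R^\perp$ and $v_i \in R$. Contracting with the standard cup, which is itself a Lagrangian relation, preserves the Lagrangian property since Lagrangian relations are closed under relational composition. Well-definedness on $\CPM$-equivalence classes is then essentially built into the construction: two purifications are equivalent exactly when they yield the same map after doubling and tracing the environment, which is the definition of $F$. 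Functoriality for composition and tensor then follows because $R \mapsto R^\perp \oplus R$ is monoidal and commutes with environment contraction.

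For the inverse $G \colon \Lag\Rel_{\F_p} \to \CPM(\LinRel_{\F_p}, (-)^\perp)$, the strategy is to purify each Lagrangian $L \subseteq \F_p^{2(n+m)}$. I would present $L$ by a stabilizer tableau $[Z \mid X]$ whose rows span $L$ and satisfy $ZX^{\top} = XZ^{\top}$. When the row spans of $Z$ and $X$ already exhibit a clean orthogonality, one takes $k = 0$ and lets $R$ be the row span of $X$, so that $R^\perp$ contains the row span of $Z$; the generic case introduces $k > 0$ environment wires to absorb the non-orthogonal contributions between the $Z$ and $X$ halves. Mutual inversion is then the verification that $F \circ G$ reconstructs $L$ from its tableau, and that $G \circ F$ produces a purification equivalent to the one we started with.

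The main obstacle will be carrying out the construction of $G$ uniformly and establishing fullness of $F$. The cleanest route is to exploit that both sides are symmetric monoidal props and compare their presentations: on the $\CPM$ side, the generators are those of $\LinRel_{\F_p}$ (the phase-free ZX spiders modulo scalars) together with the discard map; on the $\Lag\Rel_{\F_p}$ side one has an analogous family of spiders living in the symplectic setting. Matching generators under the candidate functor $F$ and checking that the two sets of relations coincide reduces the proof to a finite, routine inspection and bypasses the need to exhibit an ad hoc purification for every Lagrangian.
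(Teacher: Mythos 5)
The forward half of your argument is essentially definitional and is fine: isotropy of the contracted subspace, well-definedness on $\CPM$-equivalence classes, and faithfulness all come for free from the way equivalence in $\CPM(\LinRel_{\F_p},(-)^\perp)$ is defined (two purifications are identified precisely when their doubled-and-traced relations agree). Since the functor is identity on objects, the entire content of the theorem is therefore fullness: every Lagrangian subspace of $\F_p^{2(n+m)}$ must be exhibited as $\{(z,x)\,:\,\exists e,\ (z,e)\in R^\perp \text{ and } (x,e)\in R\}$ for some linear subspace $R$. This is exactly the step you wave at ("introduce $k>0$ environment wires to absorb the non-orthogonal contributions") without a construction, and it is where the real work lies. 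To see that it is not routine, take one wire and the Lagrangian $L=\{(x,x)\}$ with $p\equiv 3 \pmod 4$: with a one-dimensional environment the recipe only yields graphs of the form $\{(-b^2x,x)\}$ together with $\{(z,0)\}$ and $\{(0,x)\}$, so neither the "$R=$ row span of $X$" case nor a single environment wire reaches $L$; one needs, e.g., a two-dimensional environment and the fact that every element of $\F_p$ is a sum of two squares. In general the fullness proof requires reducing an arbitrary Lagrangian to a graph-state-type normal form by symplectomorphisms already known to lie in the image, plus this arithmetic ingredient for the quadratic phases — none of which appears in your sketch.

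Your proposed fallback, "compare the presentations of the two props," is circular as stated: a presentation of $\Lag\Rel_{\F_p}$ by spiders is not an independently available input — in the literature such presentations are obtained together with, or downstream of, precisely this $\CPM$ characterization (the present paper does not prove the statement at all; it imports it from \cite[Corollary 3.4]{lagrel}). Two smaller points you should also pin down if you pursue the direct route: the symplectic form for relations $n\to m$ must be twisted on the input factor for relational composition to preserve Lagrangians (the "closed under composition" step you invoke), and the interaction of $(-)^\perp$ with cups and caps of the two colours introduces signs (bending with one colour and unbending with the other is $-1$), so the identification of the $\CPM$ trace with your coordinate identification needs the conventions made explicit.
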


This gives us generators for $\Lag\Rel_{\F_p}$ by doubling and tracing out the generators of $\LinRel_{\F_p}$.  This doubling reflects the symmetry between $Z$ and $X$ gradings of the ambient symplectic vector space. We regard the left $n$-most wires as the $Z$ grading and the right $n$ most wires as the $X$ grading.  By adding affine shifts to this structure, this is shown to be isomorphic to stabilizer circuits modulo nonzero scalars.

\begin{definition}
The prop of {\bf affine Lagrangian relations} $\Aff\Lag\Rel_{k}$ has morphisms $n\to m$ either the empty subspace or affine Lagrangian subspaces: affine subspaces  $L+a\subseteq k^{2n}\oplus k^{2m}$ such that $L$ is Lagrangian.

The props of {\bf affine isotropic relations} and {\bf affine coisotropic relations}$\Aff\Isot\Rel_{k}$ and $\Aff\Co\Isot\Rel_{k}$ are defined in the obvious analogous ways.
\end{definition}

We give a very brief overview of the qudit stabilizer formalism (see 
\cite{gota} for reference):
\begin{definition}
Fix some local dimension $d$.
A single qudit {\bf Weyl operator} is an $d$-dimensional unitary of the form, $\mathcal{Z}^{z}\mathcal{X}^{x}$, for $x,z \in \Z/d\Z$.

An $n$-qudit Weyl operator is the $n$-fold tensor product of single qudit Weyl operators.   Given some $\vec z, \vec x \in k^{n}$, the corresponding $n$-qudit Weyl-operator is denoted by:
$$
{\mathcal W}(\vec z,\vec x)=\bigotimes_{j=0}^{n-1}\mathcal{Z}_{(j)}^{z_j}\mathcal{X}_{(j)}^{x_j}
$$
The $n$-qudit Weyl operators form the {\bf Heisenberg-Weyl group} (called the Pauli group for qubits) $\mathcal{P}_d^n$ under matrix multiplication and the Hermitian adjoint.

An $n$-qudit (pure) {\bf stabilizer state} is a state stabilized by a maximal subgroup of the Heisenberg-Weyl group.  Considered all together, the qudit stabilizer states in addition to the cups and caps form the prop of (pure) qudit {\bf stabilizer circuits}. 
Concretely, this is the prop generated by the matrices for state preparation and postselection in the standard basis along with the qudit Clifford gates.

Let $\Stab_d$ denote prop of (pure) qudit stabilizer circuits modulo invertible scalars.  
%
%
\end{definition}

\begin{theorem}[{\cite[Theorem 4.16]{lagrel}}]
For odd prime $p$, $\Aff\Lag\Rel_{\F_p}$ is isomorphic to $\Stab_p$.

Similarly, $\Aff\Lag\Rel_{\F_2}$ is isomorphic to the pure circuits in Spekkens' toy model.
\end{theorem}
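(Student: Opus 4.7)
The plan is to leverage the CPM construction already invoked for Lagrangian relations, extending it to the affine setting, and then appeal to Lemma \ref{lem:affrel} which identifies $\Aff\Rel_{\F_p}$ (mod scalars) with the phase-free plus $X$-gate ZX-calculus. More precisely, the first step is to establish the affine analogue of the isomorphism $\Lag\Rel_{\F_p} \cong \CPM(\LinRel_{\F_p},(-)^\perp)$, namely $\Aff\Lag\Rel_{\F_p} \cong \CPM(\Aff\Rel_{\F_p},(-)^\perp)$, by showing that adjoining the $X$-phase generator and its copy/discard equations on the linear side corresponds, under CPM, to adjoining precisely the Weyl shifts on the Lagrangian side. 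Crucially, an affine Lagrangian subspace $L + v \subseteq \F_p^{2n} \oplus \F_p^{2m}$ is recovered from the doubled linear Lagrangian plus phases $v$, so that the two-sided presentation of affine shifts precisely encodes the affine offset.

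Given this, I would construct the comparison functor $F:\Aff\Lag\Rel_{\F_p} \to \Stab_p$ directly on generators: a doubled $Z$-spider with $X$-phase $c$ is sent to the stabilizer spider with Weyl phase $\mathcal{Z}^c$ on its central leg, and symmetrically for $X$-spiders. On a general morphism, an affine Lagrangian $L+v$ is sent to the pure stabilizer map whose stabilizer tableau is a generator matrix for $L$ with Weyl offsets determined by $v$; this is well-defined up to invertible scalar because row operations on a stabilizer tableau correspond to changing basis of $L$. Functoriality for composition reduces to the standard fact that multiplying stabilizer tableaux (Gaussian elimination to remove internal wires) implements exactly the existential quantification defining relational composition of affine subspaces in $k^n \oplus k^m$, as discussed in Sections \ref{sec:linrel}--\ref{sec:lagrel}.

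The inverse functor sends a stabilizer state (or, via bending wires down, a stabilizer circuit) to the affine Lagrangian spanned by its $Z$- and $X$-stabilizers with their Weyl phases recorded as the offset; that this is indeed affine Lagrangian is the classical Gottesman / Calderbank--Rains--Shor--Sloane correspondence. To verify $F$ is an isomorphism it suffices to check it is essentially surjective (immediate, since every stabilizer state arises from a tableau) and that these two assignments are mutually inverse on generators. The qubit case proceeds identically but with target Spekkens' toy model: one uses the symplectic-geometric formulation of the toy model from \cite{spekkens2016quasi}, under which toy states, reversible transformations and measurements are literally affine Lagrangian subspaces of $\F_2^{2n} \oplus \F_2^{2m}$, so $F$ becomes the tautological identification.

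The main obstacle is careful scalar bookkeeping: the isomorphism holds only modulo invertible scalars, and the doubling construction introduces potentially sign ambiguities — especially acute for $p=2$, where $-1 = 1$ causes the CPM conjugation to be the identity and the usual symplectic form coincides with its negative. One must verify that every equation of $\Aff\Lag\Rel_{\F_p}$ (in particular the wire-bending scalar $\begin{tikzpicture}\begin{pgfonlayer}{nodelayer}\node [style=s] (2) at (0,0) {};\end{pgfonlayer}\end{tikzpicture}$ and the collapsing equation $\Xdot_1 \circ \Xdot_1 = \emptyset$) is matched on the quantum side only up to scalar, which is why the qubit correspondence lands in Spekkens' toy model rather than in stabilizer circuits proper. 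The remaining routine, but tedious, task is a generator-by-generator check that $F$ and its inverse preserve the relations listed for $\Aff\Rel_{\F_p}$ after applying CPM, which reduces to computations already performed in \cite{lagrel}.
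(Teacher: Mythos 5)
First, note that this theorem is imported from \cite{lagrel} (Theorem 4.16) and the present paper gives no proof of it, so the only comparison available is with the route the paper sketches around the citation: establish $\CPM(\LinRel_{\F_p},(-)^\perp)\cong\Lag\Rel_{\F_p}$ for the purely linear part, \emph{then} adjoin the affine (Weyl) shifts, and identify stabilizer tableaux with bases of affine Lagrangian subspaces. The pivotal first step of your plan, the claimed isomorphism $\Aff\Lag\Rel_{\F_p}\cong\CPM(\Aff\Rel_{\F_p},(-)^\perp)$, does not typecheck: $(-)^\perp$ is not a monoidal involution on $\Aff\Rel_{\F_p}$. For a strictly affine subspace $V=L+a$ the set $\{w:\forall v\in V,\ w^{T}v=0\}$ equals $L^\perp\cap a^\perp$, which is linear, so the operation applied twice does not return $V$; equivalently, the colour swap realizing $(-)^\perp$ on $\LinRel_{\F_p}$ would have to send the affine shift (the $X$-point labelled $1$) to a ``$Z$-phase'', which is invisible in the relational semantics, so no such involution exists. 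Even granting some repaired conjugation $\sigma$, a Weyl displacement has two independent components $(z,x)$ per wire, while doubling a one-wire affine label $c$ can only produce correlated offsets of the form $(\sigma(c),c)$ spread over the two gradings (and composing with the CPM discard produces mixed, non-Lagrangian relations, not the missing pure $Z$-shifts). This is exactly why the affine phases are added \emph{after} the CPM construction in the paper and in \cite{lagrel}, and why the iterated-CPM statement in Section 4 is asserted only for the Weyl-free fragment.

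Setting that step aside, the remainder of your argument is the standard tableau correspondence, but the two points where the theorem's real content lives are only asserted: (i) that composition of stabilizer circuits — including zero composites from inconsistent post-selection, which must match the empty relation — computes precisely the relational composite of the associated affine Lagrangian subspaces, and (ii) faithfulness modulo invertible scalars. Deferring these to ``computations already performed in \cite{lagrel}'' is circular, since that is the very theorem being proved. Finally, your diagnosis of the qubit case is off: the reason $p=2$ lands in Spekkens' toy model rather than in qubit stabilizer circuits is not scalar bookkeeping or $-1=1$, but the phase-group mismatch ($(\Z/2\Z)^2$ versus $\Z/4\Z$) discussed in the introduction and Section 3; so for $p=2$ the functor must target the toy model from the outset (e.g.\ via its symplectic formulation in \cite{spekkens2016quasi}), as you in fact do, rather than arising as a scalar-quotient of the qubit stabilizer interpretation.
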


A basis for the affine Lagrangian subspace corresponds to the stabilizer tableau for a pure stabilizer state (ie a stabilizer tableau on $n$ qudits with dimension $n$).  Denote the symplectic version of a Weyl operator as follows:

$$
W(\vec z, \vec x)
:=
\begin{tikzpicture}
	\begin{pgfonlayer}{nodelayer}
		\node [style=X] (0) at (17, 4.5) {$\vec z$};
		\node [style=X] (2) at (18, 4.5) {$\vec x$};
		\node [style=none] (3) at (17, 5.25) {};
		\node [style=none] (4) at (18, 5.25) {};
		\node [style=none] (5) at (17, 3.75) {};
		\node [style=none] (6) at (18, 3.75) {};
	\end{pgfonlayer}
	\begin{pgfonlayer}{edgelayer}
		\draw (6.center) to (2.center);
		\draw (2.center) to (4.center);
		\draw (3.center) to (0.center);
		\draw (0.center) to (5.center);
	\end{pgfonlayer}
\end{tikzpicture}
$$

The novelty in interpreting stabilizer states in this categorical framework is that it reveals that the relational composition of tableaux is the composition of stabilizer circuits.

\begin{corollary}
For odd prime $p$, $\Lag\Rel_{\F_p}$ is a presentation for stabilizer circuits without affine phases. 
\end{corollary}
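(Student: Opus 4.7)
The plan is to deduce this corollary by restricting the isomorphism $\Aff\Lag\Rel_{\F_p} \cong \Stab_p$ of Theorem 4.16 to appropriate sub-props on each side. Specifically, I would argue that $\Lag\Rel_{\F_p}$ embeds into $\Aff\Lag\Rel_{\F_p}$ as the wide sub-prop whose morphisms are exactly those affine Lagrangian subspaces that contain the origin (i.e.\ are honest linear Lagrangian subspaces), and that under the isomorphism this sub-prop is carried onto the sub-prop of $\Stab_p$ consisting of those circuits whose tableau carries no nontrivial Weyl shift $W(\vec z,\vec x)$, i.e.\ the stabilizer circuits without affine phases.

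First I would verify that the linear Lagrangian subspaces really do form a sub-prop of $\Aff\Lag\Rel_{\F_p}$. The relational composition of two subspaces that both contain the zero vector still contains the zero vector, and the same holds under direct sum; hence closure under the prop operations is automatic, and $\Lag\Rel_{\F_p}$ is a wide sub-prop of $\Aff\Lag\Rel_{\F_p}$ picked out by the condition of being homogeneous.

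Next I would track the generators across the isomorphism. From the string-diagrammatic presentations reviewed above, $\Aff\Lag\Rel_{\F_p}$ is obtained from $\Lag\Rel_{\F_p}$ by adjoining a single affine $X$-shift generator, which Theorem 4.16 sends to a Pauli/Weyl operator on the stabilizer side. Removing this generator therefore kills precisely the affine component of each Lagrangian relation and, correspondingly, the phases carried by the columns of a stabilizer tableau; what remains on one side is $\Lag\Rel_{\F_p}$, and on the other side is the sub-prop of $\Stab_p$ generated by the Clifford cups/caps, computational/Fourier-basis state preparation and postselection, and the phase-free Cliffords, i.e.\ the stabilizer circuits without affine phases.

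The step I expect to be the main obstacle is checking that passing to the quotient by invertible scalars used to build $\Stab_p$ does not spuriously identify a strictly affine Lagrangian subspace with a linear one; equivalently, one must show that the condition ``contains the origin'' is well-defined on the equivalence classes used in Theorem 4.16. This should follow from the tableau-level description of the isomorphism, in which the homogeneous part of the tableau and its affine shifts are recorded as independent data and scalar equivalence rescales the overall normalization without shuffling shifts into the linear part, but it is the point where I would be most careful to verify that the restriction of the iso is still essentially surjective and faithful on the restricted sub-prop.
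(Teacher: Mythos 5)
Your restriction argument is exactly the route the paper takes: the corollary is read off as an immediate consequence of the isomorphism $\Aff\Lag\Rel_{\F_p}\cong\Stab_p$, under which the affine shifts correspond precisely to the Weyl operators $W(\vec z,\vec x)$, so the sub-prop of homogeneous (origin-containing, i.e.\ linear) Lagrangian relations is carried onto the Weyl-free stabilizer circuits. Your final worry about the scalar quotient is a non-issue: multiplying a circuit by an invertible scalar does not change its stabilizer group, hence not the associated affine subspace, so the condition of containing the origin is manifestly well-defined on the equivalence classes used in the isomorphism.
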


The following set of generators for $\Aff\Lag\Rel_{k}$, reveals the connection to the ZX-calculus:

\begin{theorem}[{\cite[Remark 4.17]{lagrel}}]

$\Aff\Lag\Rel_{k}$ is generated by two spiders both decorated by the additive group of $k^2$ as well as scaling gates when $k$ is not a prime field:
$$
\left\llbracket\
\begin{tikzpicture}
	\begin{pgfonlayer}{nodelayer}
		\node [style=none] (0) at (21, 5) {};
		\node [style=none] (1) at (22, 5) {};
		\node [style=none] (2) at (21, 2.5) {};
		\node [style=none] (3) at (22, 2.5) {};
		\node [style=Z] (4) at (21.5, 3.75) {$a,b$};
		\node [style=none] (5) at (21.5, 4.5) {$\cdots$};
		\node [style=none] (6) at (21.5, 3) {$\cdots$};
		\node [style=none] (7) at (21.5, 4.75) {};
		\node [style=none] (8) at (21.5, 2.75) {};
	\end{pgfonlayer}
	\begin{pgfonlayer}{edgelayer}
		\draw [in=150, out=-90, looseness=0.75] (0.center) to (4);
		\draw [in=90, out=-150, looseness=0.75] (4) to (2.center);
		\draw [in=-30, out=90, looseness=0.75] (3.center) to (4);
		\draw [in=-90, out=30, looseness=0.75] (4) to (1.center);
	\end{pgfonlayer}
\end{tikzpicture}
\ \right\rrbracket
=
\begin{tikzpicture}
	\begin{pgfonlayer}{nodelayer}
		\node [style=none] (78) at (402.5, 0.75) {};
		\node [style=none] (79) at (402.5, -2.75) {};
		\node [style=Z] (80) at (402.1, -1.75) {};
		\node [style=none] (81) at (402.145, 0.65) {$\cdots$};
		\node [style=none] (82) at (402.125, -2.65) {$\cdots$};
		\node [style=none] (83) at (400.75, 0.75) {};
		\node [style=none] (84) at (400, -2.75) {};
		\node [style=X] (85) at (400.425, -0.25) {$a$};
		\node [style=none] (86) at (400.375, 0.65) {$\cdots$};
		\node [style=none] (87) at (400.395, -2.625) {$\cdots$};
		\node [style=none] (88) at (400.75, -2.75) {};
		\node [style=none] (89) at (401.75, -2.75) {};
		\node [style=none] (90) at (400, 0.75) {};
		\node [style=none] (91) at (401.75, 0.75) {};
		\node [style=scalar] (92) at (401.25, -1) {$b$};
	\end{pgfonlayer}
	\begin{pgfonlayer}{edgelayer}
		\draw [in=-45, out=90] (79.center) to (80);
		\draw [in=-90, out=60, looseness=0.75] (80) to (78.center);
		\draw [in=90, out=-120, looseness=0.75] (85) to (84.center);
		\draw [in=-90, out=45] (85) to (83.center);
		\draw [in=-60, out=90, looseness=0.75] (88.center) to (85);
		\draw [in=-135, out=90] (89.center) to (80);
		\draw [in=-90, out=150] (85) to (90.center);
		\draw [in=270, out=120, looseness=0.75] (80) to (91.center);
		\draw [in=-75, out=150] (80) to (92);
		\draw [in=330, out=90] (92) to (85);
	\end{pgfonlayer}
\end{tikzpicture},
\hspace*{.2cm}
\left\llbracket \
\begin{tikzpicture}
	\begin{pgfonlayer}{nodelayer}
		\node [style=none] (0) at (21, 5) {};
		\node [style=none] (1) at (22, 5) {};
		\node [style=none] (2) at (21, 2.5) {};
		\node [style=none] (3) at (22, 2.5) {};
		\node [style=X] (4) at (21.5, 3.75) {$a,b$};
		\node [style=none] (5) at (21.5, 4.5) {$\cdots$};
		\node [style=none] (6) at (21.5, 3) {$\cdots$};
		\node [style=none] (7) at (21.5, 4.75) {};
		\node [style=none] (8) at (21.5, 2.75) {};
	\end{pgfonlayer}
	\begin{pgfonlayer}{edgelayer}
		\draw [in=150, out=-90, looseness=0.75] (0.center) to (4);
		\draw [in=90, out=-150, looseness=0.75] (4) to (2.center);
		\draw [in=-30, out=90, looseness=0.75] (3.center) to (4);
		\draw [in=-90, out=30, looseness=0.75] (4) to (1.center);
	\end{pgfonlayer}
\end{tikzpicture}
\ \right\rrbracket
=
\begin{tikzpicture}
	\begin{pgfonlayer}{nodelayer}
		\node [style=none] (93) at (403.5, 0.75) {};
		\node [style=none] (94) at (403.5, -2.75) {};
		\node [style=Z] (95) at (403.9, -1.75) {};
		\node [style=none] (96) at (403.88, 0.65) {$\cdots$};
		\node [style=none] (97) at (403.85, -2.65) {$\cdots$};
		\node [style=none] (98) at (405.25, 0.75) {};
		\node [style=none] (99) at (406, -2.75) {};
		\node [style=X] (100) at (405.575, -0.25) {$a$};
		\node [style=none] (101) at (405.625, 0.65) {$\cdots$};
		\node [style=none] (102) at (405.605, -2.625) {$\cdots$};
		\node [style=none] (103) at (405.25, -2.75) {};
		\node [style=none] (104) at (404.25, -2.75) {};
		\node [style=none] (105) at (406, 0.75) {};
		\node [style=none] (106) at (404.25, 0.75) {};
		\node [style=scalar] (107) at (404.75, -1) {$b$};
	\end{pgfonlayer}
	\begin{pgfonlayer}{edgelayer}
		\draw [in=-135, out=90] (94.center) to (95);
		\draw [in=-90, out=120, looseness=0.75] (95) to (93.center);
		\draw [in=90, out=-60, looseness=0.75] (100) to (99.center);
		\draw [in=-90, out=135] (100) to (98.center);
		\draw [in=-120, out=90, looseness=0.75] (103.center) to (100);
		\draw [in=-45, out=90] (104.center) to (95);
		\draw [in=-90, out=30] (100) to (105.center);
		\draw [in=-90, out=60, looseness=0.75] (95) to (106.center);
		\draw [in=-105, out=30] (95) to (107);
		\draw [in=-150, out=90] (107) to (100);
	\end{pgfonlayer}
\end{tikzpicture}
,
\hspace*{.2cm}
\left\llbracket \
\begin{tikzpicture}
	\begin{pgfonlayer}{nodelayer}
		\node [style=none] (0) at (23, 4.5) {};
		\node [style=none] (1) at (23, 3) {};
		\node [style=scalar] (2) at (23, 3.75) {$a$};
	\end{pgfonlayer}
	\begin{pgfonlayer}{edgelayer}
		\draw (1.center) to (2);
		\draw (2) to (0.center);
	\end{pgfonlayer}
\end{tikzpicture}
\ \right\rrbracket
=
\begin{tikzpicture}
	\begin{pgfonlayer}{nodelayer}
		\node [style=none] (3) at (24.25, 4.5) {};
		\node [style=none] (4) at (24.25, 3) {};
		\node [style=scalarop] (5) at (24.25, 3.75) {$a$};
		\node [style=none] (6) at (25, 4.5) {};
		\node [style=none] (7) at (25, 3) {};
		\node [style=scalar] (8) at (25, 3.75) {$a$};
	\end{pgfonlayer}
	\begin{pgfonlayer}{edgelayer}
		\draw (4.center) to (5);
		\draw (5) to (3.center);
		\draw (7.center) to (8);
		\draw (8) to (6.center);
	\end{pgfonlayer}
\end{tikzpicture}
$$

\end{theorem}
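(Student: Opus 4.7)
The plan is to derive this presentation by combining two earlier results from the excerpt: the CPM characterization $\Lag\Rel_{\F_p} \cong \CPM(\LinRel_{\F_p}, (-)^\perp)$ and the presentation of $\Aff\Lin\Rel_k$ in Lemma \ref{lem:affrel}. First I would extend the CPM-style characterization to the affine setting: every affine Lagrangian subspace $L + v \subseteq k^{2n}\oplus k^{2m}$ splits as a Lagrangian subspace $L$ plus an affine displacement vector $v$, and with respect to the canonical $Z/X$ splitting of the ambient symplectic space, $v$ is determined by a pair of shifts in $k^{n+m}$, which is the source of the $k^2$-decoration on each spider.

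Next I would translate each generator of $\Aff\Lin\Rel_k$ through the doubling construction, one by one, and compare the result to the string diagram on the right hand side of each of the three displayed interpretations. The phase-free $Z$-spider doubles to a pair of $Z$-spiders, one in each grading. The affine $X$-shift generator doubles into a pair of affine $X$-shifts in the two copies, and this produces the $a$-parameter of the first component of the decoration. The second component $b$ arises from the coupling scalar between the two gradings visible in the diagram on the right hand side of the theorem: it is precisely the shift parallel to the $Z$-grading that an affine Lagrangian subspace can carry, and by spider fusion it can always be absorbed into a single scalar attached to the opposite-colour copy of the spider. The $X$-spider is obtained by colour-swapping. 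The scaling gate, by the last generating equation of the presentation of $\LinRel_k$, interacts nontrivially with the orthogonal-complement functor only when $k$ is not prime, at which point it must be listed as a separate generator.

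The two things that must then be verified are: (i) the decorated spiders as drawn on the right really are the asserted affine Lagrangian subspaces of the doubled symplectic space, which is a direct calculation of their $Z$- and $X$-stabilizers as in the worked examples of Section \ref{sec:linrel}; and (ii) every affine Lagrangian relation is expressible in terms of these decorated spiders. For (ii), I would use that any affine Lagrangian subspace is a translate of a Lagrangian subspace, write the Lagrangian part using the doubled generators of $\LinRel_k$ from the CPM theorem, and then pull the affine translation out by decomposing it into a $Z$-parallel part and an $X$-parallel part, each of which is then absorbed into a decoration on a single spider by spider fusion and the presentation of $\Aff\Lin\Rel_k$.

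The main obstacle is keeping track of the interaction between the two gradings. In $\Lag\Rel$ the $Z$ and $X$ copies are glued by the symplectic form, equivalently by the orthogonal-complement conjugation in $\CPM(\LinRel_{\F_p}, (-)^\perp)$, and an affine shift in one grading interacts with scalar decorations in the other via the symplectic pairing. The nontrivial bookkeeping is to confirm that the $k^2$-decoration really behaves as the additive group (so that two stacked $(a,b)$ and $(a',b')$ spiders fuse to $(a+a', b+b')$ without a symplectic twist), which comes down to checking that the $b$-coupling scalar, passed through the orthogonal-complement functor used to build the doubling, returns unchanged up to the already-understood sign arising from bending wires across colours.
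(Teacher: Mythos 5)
First, note that the paper does not actually prove this statement: it is imported verbatim from \cite[Remark 4.17]{lagrel}, so there is no in-paper proof to match your argument against; what follows assesses your sketch on its own terms.

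Your overall route --- start from $\Lag\Rel_{\F_p}\cong\CPM(\LinRel_{\F_p},(-)^\perp)$, double the generators of affine relations, check the decorated spiders denote the claimed subspaces, then prove fullness --- is the right family of ideas, but there is a concrete conceptual error in how you account for the decoration. You describe the second component $b$ as ``the shift parallel to the $Z$-grading that an affine Lagrangian subspace can carry,'' to be absorbed by decomposing the affine translation into $Z$- and $X$-parallel parts. That is not what $b$ is. The paper calls $b$ the \emph{linear phase}: in the diagrammatic interpretation it is a scaling gate coupling the $Z$-spider on one grading to the $X$-spider on the other, in the Hilbert-space interpretation it is the quadratic phase $e^{\pi i\, m a^2/p}$, and symplectically it encodes a shear symplectomorphism, not a translation. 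The affine displacement of an affine Lagrangian subspace only ever feeds the \emph{first} components of the decorations (of both colours of spider). Because of this misidentification, your fullness step (ii) --- ``write the Lagrangian part using the doubled generators of $\LinRel_k$, then pull out the translation'' --- never explains why the $b$-decorations are needed at all, and it glosses over exactly the nontrivial point: morphisms of $\CPM(\LinRel_k,(-)^\perp)$ are traced-out purifications, not merely doubled maps that factor as a tensor product across the two gradings, and it is precisely when one rewrites those cross-grading couplings in spider form that the $b$-couplings appear (compare the Euler decomposition of the Fourier transform in the paper, which requires decorations $(0,\pm 1)$ and is not expressible with $b=0$ spiders and scalings alone).

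Two smaller points. Your explanation of why scaling gates must be added for non-prime fields (``the scaling gate interacts nontrivially with the orthogonal-complement functor only when $k$ is not prime'') is not the reason: over a prime field every scalar is an iterated sum of $1$ and hence derivable from the spiders, whereas over a general field it is not; the conjugation functor fixes scalars in either case. Finally, the ``main obstacle'' you flag --- additivity of the $(a,b)$ decoration under fusion without a symplectic twist --- is a genuine check, but it is secondary to, and cannot be completed without, first getting the semantics of $b$ right as a quadratic/shear datum rather than a displacement.
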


The spider fusion is pointwise:
$$
\begin{tikzpicture}
	\begin{pgfonlayer}{nodelayer}
		\node [style=none] (178) at (378.25, -0.425) {};
		\node [style=none] (179) at (377.25, -0.425) {};
		\node [style=none] (180) at (377.75, -0.425) {$\cdots$};
		\node [style=none] (181) at (377.25, -2.825) {};
		\node [style=Z] (182) at (377.75, -1.175) {$n,m$};
		\node [style=none] (183) at (378.75, -0.425) {};
		\node [style=none] (184) at (378.25, -2.825) {$\cdots$};
		\node [style=none] (185) at (377.75, -2.825) {};
		\node [style=Z] (186) at (378.25, -2.075) {$k,\ell$};
		\node [style=none] (187) at (378.75, -2.825) {};
		\node [style=none] (188) at (378, -1.5) {\reflectbox{$\ddots$}};
	\end{pgfonlayer}
	\begin{pgfonlayer}{edgelayer}
		\draw [in=-124, out=90] (181.center) to (182);
		\draw [in=-90, out=56] (182) to (178.center);
		\draw [in=124, out=-90] (179.center) to (182);
		\draw [in=-124, out=90] (185.center) to (186);
		\draw [in=90, out=-56] (186) to (187.center);
		\draw [in=-90, out=56] (186) to (183.center);
		\draw [bend left=45, looseness=1.25] (186) to (182);
		\draw [bend left=45, looseness=1.25] (182) to (186);
	\end{pgfonlayer}
\end{tikzpicture}
=
\begin{tikzpicture}
	\begin{pgfonlayer}{nodelayer}
		\node [style=none] (11) at (4, -0.5) {};
		\node [style=none] (12) at (3, -0.5) {};
		\node [style=none] (13) at (3.5, -0.5) {$\cdots$};
		\node [style=none] (14) at (2.5, -2) {};
		\node [style=none] (15) at (3.5, -1.25) {};
		\node [style=none] (16) at (4.5, -0.5) {};
		\node [style=none] (17) at (3.5, -2) {$\cdots$};
		\node [style=none] (18) at (3, -2) {};
		\node [style=Z] (19) at (3.5, -1.25) {$n+k,m+\ell$};
		\node [style=none] (20) at (4, -2) {};
	\end{pgfonlayer}
	\begin{pgfonlayer}{edgelayer}
		\draw [in=-150, out=90] (14.center) to (15);
		\draw [in=-90, out=56] (15) to (11.center);
		\draw [in=124, out=-90] (12.center) to (15);
		\draw [in=-124, out=90] (18.center) to (19);
		\draw [in=90, out=-56] (19) to (20.center);
		\draw [in=-90, out=30] (19) to (16.center);
	\end{pgfonlayer}
\end{tikzpicture}\ ,
\hspace*{1cm}
\begin{tikzpicture}
	\begin{pgfonlayer}{nodelayer}
		\node [style=none] (167) at (375.75, -0.425) {};
		\node [style=none] (168) at (374.75, -0.425) {};
		\node [style=none] (169) at (375.25, -0.425) {$\cdots$};
		\node [style=none] (170) at (374.75, -2.825) {};
		\node [style=X] (171) at (375.25, -1.175) {$n,m$};
		\node [style=none] (172) at (376.25, -0.425) {};
		\node [style=none] (173) at (375.75, -2.825) {$\cdots$};
		\node [style=none] (174) at (375.25, -2.825) {};
		\node [style=X] (175) at (375.75, -2.075) {$k,\ell$};
		\node [style=none] (176) at (376.25, -2.825) {};
		\node [style=none] (177) at (375.5, -1.5) {\reflectbox{$\ddots$}};
	\end{pgfonlayer}
	\begin{pgfonlayer}{edgelayer}
		\draw [in=-124, out=90] (170.center) to (171);
		\draw [in=-90, out=56] (171) to (167.center);
		\draw [in=124, out=-90] (168.center) to (171);
		\draw [in=-124, out=90] (174.center) to (175);
		\draw [in=90, out=-56] (175) to (176.center);
		\draw [in=-90, out=56] (175) to (172.center);
		\draw [bend left=45, looseness=1.25] (175) to (171);
		\draw [bend left=45, looseness=1.25] (171) to (175);
	\end{pgfonlayer}
\end{tikzpicture}
=
\begin{tikzpicture}
	\begin{pgfonlayer}{nodelayer}
		\node [style=none] (11) at (4, -0.5) {};
		\node [style=none] (12) at (3, -0.5) {};
		\node [style=none] (13) at (3.5, -0.5) {$\cdots$};
		\node [style=none] (14) at (2.5, -2) {};
		\node [style=none] (15) at (3.5, -1.25) {};
		\node [style=none] (16) at (4.5, -0.5) {};
		\node [style=none] (17) at (3.5, -2) {$\cdots$};
		\node [style=none] (18) at (3, -2) {};
		\node [style=X] (19) at (3.5, -1.25) {$n+k,m+\ell$};
		\node [style=none] (20) at (4, -2) {};
	\end{pgfonlayer}
	\begin{pgfonlayer}{edgelayer}
		\draw [in=-150, out=90] (14.center) to (15);
		\draw [in=-90, out=56] (15) to (11.center);
		\draw [in=124, out=-90] (12.center) to (15);
		\draw [in=-124, out=90] (18.center) to (19);
		\draw [in=90, out=-56] (19) to (20.center);
		\draw [in=-90, out=30] (19) to (16.center);
	\end{pgfonlayer}
\end{tikzpicture}
$$
Call the first component of the phase group the {\bf affine phase} and the second component the {\bf linear phase}.  The white spider corresponds to the $Z$-basis and the grey spider corresponds to the $X$-basis.

In Hilbert spaces, the spiders $\ell \to k$ are interpreted as follows:
\begin{align*}
\left\llbracket\
\begin{tikzpicture}
	\begin{pgfonlayer}{nodelayer}
		\node [style=none] (0) at (21, 4.5) {};
		\node [style=none] (1) at (22, 4.5) {};
		\node [style=none] (2) at (21, 3) {};
		\node [style=none] (3) at (22, 3) {};
		\node [style=none] (4) at (21.5, 3.75) {};
		\node [style=Z] (5) at (21.5, 3.75) {$n,m$};
		\node [style=none] (6) at (21.525, 4.425) {$\cdots$};
		\node [style=none] (7) at (21.525, 3.075) {$\cdots$};
	\end{pgfonlayer}
	\begin{pgfonlayer}{edgelayer}
		\draw [in=150, out=-90, looseness=0.75] (0.center) to (4.center);
		\draw [in=90, out=-150, looseness=0.75] (4.center) to (2.center);
		\draw [in=-30, out=90, looseness=0.75] (3.center) to (4.center);
		\draw [in=-90, out=30, looseness=0.75] (4.center) to (1.center);
	\end{pgfonlayer}
\end{tikzpicture}\
\right\rrbracket
= &
\sum_{a=0}^{p-1}  e^{\pi\cdot i /p (n\cdot a+m\cdot a^2)} |a, \ldots, a \rangle \langle a, \ldots, a|\\
\left\llbracket\
\begin{tikzpicture}
	\begin{pgfonlayer}{nodelayer}
		\node [style=none] (0) at (21, 4.5) {};
		\node [style=none] (1) at (22, 4.5) {};
		\node [style=none] (2) at (21, 3) {};
		\node [style=none] (3) at (22, 3) {};
		\node [style=none] (4) at (21.5, 3.75) {};
		\node [style=X] (5) at (21.5, 3.75) {$n,m$};
		\node [style=none] (6) at (21.525, 4.425) {$\cdots$};
		\node [style=none] (7) at (21.525, 3.075) {$\cdots$};
	\end{pgfonlayer}
	\begin{pgfonlayer}{edgelayer}
		\draw [in=150, out=-90, looseness=0.75] (0.center) to (4.center);
		\draw [in=90, out=-150, looseness=0.75] (4.center) to (2.center);
		\draw [in=-30, out=90, looseness=0.75] (3.center) to (4.center);
		\draw [in=-90, out=30, looseness=0.75] (4.center) to (1.center);
	\end{pgfonlayer}
\end{tikzpicture}\
\right\rrbracket
= &
\sum_{a=0}^{p-1}  e^{\pi\cdot i /p (n\cdot a+m\cdot a^2)} \mathcal{F}^{\otimes k}|a, \ldots, a \rangle \langle a, \ldots, a|(\mathcal{F}^\dag)^{\otimes \ell} \\
\end{align*}
The scaling gate, which is a derived generator in this setting is interpreted as:
$$
\left\llbracket \
\begin{tikzpicture}
	\begin{pgfonlayer}{nodelayer}
		\node [style=none] (0) at (23, 4.5) {};
		\node [style=none] (1) at (23, 3) {};
		\node [style=scalar] (2) at (23, 3.75) {$b$};
	\end{pgfonlayer}
	\begin{pgfonlayer}{edgelayer}
		\draw (1.center) to (2);
		\draw (2) to (0.center);
	\end{pgfonlayer}
\end{tikzpicture}
\ \right\rrbracket
=
\sum_{a=0}^{p-1} |a\cdot b \rangle \langle a |
$$

The Fourier transform is derived:
$$
\left\llbracket\
\begin{tikzpicture}
	\begin{pgfonlayer}{nodelayer}
		\node [style=none] (0) at (1.25, -1) {};
		\node [style=map] (1) at (1.25, -1.5) {$\mathcal F$};
		\node [style=none] (2) at (1.25, -2) {};
	\end{pgfonlayer}
	\begin{pgfonlayer}{edgelayer}
		\draw (2.center) to (1);
		\draw (1) to (0.center);
	\end{pgfonlayer}
\end{tikzpicture}\
\right\rrbracket
=
\begin{tikzpicture}
	\begin{pgfonlayer}{nodelayer}
		\node [style=none] (0) at (0.5, 1) {};
		\node [style=none] (1) at (0.5, -0.25) {};
		\node [style=none] (2) at (1, -0.25) {};
		\node [style=none] (3) at (1, 1) {};
		\node [style=s] (4) at (1, 0.5) {};
		\node [style=none] (5) at (0.5, 0.5) {};
	\end{pgfonlayer}
	\begin{pgfonlayer}{edgelayer}
		\draw (4) to (3.center);
		\draw [in=90, out=-90] (4) to (1.center);
		\draw [in=-90, out=90] (2.center) to (5.center);
		\draw (5.center) to (0.center);
	\end{pgfonlayer}
\end{tikzpicture}
=
\begin{tikzpicture}[xscale=-1]
	\begin{pgfonlayer}{nodelayer}
		\node [style=X] (0) at (19.5, 1.25) {};
		\node [style=Z] (1) at (18, 4.25) {};
		\node [style=none] (2) at (18, 4.75) {};
		\node [style=none] (3) at (19.5, 4.75) {};
		\node [style=none] (4) at (18, 0.75) {};
		\node [style=none] (5) at (19.5, 0.75) {};
		\node [style=X] (6) at (19.5, 1.25) {};
		\node [style=Z] (7) at (19.5, 4.25) {};
		\node [style=X] (8) at (18, 1.25) {};
		\node [style=Z] (9) at (18, 4.25) {};
		\node [style=X] (10) at (19.5, 1.25) {};
		\node [style=X] (11) at (18, 1.25) {};
		\node [style=X] (12) at (19.5, 1.25) {};
		\node [style=Z] (13) at (19.5, 4.25) {};
		\node [style=s] (14) at (17.75, 3) {};
		\node [style=s] (15) at (18.75, 3) {};
		\node [style=s] (16) at (19.75, 3) {};
		\node [style=s] (17) at (18.25, 3) {};
	\end{pgfonlayer}
	\begin{pgfonlayer}{edgelayer}
		\draw (2.center) to (1);
		\draw (5.center) to (0);
		\draw [bend right=45] (6) to (7);
		\draw [in=135, out=-135, looseness=1.25] (9) to (8);
		\draw (3.center) to (7);
		\draw (4.center) to (8);
		\draw [in=-120, out=15, looseness=0.75] (11) to (13);
		\draw [in=90, out=-105] (9) to (14);
		\draw [in=90, out=-45, looseness=0.75] (9) to (15);
		\draw [in=90, out=-90] (14) to (11);
		\draw [in=-90, out=120, looseness=0.75] (6) to (15);
		\draw [in=-15, out=90] (12) to (9);
		\draw [in=-90, out=150, looseness=0.75] (12) to (17);
		\draw [in=285, out=90] (17) to (9);
		\draw [in=-90, out=75, looseness=0.75] (12) to (16);
		\draw [in=-75, out=90] (16) to (13);
	\end{pgfonlayer}
\end{tikzpicture}
=
\begin{tikzpicture}
	\begin{pgfonlayer}{nodelayer}
		\node [style=none] (0) at (0, 0.75) {};
		\node [style=none] (1) at (0.75, 0.75) {};
		\node [style=none] (2) at (0, 3.25) {};
		\node [style=none] (3) at (0.75, 3.25) {};
		\node [style=Z] (4) at (0.75, 1.25) {};
		\node [style=X] (5) at (0, 1.75) {};
		\node [style=Z] (6) at (0.75, 2.25) {};
		\node [style=X] (7) at (0, 2.75) {};
		\node [style=Z] (8) at (0, 2.25) {};
		\node [style=X] (9) at (0.75, 1.75) {};
	\end{pgfonlayer}
	\begin{pgfonlayer}{edgelayer}
		\draw (4) to (5);
		\draw (6) to (7);
		\draw (8) to (9);
		\draw (1.center) to (4);
		\draw (4) to (9);
		\draw (9) to (6);
		\draw (6) to (3.center);
		\draw (2.center) to (7);
		\draw (7) to (8);
		\draw (8) to (5);
		\draw (5) to (0.center);
	\end{pgfonlayer}
\end{tikzpicture}
=
\left\llbracket
\begin{tikzpicture}
	\begin{pgfonlayer}{nodelayer}
		\node [style=none] (0) at (1.25, 0) {};
		\node [style=none] (1) at (1.25, -3.5) {};
		\node [style=Z] (2) at (1.25, -2.75) {$0,1$};
		\node [style=Z] (3) at (1.25, -0.75) {$0,1$};
		\node [style=X] (4) at (1.25, -1.75) {$0,-1$};
	\end{pgfonlayer}
	\begin{pgfonlayer}{edgelayer}
		\draw (1.center) to (2);
		\draw (2) to (4);
		\draw (4) to (3);
		\draw (3) to (0.center);
	\end{pgfonlayer}
\end{tikzpicture}
\right\rrbracket
$$

In the ZX-calculus literature, this decomposition of the Fourier transform is known as {\it Euler decomposition} \cite{duncan2009graph}.

This view of odd prime dimensional stabilizer circuits in terms of the ZX-calculus means that the phase groups for the $Z$ and $X$-spiders are the torus $(\Z/p\Z)^2$  as noted \cite[Page 166]{ranchin2016alternative}.  This is in contrast to the qubit case where the phase groups are $\Z/4\Z$; which \cite{coecke2011phase} point out as a crucial difference between Spekkens' toy model and qubit stabilizer theory.
However over $\F_2$ when the phases are restricted to the subgroup
$$\Z/2\Z\subseteq \Z/4\Z; \ a\mapsto 2a$$
and for odd prime $p$, over $\F_p$ when the phases are restricted to the subgroup
$$\Z/p\Z \subseteq (\Z/p\Z)^2; \ a\mapsto (a,0)$$
then both intepretations coincide. Therefore, we can faithfully interpret qubit stabilizer circuits generated by controlled-not gates, Z gates,  X gates as well as the states $|+\rangle$ and postselections $|0\rangle$, $\langle + |$ and $\langle 0|$ in affine Lagrangian relations.

Up until now, we have also neglected to relate the symplectic picture to the \dag-structure of $\FHilb$.  This will be instrumental in the following section:
\begin{definition}
\label{def:conj}
There is a monoidal conjugation functor $\bar{(-)}:\Aff\Lag\Rel_k\to \Aff\Lag\Rel_k$ given by:
$$
\begin{tikzpicture}
	\begin{pgfonlayer}{nodelayer}
		\node [style=none] (0) at (21, 5) {};
		\node [style=none] (1) at (22, 5) {};
		\node [style=none] (2) at (21, 2.5) {};
		\node [style=none] (3) at (22, 2.5) {};
		\node [style=none] (4) at (21.5, 3.75) {};
		\node [style=Z] (40) at (21.5, 3.75) {$n,m$};
		\node [style=none] (5) at (21.5, 4.5) {$\cdots$};
		\node [style=none] (6) at (21.5, 3) {$\cdots$};
		\node [style=none] (7) at (21.5, 4.75) {};
		\node [style=none] (8) at (21.5, 2.75) {};
	\end{pgfonlayer}
	\begin{pgfonlayer}{edgelayer}
		\draw [in=150, out=-90, looseness=0.75] (0.center) to (4);
		\draw [in=90, out=-150, looseness=0.75] (4) to (2.center);
		\draw [in=-30, out=90, looseness=0.75] (3.center) to (4);
		\draw [in=-90, out=30, looseness=0.75] (4) to (1.center);
	\end{pgfonlayer}
\end{tikzpicture}
\mapsto
\begin{tikzpicture}
	\begin{pgfonlayer}{nodelayer}
		\node [style=none] (0) at (20.75, 5) {};
		\node [style=none] (1) at (22.25, 5) {};
		\node [style=none] (2) at (20.75, 2.5) {};
		\node [style=none] (3) at (22.25, 2.5) {};
		\node [style=none] (4) at (21.5, 3.75) {};
		\node [style=none] (5) at (21.5, 4.75) {$\cdots$};
		\node [style=none] (6) at (21.5, 2.75) {$\cdots$};
		\node [style=Z] (9) at (21.5, 3.75) {$-n,-m$};
	\end{pgfonlayer}
	\begin{pgfonlayer}{edgelayer}
		\draw [in=150, out=-90, looseness=0.75] (0.center) to (4);
		\draw [in=90, out=-150, looseness=0.75] (4) to (2.center);
		\draw [in=-30, out=90, looseness=0.75] (3.center) to (4);
		\draw [in=-90, out=30, looseness=0.75] (4) to (1.center);
	\end{pgfonlayer}
\end{tikzpicture}\ ,
\hspace*{1cm}
\begin{tikzpicture}
	\begin{pgfonlayer}{nodelayer}
		\node [style=none] (0) at (21, 5) {};
		\node [style=none] (1) at (22, 5) {};
		\node [style=none] (2) at (21, 2.5) {};
		\node [style=none] (3) at (22, 2.5) {};
		\node [style=none] (4) at (21.5, 3.75) {};
		\node [style=X] (40) at (21.5, 3.75) {$n,m$};
		\node [style=none] (5) at (21.5, 4.5) {$\cdots$};
		\node [style=none] (6) at (21.5, 3) {$\cdots$};
		\node [style=none] (7) at (21.5, 4.75) {};
		\node [style=none] (8) at (21.5, 2.75) {};
	\end{pgfonlayer}
	\begin{pgfonlayer}{edgelayer}
		\draw [in=150, out=-90, looseness=0.75] (0.center) to (4);
		\draw [in=90, out=-150, looseness=0.75] (4) to (2.center);
		\draw [in=-30, out=90, looseness=0.75] (3.center) to (4);
		\draw [in=-90, out=30, looseness=0.75] (4) to (1.center);
	\end{pgfonlayer}
\end{tikzpicture}
\mapsto
\begin{tikzpicture}
	\begin{pgfonlayer}{nodelayer}
		\node [style=none] (0) at (20.75, 5) {};
		\node [style=none] (1) at (22.25, 5) {};
		\node [style=none] (2) at (20.75, 2.5) {};
		\node [style=none] (3) at (22.25, 2.5) {};
		\node [style=none] (4) at (21.5, 3.75) {};
		\node [style=none] (5) at (21.5, 4.75) {$\cdots$};
		\node [style=none] (6) at (21.5, 2.75) {$\cdots$};
		\node [style=none] (9) at (21.5, 3.75) {};
		\node [style=X] (90) at (21.5, 3.75) {$n,-m$};
	\end{pgfonlayer}
	\begin{pgfonlayer}{edgelayer}
		\draw [in=150, out=-90, looseness=0.75] (0.center) to (4);
		\draw [in=90, out=-150, looseness=0.75] (4) to (2.center);
		\draw [in=-30, out=90, looseness=0.75] (3.center) to (4);
		\draw [in=-90, out=30, looseness=0.75] (4) to (1.center);
	\end{pgfonlayer}
\end{tikzpicture}\ ,
\hspace*{1cm}
\begin{tikzpicture}
	\begin{pgfonlayer}{nodelayer}
		\node [style=none] (0) at (23, 4.5) {};
		\node [style=none] (1) at (23, 3) {};
		\node [style=scalar] (2) at (23, 3.75) {$a$};
	\end{pgfonlayer}
	\begin{pgfonlayer}{edgelayer}
		\draw (1.center) to (2);
		\draw (2) to (0.center);
	\end{pgfonlayer}
\end{tikzpicture}
\mapsto
\begin{tikzpicture}
	\begin{pgfonlayer}{nodelayer}
		\node [style=none] (0) at (23, 4.5) {};
		\node [style=none] (1) at (23, 3) {};
		\node [style=scalar] (2) at (23, 3.75) {$a$};
	\end{pgfonlayer}
	\begin{pgfonlayer}{edgelayer}
		\draw (1.center) to (2);
		\draw (2) to (0.center);
	\end{pgfonlayer}
\end{tikzpicture}
$$
Define the  dagger functor to be the conjugate converse $$(-)^\dag:=(\ \bar{(-)}\ )^*:\Aff\Lag\Rel_k^\op\to\Aff\Lag\Rel_k$$
\end{definition}

Concretely, the conjugation functor negates the elements in the $Z$-grading of the vector space.

In the case of $k=\F_p$ for $p$ an odd prime, the conjugation is transported to the complex conjugation along $\Aff\Lag\Rel_{\F_p} \cong \Stab_p$; and the dagger to the Hermitian adjoint:

\begin{lemma}
For odd prime $p$, $\Aff\Lag\Rel_{\F_p}$ and  $\Stab_p$ are isomorphic as \dag-compact closed categories.
\end{lemma}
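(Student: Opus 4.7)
The plan is to upgrade the existing prop isomorphism $\Phi \colon \Aff\Lag\Rel_{\F_p} \xrightarrow{\sim} \Stab_p$ from \cite{lagrel} to a $\dag$-compact closed isomorphism. Since this is already an iso of strict symmetric monoidal props, and in both categories the compact-closed structure is canonically built from the symmetric monoidal structure together with the cups and caps obtained by bending wires on the $Z$-spiders (on the left, the diagonal affine Lagrangian subspaces; on the right, the maximally entangled states and Bell effects), the compact-closed structure is automatically preserved. So the only nontrivial checks concern the conjugation functor $\bar{(-)}$ of Definition \ref{def:conj} and the induced dagger.

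First I would verify that $\Phi$ intertwines the formal conjugation $\bar{(-)}$ on $\Aff\Lag\Rel_{\F_p}$ with entry-wise complex conjugation of matrices in $\Stab_p$. Since both functors are strict symmetric monoidal, it suffices to check this on the generators. For the scaling gates, the matrix $\sum_a |a\cdot b\rangle\langle a|$ is a real permutation matrix, so conjugation is trivial on both sides. For the $Z$-spiders, the interpretation $\sum_a e^{\pi i/p(na+ma^2)}|a,\ldots,a\rangle\langle a,\ldots,a|$ has its phases negated under complex conjugation, which matches the assignment $Z_{n,m}\mapsto Z_{-n,-m}$.

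The main obstacle is the $X$-spider case, since Fourier transforms interact nontrivially with complex conjugation. The key identity is $\overline{\mathcal{F}} = \mathcal{F}^\dag$, so conjugating the interpretation
$$\sum_a e^{\pi i/p(na+ma^2)} \mathcal{F}^{\otimes k}|a,\ldots,a\rangle\langle a,\ldots,a|(\mathcal{F}^\dag)^{\otimes\ell}$$
yields
$$\sum_a e^{-\pi i/p(na+ma^2)} (\mathcal{F}^\dag)^{\otimes k}|a,\ldots,a\rangle\langle a,\ldots,a|\mathcal{F}^{\otimes\ell}.$$
Reindexing $a \mapsto -a$ and using $\mathcal{F}^\dag|{-a}\rangle = \mathcal{F}|a\rangle$ (which follows immediately from the explicit formula for $\mathcal{F}$) restores the ordering of Fourier transforms while flipping the sign of the quadratic exponent only, giving $\sum_a e^{\pi i/p(na - ma^2)}\mathcal{F}^{\otimes k}|a,\ldots,a\rangle\langle a,\ldots,a|(\mathcal{F}^\dag)^{\otimes\ell}$. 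This precisely matches the assignment $X_{n,m}\mapsto X_{n,-m}$.

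Finally, the dagger on $\Aff\Lag\Rel_{\F_p}$ is defined as the composite $(\,\overline{(-)}\,)^{*}$ of conjugation and the compact-closed transpose, while the dagger on $\Stab_p$ is the Hermitian adjoint, i.e.\ the conjugate of the transpose. Since $\Phi$ preserves both the compact-closed structure (hence the transpose, obtained by bending wires with cups and caps) and, as shown above, the conjugation, it intertwines the two daggers. Hence $\Phi$ is a $\dag$-compact closed isomorphism.
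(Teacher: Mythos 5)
Your proof is correct, but it takes a different route from the paper's. The paper argues via the normal form of stabilizer states: every stabilizer state is $C|0\rangle^{\otimes n}$ for a Clifford $C$, the affine symplectomorphisms corresponding to the Clifford generators are unitary with respect to the dagger of Definition \ref{def:conj}, and $|0\rangle$ is an isometry on both sides; this verifies dagger-preservation directly on a generating set adapted to the stabilizer formalism. You instead work with the phased-spider generating set of $\Aff\Lag\Rel_{\F_p}$ and first show that the prop isomorphism intertwines the formal conjugation $\bar{(-)}$ with entrywise complex conjugation --- the only nontrivial case being the $X$-spider, which you handle correctly via $\bar{\mathcal F}=\mathcal F^\dag$ and the reindexing $a\mapsto -a$ (using $\mathcal F^\dag|-a\rangle=\mathcal F|a\rangle$), recovering exactly the assignment $X_{n,m}\mapsto X_{n,-m}$ --- and then obtain the dagger for free as conjugation composed with the compact transpose. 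What your route buys is an explicit verification that the two conjugations correspond, which is strictly more information than the paper's proof records and is exactly what is needed later when the $\CPM$ construction is taken with respect to $\bar{(-)}$ on one side and complex conjugation on the other; what the paper's route buys is brevity, leaning on unitarity of Cliffords rather than spider-level phase computations. One small point to make explicit in your argument: the compact structure is preserved not ``automatically'' but because the chosen cups and caps on both sides are images of specific ($Z$-coloured) spider generators, which the prop isomorphism matches up (and, per the paper's caveat, one must bend and unbend with the same colour to avoid the stray $-1$, which you implicitly do); with that noted, your deduction that the Hermitian adjoint corresponds to $(\,\bar{(-)}\,)^*$ goes through.
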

\begin{proof}
We know that all stabilizer states are of the form $C|0\rangle^{\otimes n}$ for $C$ Clifford operator.
The affine symplectomorphisms corresponding to the generators of the Clifford group can be easily verified to be unitaries with respect to the dagger.  Moreover, $|0\rangle$ is an isometry in $\Stab_p$ with the Hermitian adjoint just as 
$\begin{tikzpicture}
	\begin{pgfonlayer}{nodelayer}
		\node [style=X] (389) at (205, -2) {};
		\node [style=none] (390) at (205, -1.75) {};
		\node [style=Z] (391) at (204.5, -2) {};
		\node [style=none] (392) at (204.5, -1.75) {};
	\end{pgfonlayer}
	\begin{pgfonlayer}{edgelayer}
		\draw (389) to (390.center);
		\draw (391) to (392.center);
	\end{pgfonlayer}
\end{tikzpicture}$ is an isometry with respect to the dagger.
\end{proof}
\begin{remark}
In \cite{passive} and \cite{network}, they show that certain idealized  electrical circuits can be interpreted in terms of affine Lagrangian relations over  the field of real rational functions $\R(s)$.  This is the smallest field in which the polynomial ring $\R[s]$ embeds.

 We restate the interpretations of electrical circuits in terms of phased spiders for affine Lagrangian relations. An idealized junction, a resistor with resistance $r$, a capacitor with capacitance $c$ and an inductor with inductance $\ell$ are interpreted as:

$$
\left\llbracket\ 
\begin{tikzpicture}
	\begin{pgfonlayer}{nodelayer}
		\node [style=none] (0) at (21, 4.75) {};
		\node [style=none] (1) at (22, 4.75) {};
		\node [style=none] (2) at (21, 2.75) {};
		\node [style=none] (3) at (22, 2.75) {};
		\node [style=dot] (4) at (21.5, 3.75) {};
		\node [style=none] (5) at (21.5, 4.5) {$\cdots$};
		\node [style=none] (6) at (21.5, 3) {$\cdots$};
	\end{pgfonlayer}
	\begin{pgfonlayer}{edgelayer}
		\draw [in=150, out=-90, looseness=0.75] (0.center) to (4);
		\draw [in=90, out=-150, looseness=0.75] (4) to (2.center);
		\draw [in=-30, out=90, looseness=0.75] (3.center) to (4);
		\draw [in=-90, out=30, looseness=0.75] (4) to (1.center);
	\end{pgfonlayer}
\end{tikzpicture}
\ \right\rrbracket
=
\begin{tikzpicture}
	\begin{pgfonlayer}{nodelayer}
		\node [style=none] (0) at (21, 4.75) {};
		\node [style=none] (1) at (22, 4.75) {};
		\node [style=none] (2) at (21, 2.75) {};
		\node [style=none] (3) at (22, 2.75) {};
		\node [style=Z] (4) at (21.5, 3.75) {};
		\node [style=none] (5) at (21.5, 4.5) {$\cdots$};
		\node [style=none] (6) at (21.5, 3) {$\cdots$};
	\end{pgfonlayer}
	\begin{pgfonlayer}{edgelayer}
		\draw [in=150, out=-90, looseness=0.75] (0.center) to (4);
		\draw [in=90, out=-150, looseness=0.75] (4) to (2.center);
		\draw [in=-30, out=90, looseness=0.75] (3.center) to (4);
		\draw [in=-90, out=30, looseness=0.75] (4) to (1.center);
	\end{pgfonlayer}
\end{tikzpicture}
\ ,\hspace*{.2cm}
\left\llbracket
\tikz \draw (0,0) to[R=$r$] (0,2);
\ \right\rrbracket
=
\begin{tikzpicture}
	\begin{pgfonlayer}{nodelayer}
		\node [style=none] (0) at (21, 4.75) {};
		\node [style=none] (2) at (21, 2.75) {};
		\node [style=X] (4) at (21, 3.75) {$0,r$};
	\end{pgfonlayer}
	\begin{pgfonlayer}{edgelayer}
		\draw (0.center) to (4);
		\draw (4) to (2.center);
	\end{pgfonlayer}
\end{tikzpicture}
\ ,\hspace*{.2cm}
\left\llbracket
\tikz \draw (0,0) to[C=$c$] (0,2);
\ \right\rrbracket
=
\begin{tikzpicture}
	\begin{pgfonlayer}{nodelayer}
		\node [style=none] (0) at (21, 4.75) {};
		\node [style=none] (2) at (21, 2.75) {};
		\node [style=X] (4) at (21, 3.75) {$0,cs$};
	\end{pgfonlayer}
	\begin{pgfonlayer}{edgelayer}
		\draw (0.center) to (4);
		\draw (4) to (2.center);
	\end{pgfonlayer}
\end{tikzpicture}
\ ,\hspace*{.2cm}
\left\llbracket
\tikz \draw (0,0) to[L=$\ell$] (0,2);
\ \right\rrbracket
=
\begin{tikzpicture}
	\begin{pgfonlayer}{nodelayer}
		\node [style=none] (0) at (21, 4.75) {};
		\node [style=none] (2) at (21, 2.75) {};
		\node [style=Z] (4) at (21, 3.75) {$0,-1/(\ell s)$};
	\end{pgfonlayer}
	\begin{pgfonlayer}{edgelayer}
		\draw (0.center) to (4);
		\draw (4) to (2.center);
	\end{pgfonlayer}
\end{tikzpicture}
$$

A voltage source with a fixed voltage $v$ and current sources with fixed current $I$ are interpreted as:

$$
\left\llbracket
\begin{tikzpicture}
	\begin{pgfonlayer}{nodelayer}
		\node [style=none] (0) at (0, 2) {};
		\node [style=vsourceAMshape,rotate=-90,fill=white] (10) at (0, 1) {};
		\node [style=none] (1) at (0, 1) {};
		\node [style=none] (2) at (0, 0) {};
	\end{pgfonlayer}
	\begin{pgfonlayer}{edgelayer}
		\draw (2.center) to (1);
		\draw (1) to (0.center);
		\node [style=none] (3) at (-.7, 1) {$v$};
	\end{pgfonlayer}
\end{tikzpicture}
\hspace*{.3cm}
\right\rrbracket
=
\begin{tikzpicture}
	\begin{pgfonlayer}{nodelayer}
		\node [style=none] (0) at (21, 4.75) {};
		\node [style=none] (2) at (21, 2.75) {};
		\node [style=X] (4) at (21, 3.75) {$v, 0$};
	\end{pgfonlayer}
	\begin{pgfonlayer}{edgelayer}
		\draw (0.center) to (4);
		\draw (4) to (2.center);
	\end{pgfonlayer}
\end{tikzpicture}
\ ,\hspace*{.5cm}
\left\llbracket
\begin{tikzpicture}
	\begin{pgfonlayer}{nodelayer}
		\node [style=none] (0) at (0, 2) {};
		\node [style=isourceAMshape,rotate=-90,fill=white] (10) at (0, 1) {};
		\node [style=none] (1) at (0, 1) {};
		\node [style=none] (2) at (0, 0) {};
		\node [style=none] (3) at (-.7, 1) {$I$};
	\end{pgfonlayer}
	\begin{pgfonlayer}{edgelayer}
		\draw (2.center) to (1);
		\draw (1) to (0.center);
	\end{pgfonlayer}
\end{tikzpicture}
\hspace*{.3cm}
\right\rrbracket
=
\begin{tikzpicture}
	\begin{pgfonlayer}{nodelayer}
		\node [style=Z] (13) at (26.5, 4.25) {$1/I,0$};
		\node [style=Z] (14) at (26.5, 2.75) {$I,0$};
		\node [style=none] (15) at (26.5, 5.2) {};
		\node [style=none] (16) at (26.5, 1.8) {};
	\end{pgfonlayer}
	\begin{pgfonlayer}{edgelayer}
		\draw (16.center) to (14);
		\draw (13) to (15.center);
	\end{pgfonlayer}
\end{tikzpicture}
$$
\end{remark}

\section{String diagrams for measurement and coisotropic relations}
\label{sec:coisotrel}
\label{sec:coisot}
In this section we show that by only requiring that the morphisms are affine {\em coisotropic} subspaces   (subspaces $V$ so that $V^\omega \subseteq V$) instead of affine Lagrangian subspaces (where $V^\omega= V$), we can capture the maximally mixed state/discarding; with which we can recover state preparation and measurement compositionally.

\begin{remark}
The cozero linear relation $2n\to 0$ is an  isotropic subspace since:
$$
\left(
\begin{tikzpicture}
	\begin{pgfonlayer}{nodelayer}
		\node [style=X] (0) at (0.5, 0.5) {};
		\node [style=X] (1) at (1, 0.5) {};
		\node [style=none] (2) at (0.5, 0) {};
		\node [style=none] (3) at (1, 0) {};
	\end{pgfonlayer}
	\begin{pgfonlayer}{edgelayer}
		\draw (1) to (3.center);
		\draw (0) to (2.center);
	\end{pgfonlayer}
\end{tikzpicture}
\right)^\omega
=
\begin{tikzpicture}
	\begin{pgfonlayer}{nodelayer}
		\node [style=Z] (0) at (0.5, 0.5) {};
		\node [style=Z] (1) at (1, 0.5) {};
		\node [style=none] (2) at (0.5, 0) {};
		\node [style=none] (3) at (1, 0) {};
		\node [style=s] (4) at (1, 0) {};
		\node [style=none] (5) at (1, -0.75) {};
		\node [style=none] (7) at (0.5, -0.75) {};
	\end{pgfonlayer}
	\begin{pgfonlayer}{edgelayer}
		\draw (1) to (3.center);
		\draw (0) to (2.center);
		\draw [in=270, out=90] (7.center) to (4.center);
		\draw [in=90, out=-90] (2.center) to (5.center);
	\end{pgfonlayer}
\end{tikzpicture}
=
\begin{tikzpicture}
	\begin{pgfonlayer}{nodelayer}
		\node [style=Z] (0) at (0.5, 0.5) {};
		\node [style=Z] (1) at (1, 0.5) {};
		\node [style=none] (2) at (0.5, 0) {};
		\node [style=none] (3) at (1, 0) {};
	\end{pgfonlayer}
	\begin{pgfonlayer}{edgelayer}
		\draw (1) to (3.center);
		\draw (0) to (2.center);
	\end{pgfonlayer}
\end{tikzpicture}
\supset
\begin{tikzpicture}
	\begin{pgfonlayer}{nodelayer}
		\node [style=X] (0) at (0.5, 0.5) {};
		\node [style=X] (1) at (1, 0.5) {};
		\node [style=none] (2) at (0.5, 0) {};
		\node [style=none] (3) at (1, 0) {};
	\end{pgfonlayer}
	\begin{pgfonlayer}{edgelayer}
		\draw (1) to (3.center);
		\draw (0) to (2.center);
	\end{pgfonlayer}
\end{tikzpicture}
$$

And dually the codiscard linear relation $0\to 2n$ is coisotropic  since:
$$
\left(
\begin{tikzpicture}[yscale=-1]
	\begin{pgfonlayer}{nodelayer}
		\node [style=Z] (0) at (0.5, 0.5) {};
		\node [style=Z] (1) at (1, 0.5) {};
		\node [style=none] (2) at (0.5, 0) {};
		\node [style=none] (3) at (1, 0) {};
	\end{pgfonlayer}
	\begin{pgfonlayer}{edgelayer}
		\draw (1) to (3.center);
		\draw (0) to (2.center);
	\end{pgfonlayer}
\end{tikzpicture}
\right)^\omega
=
\begin{tikzpicture}[yscale=-1]
	\begin{pgfonlayer}{nodelayer}
		\node [style=X] (0) at (0.5, 0.5) {};
		\node [style=X] (1) at (1, 0.5) {};
		\node [style=none] (2) at (0.5, 0) {};
		\node [style=none] (3) at (1, 0) {};
		\node [style=s] (4) at (1, 0) {};
		\node [style=none] (5) at (1, -0.75) {};
		\node [style=none] (7) at (0.5, -0.75) {};
	\end{pgfonlayer}
	\begin{pgfonlayer}{edgelayer}
		\draw (1) to (3.center);
		\draw (0) to (2.center);
		\draw [in=270, out=90] (7.center) to (4.center);
		\draw [in=90, out=-90] (2.center) to (5.center);
	\end{pgfonlayer}
\end{tikzpicture}
=
\begin{tikzpicture}[yscale=-1]
	\begin{pgfonlayer}{nodelayer}
		\node [style=X] (0) at (0.5, 0.5) {};
		\node [style=X] (1) at (1, 0.5) {};
		\node [style=none] (2) at (0.5, 0) {};
		\node [style=none] (3) at (1, 0) {};
	\end{pgfonlayer}
	\begin{pgfonlayer}{edgelayer}
		\draw (1) to (3.center);
		\draw (0) to (2.center);
	\end{pgfonlayer}
\end{tikzpicture}
\supset
\begin{tikzpicture}[yscale=-1]
	\begin{pgfonlayer}{nodelayer}
		\node [style=Z] (0) at (0.5, 0.5) {};
		\node [style=Z] (1) at (1, 0.5) {};
		\node [style=none] (2) at (0.5, 0) {};
		\node [style=none] (3) at (1, 0) {};
	\end{pgfonlayer}
	\begin{pgfonlayer}{edgelayer}
		\draw (1) to (3.center);
		\draw (0) to (2.center);
	\end{pgfonlayer}
\end{tikzpicture}
$$

Given a matrix $A:n\to m$ in $\cb_k$, we can recover the image and kernel of $A$ diagrammatically:
$$
\begin{tikzpicture}
	\begin{pgfonlayer}{nodelayer}
		\node [style=map] (3) at (21, 6) {$\im A$};
		\node [style=none] (5) at (21, 7) {};
	\end{pgfonlayer}
	\begin{pgfonlayer}{edgelayer}
		\draw (3) to (5.center);
	\end{pgfonlayer}
\end{tikzpicture}
=
\begin{tikzpicture}
	\begin{pgfonlayer}{nodelayer}
		\node [style=map] (0) at (20, 6) {$A$};
		\node [style=Z] (1) at (20, 5) {};
		\node [style=none] (2) at (20, 7) {};
	\end{pgfonlayer}
	\begin{pgfonlayer}{edgelayer}
		\draw (1) to (0);
		\draw (0) to (2.center);
	\end{pgfonlayer}
\end{tikzpicture}
\ , \hspace*{1cm}
\begin{tikzpicture}
	\begin{pgfonlayer}{nodelayer}
		\node [style=map] (3) at (21, 6) {$\ker A $};
		\node [style=none] (5) at (21, 7) {};
	\end{pgfonlayer}
	\begin{pgfonlayer}{edgelayer}
		\draw (3) to (5.center);
	\end{pgfonlayer}
\end{tikzpicture}
=
\begin{tikzpicture}
	\begin{pgfonlayer}{nodelayer}
		\node [style=map] (0) at (20, 6) {$A^*$};
		\node [style=X] (1) at (20, 5) {};
		\node [style=none] (2) at (20, 7) {};
	\end{pgfonlayer}
	\begin{pgfonlayer}{edgelayer}
		\draw (1) to (0);
		\draw (0) to (2.center);
	\end{pgfonlayer}
\end{tikzpicture}
$$
Therefore, if we ask that $A$ is a Lagrangian relation rather than matrix, we find that  kernels of Lagrangian relations are isotropic subspaces and the images are coisotropic subspaces.
\end{remark}

All (co)isotropic subspaces are generated in this way:

\begin{theorem}[Symplectic Stinespring dilation]
\label{thm:symstine}
Every coisotropic subspace of $k^{2n}$ of dimension $n+m$ is the image of a Lagrangian isometry $m\to n$.
\end{theorem}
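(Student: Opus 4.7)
The plan is to deduce this from symplectic reduction. Given a coisotropic $C \subseteq k^{2n}$ of dimension $n+m$, its radical $C^\omega$ is isotropic, has dimension $n-m$, and is contained in $C$. First I would verify that the bilinear form on $V := C/C^\omega$ inherited from $\omega_n$ is well-defined (it descends precisely because $C^\omega$ is the radical of $\omega_n|_C$) and non-degenerate (from $(C^\omega)^\omega = C$). This makes $V$ a symplectic vector space of dimension $2m$.

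Using Lemma~\ref{lemma:sform} I would then choose a symplectomorphism $\phi : (k^{2m}, \omega_m) \to (V, \bar\omega)$ and propose the candidate Lagrangian relation
\[
L \;:=\; \bigl\{(v, c) \in k^{2m} \oplus k^{2n} \,:\, c \in C,\ \phi(v) = c + C^\omega \bigr\}.
\]
The dimension count $\dim L = 2m + \dim C^\omega = n+m$ is exactly what is required of a Lagrangian subspace of the hom-space for a morphism $m \to n$. Isotropy with respect to the symplectic form $(-\omega_m)\oplus \omega_n$ on the hom-space (whose sign convention is fixed by demanding that the diagonal $\Delta_m$ be Lagrangian) falls out of $\phi$ being a symplectomorphism: for $(v_i,c_i) \in L$, the cross-term is $-\omega_m(v_1,v_2) + \omega_n(c_1,c_2) = -\bar\omega(\phi v_1,\phi v_2) + \omega_n(c_1,c_2) = 0$. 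Together with the dimension count, $L$ is Lagrangian. The image of $L$ as a relation is visibly $\{c \in C : c + C^\omega \in \operatorname{im}\phi\} = C$ by surjectivity of $\phi$.

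For the isometry condition $L^\dagger L = \mathrm{id}_m$, the key observation is that on non-affine Lagrangian relations the conjugation functor of Definition~\ref{def:conj} acts trivially on the underlying subspaces (it only twists phases), so $L^\dagger$ reduces to the relational converse $L^\ast$. Then $L^\dagger L = \{(v_1,v_2) : \exists\, c \in C,\ \phi(v_1) = \phi(v_2) = c + C^\omega\}$, which collapses to the diagonal $\Delta_{k^{2m}} = \mathrm{id}_m$ by injectivity of $\phi$.

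I expect the main obstacle to be the symplectic-reduction step itself: justifying that $\bar\omega$ on $C/C^\omega$ is well-defined and non-degenerate is classical symplectic linear algebra but is where all the structure of the argument comes from. The secondary bookkeeping is the choice of sign in the hom-space symplectic form (needed to ensure the identity is Lagrangian); once fixed, the remaining verifications — dimension, isotropy, isometry, and image — all reduce to exploiting that $\phi$ is a symplectic isomorphism.
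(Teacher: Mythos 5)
Your proof is correct, and it takes a genuinely different route from the paper. The paper's argument is an explicit normal-form construction: it applies Fourier transforms and row reduction to put a basis of the isotropic complement $V=C^{\omega}$ into the form $[\,I\;0\,|\,X_A'\;X_B'\,]$, uses isotropy of the rows to deduce that $X_A'$ is symmetric, assembles from this a graph state on $k^{2(n+m)}$, and conjugates by the resulting symplectomorphism $U$ to exhibit the coisotropic subspace as the image of an isometry. What that buys is an explicit encoder circuit (a Clifford $U$ applied to fresh $|0\rangle$'s), which is exactly what is reused later in the error-correction section. Your argument is the classical coisotropic-reduction argument: pass to $V:=C/C^{\omega}$ with the reduced form, identify it with $(k^{2m},\omega_m)$ via Lemma \ref{lemma:sform}, and take $L$ to be the graph of that identification followed by the quotient map restricted to $C$. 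It is shorter, basis-free, and works uniformly over any field, but it is non-constructive: it produces no circuit-level dilation. Your dimension count, the isotropy computation from $\phi$ being a symplectomorphism (with the sign convention pinned down by requiring the diagonal to be Lagrangian, which matches the paper's wire-bending conventions), and the image computation are all fine.

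One justification needs repair, though the conclusion you draw from it is true. The conjugation functor of Definition \ref{def:conj} does \emph{not} act trivially on underlying subspaces: as stated there, it negates the $Z$-grading, so $\bar L\neq L$ in general (e.g.\ it sends $\{(z,x):x=z\}$ to $\{(z,x):x=-z\}$). The reason $L^{\dagger}$ nevertheless acts as the plain relational converse on underlying subspaces is that the compact structure is also twisted: the cup/cap is the Bell state, whose stabilizers are $\mathcal{Z}\otimes\mathcal{Z}^{-1}$ and $\mathcal{X}\otimes\mathcal{X}$, so transposition negates the $Z$-grading as well, and the two negations cancel in $(-)^{\dagger}=(\bar{(-)})^{*}$. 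With that corrected justification (or, alternatively, by checking $L^{\dagger}L=\mathrm{id}_m$ directly on a doubled-spider presentation of $L$), your isometry step $L^{\dagger}L=\Delta_{k^{2m}}$ via injectivity of $\phi$ goes through as written.
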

\begin{proof}
Suppose that we have a coisotropic subspace $V^\omega$ of $k^{2n}$ with dimension $n+m$. 
Then $V$ is an isotropic subspace of $k^{2n}$ with dimension $n-m$.
Applying Fourier transforms, we obtain a  subspace symplectomorphic to $V$ generated by a matrix whose pivots are all in the $Z$ block.  Therefore, we can row reduce this matrix to obtain one of the following form:
$$
\left[

$$

Given any map $g:m\to n$ in  $\Aff\Lag\Rel_k$, regarded as a state, we know that $(\vec z,\vec x)$ is a stabilizer for $g$ precisely when $(-\vec z,\vec x)$ is a stabilizer for $\bar g$.  However, we know that the cup discards these diagonal stabilizers on the bottom of $\bar g$ and the codiscard map discards all Weyl operators.

\end{proof}

\begin{corollary}
\label{cor:stabcode}
For odd prime $p$:
$$\Aff\Co\Isot\Rel_{\F_p}\cong \CPM(\Aff\Lag\Rel_{\F_p})\cong \CPM(\Stab_p)$$

That is, adding the discard relation to $\Aff\Lag\Rel_{\F_p}$ gives a semantics for odd prime dimensional  {\bf mixed stabilizer circuits} and { mixed circuits in Spekkens' qubit toy model}.  Graphically:

$$
\left\llbracket \
\begin{tikzpicture}[yscale=-1]
	\begin{pgfonlayer}{nodelayer}
		\node [style=none] (0) at (0.25, 0) {};
		\node [ground] (1) at (0.25, -0.5) {};
	\end{pgfonlayer}
	\begin{pgfonlayer}{edgelayer}
		\draw (1) to (0.center);
	\end{pgfonlayer}
\end{tikzpicture}
\ \right\rrbracket
=
\left\{ 
\left(
\begin{pmatrix}
z\\x
\end{pmatrix},
*
\right)
:\forall z,x \in \F_p
\right\}
=
\left\llbracket \
\begin{tikzpicture}
	\begin{pgfonlayer}{nodelayer}
		\node [style=Z] (2) at (4, 0) {};
		\node [style=Z] (3) at (4.5, 0) {};
		\node [style=none] (4) at (4, -1) {};
		\node [style=none] (5) at (4.5, -1) {};
	\end{pgfonlayer}
	\begin{pgfonlayer}{edgelayer}
		\draw (4.center) to (2);
		\draw (3) to (5.center);
	\end{pgfonlayer}
\end{tikzpicture}
\ \right\rrbracket
$$

\end{corollary}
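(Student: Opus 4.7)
The first isomorphism $\Aff\Co\Isot\Rel_{\F_p}\cong \CPM(\Aff\Lag\Rel_{\F_p})$ is an immediate specialization of the preceding theorem to $k=\F_p$, so nothing is left to do there. For the second isomorphism, the plan is to apply functoriality of the $\CPM$ construction to the $\dag$-compact closed isomorphism $\Aff\Lag\Rel_{\F_p}\cong \Stab_p$ established earlier. Concretely, $\CPM$ is a 2-functor from $\dag$-compact closed categories equipped with an identity-on-objects monoidal involution to symmetric monoidal categories, so any equivalence preserving the conjugation induces an equivalence between the corresponding $\CPM$ categories. I would just note that the conjugation functor on $\Aff\Lag\Rel_{\F_p}$ (Definition \ref{def:conj}) was shown to be transported to the complex conjugation on $\Stab_p$ under the isomorphism, so both required structures are preserved and the second isomorphism follows.

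The only remaining content is the graphical identification, which says that under the composite isomorphism the discard relation is sent to the doubled cozero state
$
\begin{tikzpicture}
\begin{pgfonlayer}{nodelayer}
\node [style=Z] (2) at (4, 0) {};
\node [style=Z] (3) at (4.5, 0) {};
\node [style=none] (4) at (4, -.75) {};
\node [style=none] (5) at (4.5, -.75) {};
\end{pgfonlayer}
\begin{pgfonlayer}{edgelayer}
\draw (4.center) to (2);
\draw (3) to (5.center);
\end{pgfonlayer}
\end{tikzpicture}
$. For this I would just unfold the concrete embedding given by the proof of the preceding theorem: recall the bijection sends a doubled pure map $f\otimes\bar f$ composed with a $\CPM$-cup into the affine coisotropic subspace obtained by taking $f$ and capping its right legs with $Z$-discard generators of $\LinRel_{\F_p}$. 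Applied to the identity, the cup itself becomes the codiscard $0\to 2$ built from two $Z$-generators, which is precisely the left-hand side $\begin{tikzpicture}[yscale=-1]\begin{pgfonlayer}{nodelayer}\node [style=none] (0) at (0.25, 0) {};\node [ground] (1) at (0.25, -0.5) {};\end{pgfonlayer}\begin{pgfonlayer}{edgelayer}\draw (1) to (0.center);\end{pgfonlayer}\end{tikzpicture}$ of the displayed equation: the affine coisotropic subspace consisting of all pairs $((z,x),*)$ with $z,x\in\F_p$.

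As a sanity check I would verify both sides directly as subspaces: the right-hand side parses in $\LinRel_{\F_p}$ to the full space $\F_p^2$ (both $Z$ and $X$ coordinates are unconstrained by a $Z$-spider with no inputs), giving exactly the coisotropic subspace $\{(z,x):z,x\in\F_p\}\subseteq \F_p^2$, while the discard in $\Aff\Co\Isot\Rel_{\F_p}$ is by definition the top object $\F_p^2$. No obstacle is really expected here since the heavy lifting was done in Theorem \ref{thm:symstine} and the preceding CPM theorem; the main thing to be careful about is just to track that the $\dag$-compact closed iso $\Aff\Lag\Rel_{\F_p}\cong\Stab_p$ respects conjugation (so that $\CPM$ can be applied coherently), and that the specific generator on the right-hand side of the displayed equation is what the proof of the CPM theorem constructs out of the cup on the $0$ object. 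Both of these are essentially bookkeeping.
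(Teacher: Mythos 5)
Your argument is essentially the paper's: the corollary is obtained precisely by specializing the theorem $\CPM(\Aff\Lag\Rel_k,\bar{(-)})\cong\Aff\Co\Isot\Rel_k$ to $k=\F_p$ and transporting along the conjugation-preserving $\dag$-compact isomorphism $\Aff\Lag\Rel_{\F_p}\cong\Stab_p$, with the discard generator identified as the full relation $\{((z,x),*):z,x\in\F_p\}$, i.e.\ the two-$Z$-dot diagram. The only blemish is calling that generator a ``codiscard $0\to 2$'' when it is the discard $2\to 0$, but since you immediately describe the correct subspace this is a naming slip, not a gap.
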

By mixed stabilizer circuits; we mean stabilizer circuits which can be obtained by discarding part of a pure stabilizer circuit, not an arbitrary convex combination of stabilizer circuits.  I point this out only because the  terms ``stabilizer mixed state/circuit'' and ``mixed stabilizer state/circuit'' can interchangeably refer to either notion.
Mixed stabilizer states are also known as {\bf stabilizer codes} for reasons that will become clear in Section \ref{sec:qec}.

This is conceptually very close to the result of \cite{huot} where they show that the affine completion of isometries between finite dimensional Hilbert spaces, which freely adds a discard map, yields completely positive trace-preserving maps.  And even more similar to the more refined result of   \cite{disc}, where they show that various fragments of the ZX-calculus can be augmented with quantum discarding by freely adding a generator which discards isometries. 

Although in our case the quantum discarding is interpreted as the literal {\em discard relation} of the corresponding stabilizers, therefore our semantics never departs from affine relations.
The discard relation on $n$ wires is {\em defined} to be the morphism $!_n$ such that for all morphisms $f:n\to m$,  $!_m\circ f \leq !_n$.

It was already known that stabilizer codes are in bijection with  affine isotropic subspaces, for example \cite[\S A]{gross}.  Indeed affine coisotropic subspaces are in bijection with affine isotropic subspace by  taking the symplectic complement of the linear component of the affine subspace; however, as noted in Remark  \ref{rem:xdisc}, $\Aff\Isot\Rel_{\F_p} \not \cong \Aff\Co\Isot\Rel_{\F_p}$, so their compositions as affine relations are different. The interpretation of the doubled cozero as the quantum discard map is not sound with respect to relational composition.   This is closely related to the discrete Wigner function of Gross which we will comment further on in Remark \ref{rem:wig}.

This formalizes the relationship between mixed stabilizer circuits and stabilizer tableaux with not-necessarily-full rank in  a compositional way.  It is standard, and indeed very useful to interpret a stabilizer state in terms of its stabilizer tableau:  which is an augmented  basis $L+a$ of an affine isotropic subspace.  In order to compose these tableaux, one must take the symplectic complement of the linear component of both spaces, and then compute  their relational composition as affine coisotropic relations.  Then the stabilizer tableau for this composite is obtained by taking the symplectic complement on the linear component once again.

{\em Absolutely remarkably}, and seemingly out of nowhere, the Weyl-free mixed odd prime dimensional stabilizer circuits modulo invertible scalars can be expressed in terms of an iterated $\CPM$ construction with respect to the orthogonal complement at the inner level, and the complex conjugation at the outer level:
\begin{corollary}
Given a prime $p$ and $k=\F_p$ or $k=\mathbb{Q}$:
$$\Isot\Rel_{k}\cong\Co\Isot\Rel_{k}\cong \CPM(\CPM(\LinRel_{k},(-)^\perp),\bar{(-)})$$
\end{corollary}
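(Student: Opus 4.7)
The plan is to assemble the final chain of isomorphisms out of results already proven in the paper, the only genuinely new ingredient being the \emph{linear} (Weyl-free) analogue of the affine result $\CPM(\Aff\Lag\Rel_k,\bar{(-)})\cong \Aff\Co\Isot\Rel_k$. Concretely, I would first establish the non-affine analogue
\[
\CPM(\Lag\Rel_k,\bar{(-)}) \;\cong\; \Co\Isot\Rel_k,
\]
then substitute $\Lag\Rel_k\cong \CPM(\LinRel_k,(-)^\perp)$ to obtain the iterated-$\CPM$ description, and finally invoke Remark \ref{rem:xdisc} to identify $\Co\Isot\Rel_k$ with $\Isot\Rel_k$ via the symplectic complement.

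First, I would define the candidate identity-on-objects functor $\CPM(\Lag\Rel_k,\bar{(-)}) \to \Co\Isot\Rel_k$ sending a purification $(Z,f\colon X\to Y\otimes Z)$ to the image of $f$ viewed as a Lagrangian relation; this yields a coisotropic subspace because the doubled discard $\Zdot\otimes\Zdot$ plays the role of projecting onto diagonal stabilizers and images of Lagrangian relations are coisotropic. Fullness follows directly from the symplectic Stinespring dilation (Theorem \ref{thm:symstine}), which is stated and proved in the linear (non-affine) setting already and produces a Lagrangian isometry whose image is any prescribed coisotropic subspace. Faithfulness transfers from the affine case essentially verbatim: the same diagonal-stabilizer argument shows that a morphism in the outer CPM is determined by its diagonal $(-\vec z, \vec x, \vec z, \vec x)$ part, and therefore by its image as a coisotropic subspace; the absence of affine phases actually simplifies rather than complicates the bookkeeping.

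Having established $\CPM(\Lag\Rel_k,\bar{(-)}) \cong \Co\Isot\Rel_k$, I would substitute the known inner presentation $\Lag\Rel_k\cong \CPM(\LinRel_k,(-)^\perp)$, checking that the outer conjugation $\bar{(-)}$ on $\Lag\Rel_k$ (which negates the $Z$-grading) transports appropriately across this isomorphism to define a strict symmetric monoidal involution on the inner $\CPM(\LinRel_k,(-)^\perp)$, so that the outer $\CPM$ is well-defined. This is the compatibility point I would work through most carefully. For the remaining half, $\Isot\Rel_k \cong \Co\Isot\Rel_k$, I would appeal to Remark \ref{rem:xdisc}: the symplectic complement extends to an isomorphism of props in the linear setting because adding $\Zdot\,\Zdot$ (the doubled cozero) and adding the dual $\Xdot\,\Xdot$ are interchanged by $(-)^\omega$, and in the absence of affine phases there is no obstruction analogous to the failure $\Aff\Isot\Rel_k\not\cong \Aff\Co\Isot\Rel_k$ highlighted in that remark.

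The main obstacle I anticipate is verifying that the conjugation functor $\bar{(-)}$ from Definition \ref{def:conj}, originally defined on (affine) Lagrangian relations, lifts coherently to the iterated construction: namely, that under $\Lag\Rel_k\cong \CPM(\LinRel_k,(-)^\perp)$ it corresponds to a sensible involution on the inner $\CPM$. Once that coherence is in hand, the remainder is routine compilation of existing results: Theorem \ref{thm:symstine} for fullness, the diagonal-stabilizer computation of the preceding proof for faithfulness, and Remark \ref{rem:xdisc} for the final identification with $\Isot\Rel_k$.
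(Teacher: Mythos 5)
Your proposal is correct and follows essentially the same route the paper intends: the corollary is exactly the Weyl-free restriction of the preceding theorem $\CPM(\Aff\Lag\Rel_k,\bar{(-)})\cong\Aff\Co\Isot\Rel_k$ (whose dilation and diagonal-stabilizer arguments, Theorem \ref{thm:symstine} included, already live in or restrict to the linear setting), combined with the cited isomorphism $\Lag\Rel_k\cong\CPM(\LinRel_k,(-)^\perp)$ and the symplectic-complement isomorphism $\Co\Isot\Rel_k\cong\Isot\Rel_k$ of Remark \ref{rem:xdisc}. The compatibility of the conjugation $\bar{(-)}$ across the inner isomorphism, which you rightly flag, is just transport of structure and poses no obstacle.
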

The astounding symmetry involved here begs the question if iterating the $\CPM$ construction more times yields anything physically interesting. Perhaps the work of \cite{CPMho} can shed some light on this question. 

In order to add measurement and state preparation in the symplectic setting we inspect the structure of the $Z$ and $X$ projectors in the symplectic setting:
\begin{definition}
The $Z$ and $X$ projectors are defined as follows in $\Aff\Co\Isot\Rel_{k}$:
$$
p_Z:=
\begin{tikzpicture}
	\begin{pgfonlayer}{nodelayer}
		\node [style=X] (0) at (0.5, -0.75) {};
		\node [style=none] (2) at (0.25, 0) {};
		\node [style=none] (4) at (1.25, 0.5) {};
		\node [style=Z] (5) at (0.75, 0) {};
		\node [style=Z] (6) at (1.75, 0) {};
		\node [style=Z] (7) at (1.5, -0.75) {};
		\node [style=none] (8) at (0.75, 0) {};
		\node [style=none] (9) at (1.25, 0) {};
		\node [style=none] (10) at (1.75, 0) {};
		\node [style=none] (11) at (0.5, -1.5) {};
		\node [style=none] (13) at (1.5, -1.5) {};
		\node [style=none] (14) at (0.25, 0.5) {};
	\end{pgfonlayer}
	\begin{pgfonlayer}{edgelayer}
		\draw [in=-90, out=120] (7) to (9.center);
		\draw (7) to (13.center);
		\draw [in=60, out=-90] (10.center) to (7);
		\draw [in=60, out=-90] (8.center) to (0);
		\draw [in=-90, out=120] (0) to (2.center);
		\draw (0) to (11.center);
		\draw (9.center) to (4.center);
		\draw (2.center) to (14.center);
	\end{pgfonlayer}
\end{tikzpicture}
=
\begin{tikzpicture}
	\begin{pgfonlayer}{nodelayer}
		\node [style=none] (4) at (1, 0.5) {};
		\node [style=Z] (5) at (0.25, 0) {};
		\node [style=none] (9) at (1, -1.25) {};
		\node [style=none] (11) at (0.25, -1.25) {};
		\node [style=none] (14) at (0.25, 0.5) {};
		\node [style=Z] (15) at (0.25, -0.75) {};
	\end{pgfonlayer}
	\begin{pgfonlayer}{edgelayer}
		\draw (9.center) to (4.center);
		\draw (14.center) to (5);
		\draw (11.center) to (15);
	\end{pgfonlayer}
\end{tikzpicture}
\hspace*{.5cm}
p_X:=
\begin{tikzpicture}
	\begin{pgfonlayer}{nodelayer}
		\node [style=Z] (0) at (0.5, -0.75) {};
		\node [style=none] (2) at (0.25, 0) {};
		\node [style=none] (4) at (1.25, 0.5) {};
		\node [style=Z] (5) at (0.75, 0) {};
		\node [style=Z] (6) at (1.75, 0) {};
		\node [style=X] (7) at (1.5, -0.75) {};
		\node [style=none] (8) at (0.75, 0) {};
		\node [style=none] (9) at (1.25, 0) {};
		\node [style=none] (10) at (1.75, 0) {};
		\node [style=none] (11) at (0.5, -1.5) {};
		\node [style=none] (13) at (1.5, -1.5) {};
		\node [style=none] (14) at (0.25, 0.5) {};
	\end{pgfonlayer}
	\begin{pgfonlayer}{edgelayer}
		\draw [in=-90, out=120] (7) to (9.center);
		\draw (7) to (13.center);
		\draw [in=60, out=-90] (10.center) to (7);
		\draw [in=60, out=-90] (8.center) to (0);
		\draw [in=-90, out=120] (0) to (2.center);
		\draw (0) to (11.center);
		\draw (9.center) to (4.center);
		\draw (2.center) to (14.center);
	\end{pgfonlayer}
\end{tikzpicture}
=
\begin{tikzpicture}[scale=-1]
	\begin{pgfonlayer}{nodelayer}
		\node [style=none] (4) at (1, 0.5) {};
		\node [style=Z] (5) at (0.25, 0) {};
		\node [style=none] (9) at (1, -1.25) {};
		\node [style=none] (11) at (0.25, -1.25) {};
		\node [style=none] (14) at (0.25, 0.5) {};
		\node [style=Z] (15) at (0.25, -0.75) {};
	\end{pgfonlayer}
	\begin{pgfonlayer}{edgelayer}
		\draw (9.center) to (4.center);
		\draw (14.center) to (5);
		\draw (11.center) to (15);
	\end{pgfonlayer}
\end{tikzpicture}
$$
\end{definition}
The $Z$ projector discards and then codiscards the $Z$-gradient: cutting the $Z$ gradient in two so that no information is preserved, while acting trivially on the $X$ gradient.  Dually for the $X$ projector.  Concretely, these are interpreted as the following relations:

$$
\left\llbracket\
\begin{tikzpicture}
	\begin{pgfonlayer}{nodelayer}
		\node [style=none] (4) at (1, 0.5) {};
		\node [style=Z] (5) at (0.25, 0) {};
		\node [style=none] (9) at (1, -1.25) {};
		\node [style=none] (11) at (0.25, -1.25) {};
		\node [style=none] (14) at (0.25, 0.5) {};
		\node [style=Z] (15) at (0.25, -0.75) {};
	\end{pgfonlayer}
	\begin{pgfonlayer}{edgelayer}
		\draw (9.center) to (4.center);
		\draw (14.center) to (5);
		\draw (11.center) to (15);
	\end{pgfonlayer}
\end{tikzpicture}
\ \right\rrbracket
=
\left\{
\left(
\begin{bmatrix}
z\\x
\end{bmatrix},
\begin{bmatrix}
z'\\x
\end{bmatrix}\right)
 \in k^{2+2} \ | \
\forall z,z',x \in k
\right\}
$$
$$
\left\llbracket\
\begin{tikzpicture}[xscale=-1]
	\begin{pgfonlayer}{nodelayer}
		\node [style=none] (4) at (1, 0.5) {};
		\node [style=Z] (5) at (0.25, 0) {};
		\node [style=none] (9) at (1, -1.25) {};
		\node [style=none] (11) at (0.25, -1.25) {};
		\node [style=none] (14) at (0.25, 0.5) {};
		\node [style=Z] (15) at (0.25, -0.75) {};
	\end{pgfonlayer}
	\begin{pgfonlayer}{edgelayer}
		\draw (9.center) to (4.center);
		\draw (14.center) to (5);
		\draw (11.center) to (15);
	\end{pgfonlayer}
\end{tikzpicture}
\ \right\rrbracket
=
\left\{
\left(
\begin{bmatrix}
z\\x
\end{bmatrix},
\begin{bmatrix}
z\\x'
\end{bmatrix}\right)
 \in k^{2+2} \ | \
\forall z,x,z' \in k
\right\}
$$

We split either of these projectors, following \cite{idempotent} to add classical types:

\begin{definition}
Let $\Aff\Co\Isot\Rel_k^M$ denote the two-coloured prop generated by splitting  $p_Z$ in $\Aff\Co\Isot\Rel_k$.

Concretely $\Aff\Co\Isot\Rel_k^M$  has:
\begin{description}
\item[\ \ Objects:]  pairs $(n, e)$ where $n \in \N$ and $e:n\to n$ is a finite tensor product of $1$ and $p_Z$.

\item[\ \ Maps:] $(n,e) \to (m, g)$ are maps $f:n\to m$  in  $\Aff\Co\Isot\Rel_k$ such that $gfe=f$.
\end{description}

The object $Q=(1,1_1)$ is interpreted as a quantum channel and the object $C=(1,p_Z)$ as a classical channel.
\end{definition}

We could have instead split $p_X$, or split both $p_X$ and $p_Z$; however, all three of these multicoloured props are equivalent.  This equivalence is witnessed via the Fourier transform. Indeed this suffices to split all nonzero projectors up to isomorphism because all projectors of the same dimension are affine symplectomorphic. 
It is important to remark that the choice of projectors which are split effects the code-distance, because code-distance is basis dependent, and not invariant under equivalence.

This category has a succinct presentation;  adding the affine relations to  $\Aff\Co\Isot\Rel_k$ obtained by cutting/splitting the $Z$ projector in two:
\begin{theorem}
\label{thm:twocoloured}
The full subcategory of $\Aff\Co\Isot\Rel_k^M$ generated by tensor powers of $C$ is isomorphic to $\Aff\Rel_k$.
Therefore $\Aff\Co\Isot\Rel_k^M$ is isomorphic to adding the following linear relations to the image of $\Aff\Co\Isot\Rel_k^M\to \Aff\Co\Isot\Rel_k$ in the way which makes this into a two-coloured prop:

$$
\begin{tikzpicture}
	\begin{pgfonlayer}{nodelayer}
		\node [style=none] (0) at (24.75, 0.5) {};
		\node [style=Z] (1) at (24, 0) {};
		\node [style=none] (2) at (24.75, 0) {};
		\node [style=none] (3) at (24, 0.5) {};
		\node [style=none] (4) at (24.75, -1) {};
	\end{pgfonlayer}
	\begin{pgfonlayer}{edgelayer}
		\draw (2.center) to (0.center);
		\draw (3.center) to (1);
		\draw [style=classicalwire] (4.center) to (2.center);
	\end{pgfonlayer}
\end{tikzpicture}
\hspace*{.5cm}\text{and}\hspace*{.5cm}
\begin{tikzpicture}
	\begin{pgfonlayer}{nodelayer}
		\node [style=none] (0) at (24.75, -1) {};
		\node [style=Z] (1) at (24, -0.5) {};
		\node [style=none] (2) at (24.75, -0.5) {};
		\node [style=none] (3) at (24, -1) {};
		\node [style=none] (4) at (24.75, 0.5) {};
	\end{pgfonlayer}
	\begin{pgfonlayer}{edgelayer}
		\draw (2.center) to (0.center);
		\draw (3.center) to (1);
		\draw [style=classicalwire] (4.center) to (2.center);
	\end{pgfonlayer}
\end{tikzpicture}
$$
\end{theorem}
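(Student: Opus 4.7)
The proof plan has two parts: explicitly splitting $p_Z$ to identify the two drawn generators, and then computing hom-sets to show the classical full subcategory is $\Aff\Rel_k$.

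First I would exhibit the splitting inside $\Aff\Co\Isot\Rel_k$. As an affine coisotropic relation,
$$p_Z = \{((z,x),(z',x)) : z,z',x \in k\} \subseteq k^2 \oplus k^2.$$
Setting $r := \{((z,x),x) : z,x\in k\}$ and $m := \{(x,(z,x)) : z,x\in k\}$, a direct relational computation gives $m \circ r = p_Z$ and $r \circ m = \mathrm{id}_k$, so these split $p_Z$. Comparing with the pictures, $m$ is the map drawn first (the coiled classical wire feeds into the $X$-half of the doubled quantum wire while the $Z$-half is codiscarded by the lone $Z$-spider) and $r$ is the measurement drawn second.

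Next I would characterise $\mathrm{Hom}_{\Aff\Co\Isot\Rel_k^M}(C^{\otimes n}, C^{\otimes m})$. These are affine coisotropic $f \subseteq k^{2n} \oplus k^{2m}$ satisfying $p_Z^{\otimes m} \circ f \circ p_Z^{\otimes n} = f$. Unfolding the sandwich shows the condition is equivalent to: membership $((z,x),(z',x')) \in f$ is independent of $z$ and $z'$. Hence
$$f = \{((z,x),(z',x')) : (x,x') \in A\}$$
for $A := \{(x,x') : \exists z,z',\ ((z,x),(z',x')) \in f\} \subseteq k^n \oplus k^m$. Affineness of $A$ follows from that of $f$; conversely, any affine $A$ yields an affine coisotropic $f$ of this shape, since the symplectic complement of its linear part is forced to have zero $X$-components and therefore sits inside the linear part trivially. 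Composition and tensor of such $f$'s correspond to relational composition and direct sum of the underlying $A$'s: in $f_2 \circ f_1$ the existential over the intermediate pair $(z_2, x_2)$ factors, the $z_2$-part being vacuous by $Z$-freeness. This produces the prop isomorphism with $\Aff\Rel_k$.

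The presentation claim in the second half now follows from the universal property of idempotent splitting: once the $r,m$ splitting pair is adjoined to $\Aff\Co\Isot\Rel_k$ (subject only to $m \circ r = p_Z$ and $r \circ m = \mathrm{id}_C$, both of which already hold in $\Aff\Co\Isot\Rel_k^M$), the resulting two-coloured prop is exactly $\Aff\Co\Isot\Rel_k^M$, and the classical full subcategory is automatically $\Aff\Rel_k$ by the previous paragraph. The main obstacle is verifying that the sandwich condition interacts cleanly with relational composition of affine coisotropic subspaces: one must confirm that the free $Z$-coordinates at intermediate wires impose no hidden constraints on the existential defining $f_2 \circ f_1$, and that coisotropy is preserved along the way. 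This comes down to the idempotency $p_Z \circ p_Z = p_Z$ and to the fact that the linear part of an $X$-only relation always contains its symplectic complement.
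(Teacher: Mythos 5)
The paper records this theorem without an explicit proof (it is justified only by the surrounding discussion of splitting $p_Z$ following the cited idempotent-splitting construction), and your argument supplies exactly the intended details: the relational form of $p_Z$ splits through a single wire via the two drawn relations $r,m$ with $r\circ m=\mathrm{id}$ and $m\circ r=p_Z$, the sandwich condition $p_Z^{\otimes m}fp_Z^{\otimes n}=f$ forces $Z$-independence so that classical--classical homs are precisely the (automatically coisotropic) preimages of arbitrary affine relations, and the universal property of freely splitting the idempotent yields the presentation claim. Your proof is correct; the only remaining checks are routine and you may wish to note them explicitly, namely that identities are preserved ($p_Z^{\otimes n}$ corresponds to $\mathrm{id}_{k^n}$), that the empty affine subspace is matched by the empty coisotropic relation, and that symmetries and the monoidal structure transport in the evident way.
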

These correspond to the state preparation and measurement processes in the $Z$ basis.
We draw the wire associated to $C$ as a coil to indicate the type (although this is just syntactic sugar).
In the quantum setting, the classical state ``lives'' on a single wire and the stabilizer state ``lives'' on the doubled wires.  The two coloured notation is more evident when we use the following:

\begin{definition}
There is a graphical calculus for two-sorted prop of classical and quantum types following \cite{pqp}.  Although, here it works for arbitrary field $k$.
The classical wires are drawn thin and the quantum wires are drawn thick.  The thin bordered string diagrams are generated by affine relations.  Moreover, the pure affine Lagrangian relations are drawn with a thick border, so that:
$$
\left\llbracket\

\ \right\rrbracket
$$
\end{definition}

\begin{remark}
\label{rem:wig}
Compare this relational semantics of measurement to the discrete Wigner function of Gross \cite{gross}; which is the  discrete version of the Wigner's quasi-probabilty distribution \cite{Wigner1932}.  Gross shows that on odd prime dimensional stabilizer states, this is a probability distribution given by the indicator function of an affine Lagrangian subspace  \cite[Lemma 9]{gross} (actually it works just as well for stabilizer codes/affine coistropic subspaces):
$$
P_{G(L+a)}:\F_p^{2n}\to [0,1);\ (z,x)\mapsto
1/|L| \cdot \delta_{L+a}(z,x)
=
\begin{cases}
1/|L| & \text{ if } (z,x) \in L+a\\
0 & \text{ otherwise}
\end{cases}
$$
This probability distribution is uniform so that every outcome is equally likely.  For example, the measurement of a stabilizer state $G(L+a)$ in the $Z$ basis produces outcome $|x\rangle$ with probability:
$$
\sum_{z\in \F_p} P_{G(L+a)}(z,x)
$$
Moreover, this marginalization over $z$ acts backwards on the state; so that in accordance with Spekkens toy model, the observer gains at most one pit of information by sampling $x$, but injects at most one pit of uncertainty back into the ontic state.

In the conventional, functional approaches to measurement of stabilizer states/states in Spekkens' toy model, for example \cite{catani}, they construct ``measurement update rules'' to compute this back action on the state.  This is essentially the same kind of procedure performed to compute the effect of  measuring a qubit stabilizer states \cite{aaronson}.  However, in the relational paradigm, the measurement outcome, the measurement update, the state preparation and the unitary evolution of the quantum state are all computed in the same way: by taking the relational composition.
\end{remark}

We have a simple topological proof of the following well-known result:

\begin{remark}
Preparing a state in the $Z$ basis and measureing in the $X$ basis preserves no information:
$$
\left\llbracket\
\begin{tikzpicture}
	\begin{pgfonlayer}{nodelayer}
		\node [style=Z] (12) at (31, -5) {};
		\node [style=none] (13) at (31, -6) {};
		\node [style=none] (14) at (31, -3) {};
		\node [style=X] (15) at (31, -4) {};
	\end{pgfonlayer}
	\begin{pgfonlayer}{edgelayer}
		\draw (13.center) to (12);
		\draw (15) to (14.center);
		\draw [thick] (12) to (15);
	\end{pgfonlayer}
\end{tikzpicture}
\
\right\rrbracket
=
\begin{tikzpicture}
	\begin{pgfonlayer}{nodelayer}
		\node [style=Z] (4) at (60, 5.5) {};
		\node [style=none] (5) at (60, 6.25) {};
		\node [style=Z] (6) at (60.5, 6.25) {};
		\node [style=none] (7) at (60.5, 5.5) {};
		\node [style=none] (8) at (60, 7.5) {};
		\node [style=none] (9) at (60.5, 4.5) {};
	\end{pgfonlayer}
	\begin{pgfonlayer}{edgelayer}
		\draw (5.center) to (4);
		\draw (7.center) to (6);
		\draw [style=classicalwire] (5.center) to (8.center);
		\draw [style=classicalwire] (9.center) to (7.center);
	\end{pgfonlayer}
\end{tikzpicture}
=
\begin{tikzpicture}
	\begin{pgfonlayer}{nodelayer}
		\node [style=Z] (0) at (59, 6.25) {};
		\node [style=none] (1) at (59, 7.5) {};
		\node [style=Z] (2) at (59, 5.75) {};
		\node [style=none] (3) at (59, 4.5) {};
	\end{pgfonlayer}
	\begin{pgfonlayer}{edgelayer}
		\draw [style=classicalwire] (1.center) to (0);
		\draw [style=classicalwire] (3.center) to (2);
	\end{pgfonlayer}
\end{tikzpicture}
=
\left\llbracket\
\begin{tikzpicture}
	\begin{pgfonlayer}{nodelayer}
		\node [style=Z] (0) at (31, -5) {};
		\node [style=none] (1) at (31, -6) {};
		\node [style=none] (2) at (31, -3) {};
		\node [style=X] (3) at (31, -4) {};
	\end{pgfonlayer}
	\begin{pgfonlayer}{edgelayer}
		\draw (1.center) to (0);
		\draw (3) to (2.center);
	\end{pgfonlayer}
\end{tikzpicture}
\
\right\rrbracket
$$
\end{remark}
For this reason, we can prove the correctness of the prime dimensional quantum teleportation algorithm discussed previously using only spider fusion:
\begin{example}
\label{ex:teleportation}
Given any prime $p$, the following equations of string diagrams in $\Aff\Co\Isot\Rel_{\F_p}^M$ proves the correctness of the quantum teleportation protocol where Alice on the left teleports a qudit to Bob, on the right. They share a Bell state  (on the bottom of the diagram)  and two classical dits (drawn as coiled wires).  
\begin{align*}
\left\llbracket\
\begin{tikzpicture}
	\begin{pgfonlayer}{nodelayer}
		\node [style=none] (20) at (380.75, 9.5) {};
		\node [style=none] (21) at (382.25, 3.5) {};
		\node [style=none] (22) at (380, 9.5) {Alice};
		\node [style=none] (23) at (381.75, 9.5) {Bob};
		\node [style=none] (24) at (379.5, 5.75) {};
		\node [style=none] (25) at (383.75, 5.75) {};
		\node [style=none] (26) at (379.5, 7.5) {};
		\node [style=none] (27) at (383.75, 7.5) {};
		\node [style=none] (28) at (378.25, 7.5) {Phase correction};
		\node [style=none] (29) at (378.25, 5.75) {Measurement};
		\node [style=Xthick] (30) at (380.5, 4.5) {};
		\node [style=Zthick] (31) at (381.25, 5) {};
		\node [style=X] (32) at (381.25, 5.75) {};
		\node [style=Z] (33) at (380.5, 5.75) {};
		\node [style=none] (34) at (380.5, 3.5) {};
		\node [style=Zthick] (35) at (382.25, 4.25) {};
		\node [style=none] (36) at (383, 5) {};
		\node [style=none] (37) at (383, 9.5) {};
		\node [style=Zthick] (38) at (383, 8.25) {};
		\node [style=Xthick] (39) at (383, 9) {};
		\node [style=X] (40) at (382.25, 7.5) {};
		\node [style=Z] (41) at (381.5, 7.5) {};
	\end{pgfonlayer}
	\begin{pgfonlayer}{edgelayer}
		\draw [style=dotted, in=-90, out=90, looseness=1.25] (21.center) to (20.center);
		\draw [style=dotted] (25.center) to (24.center);
		\draw [style=dotted] (27.center) to (26.center);
		\draw [style=thick] (34.center) to (30);
		\draw [style=thick] (30) to (31);
		\draw [style=thick] (30) to (33);
		\draw [style=thick] (31) to (32);
		\draw [style=thick, in=-45, out=165] (35) to (31);
		\draw [style=thick] (37.center) to (39);
		\draw [style=thick] (39) to (38);
		\draw [style=thick] (38) to (36.center);
		\draw [style=thick, in=15, out=-90, looseness=0.75] (36.center) to (35);
		\draw [style=thick, in=-165, out=90, looseness=0.75] (40) to (38);
		\draw [style=thick, in=90, out=-165, looseness=0.75] (39) to (41);
		\draw [in=-105, out=90] (32) to (40);
		\draw [in=90, out=-105] (41) to (33);
	\end{pgfonlayer}
\end{tikzpicture}
\ \right\rrbracket
=&
\begin{tikzpicture}
	\begin{pgfonlayer}{nodelayer}
		\node [style=none] (0) at (366.75, 1.75) {};
		\node [style=none] (1) at (367.25, 3.25) {};
		\node [style=none] (2) at (366.25, 3.25) {};
		\node [style=none] (3) at (365.75, 1.75) {};
		\node [style=Z] (4) at (367.25, 4.75) {};
		\node [style=Z] (5) at (365.75, 4.75) {};
		\node [style=none] (6) at (369.5, 3.75) {};
		\node [style=none] (7) at (370.25, 3.75) {};
		\node [style=none] (8) at (369.5, 9) {};
		\node [style=none] (9) at (370.25, 9) {};
		\node [style=Z] (10) at (365.75, 3.25) {};
		\node [style=X] (11) at (366.25, 3.75) {};
		\node [style=Z] (12) at (367.25, 3.75) {};
		\node [style=X] (13) at (366.75, 3.25) {};
		\node [style=none] (14) at (366.25, 4.75) {};
		\node [style=none] (15) at (366.75, 4.75) {};
		\node [style=X] (16) at (367.5, 2.25) {};
		\node [style=Z] (17) at (369.25, 2.25) {};
		\node [style=Z] (18) at (369.5, 7.75) {};
		\node [style=X] (19) at (370.25, 7.75) {};
		\node [style=X] (20) at (369.5, 8.5) {};
		\node [style=Z] (21) at (370.25, 8.5) {};
		\node [style=Z] (22) at (368.25, 6.75) {};
		\node [style=Z] (23) at (368.75, 6.75) {};
		\node [style=none] (24) at (369.25, 6.75) {};
		\node [style=none] (25) at (367.75, 6.75) {};
	\end{pgfonlayer}
	\begin{pgfonlayer}{edgelayer}
		\draw (11) to (10);
		\draw (13) to (12);
		\draw (1.center) to (12);
		\draw (12) to (4);
		\draw (3.center) to (10);
		\draw (10) to (5);
		\draw (2.center) to (11);
		\draw (0.center) to (13);
		\draw (13) to (15.center);
		\draw (11) to (14.center);
		\draw [in=0, out=-90, looseness=0.75] (6.center) to (16);
		\draw [in=-90, out=165, looseness=0.50] (16) to (2.center);
		\draw [in=165, out=-90, looseness=0.50] (1.center) to (17);
		\draw [in=-90, out=30] (17) to (7.center);
		\draw (21) to (19);
		\draw (20) to (18);
		\draw (6.center) to (18);
		\draw (20) to (8.center);
		\draw (9.center) to (21);
		\draw (19) to (7.center);
		\draw [in=-150, out=90, looseness=0.75] (22) to (21);
		\draw [in=-150, out=90] (23) to (18);
		\draw [style=classicalwire, in=-90, out=90, looseness=0.50] (15.center) to (24.center);
		\draw [style=classicalwire, in=-90, out=90] (14.center) to (25.center);
		\draw [in=210, out=90, looseness=0.75] (25.center) to (20);
		\draw [in=-150, out=90] (24.center) to (19);
	\end{pgfonlayer}
\end{tikzpicture}
=
\begin{tikzpicture}
	\begin{pgfonlayer}{nodelayer}
		\node [style=none] (0) at (82, 2.25) {};
		\node [style=none] (2) at (81.25, 2.25) {};
		\node [style=none] (3) at (83.75, 4.25) {};
		\node [style=none] (4) at (84.25, 4.25) {};
		\node [style=none] (5) at (83.75, 8.25) {};
		\node [style=none] (6) at (84.25, 8.25) {};
		\node [style=X] (7) at (81.25, 4.5) {};
		\node [style=X] (8) at (82.75, 3.5) {};
		\node [style=Z] (9) at (83.75, 3.5) {};
		\node [style=X] (10) at (84.25, 7) {};
		\node [style=X] (11) at (83.75, 7) {};
		\node [style=Z] (12) at (83, 5) {};
		\node [style=X] (13) at (82, 4.5) {};
	\end{pgfonlayer}
	\begin{pgfonlayer}{edgelayer}
		\draw [in=15, out=-90, looseness=0.75] (3.center) to (8);
		\draw [in=-90, out=15, looseness=0.75] (9) to (4.center);
		\draw (11) to (5.center);
		\draw (10) to (4.center);
		\draw (10) to (6.center);
		\draw (3.center) to (11);
		\draw (7) to (2.center);
		\draw [in=-150, out=90, looseness=0.75] (7) to (11);
		\draw [in=225, out=120, looseness=0.50] (13) to (10);
		\draw [in=15, out=-165] (12) to (13);
		\draw [in=270, out=90] (0.center) to (13);
		\draw [in=-60, out=135, looseness=0.75] (9) to (12);
		\draw [in=-45, out=150, looseness=0.75] (8) to (7);
	\end{pgfonlayer}
\end{tikzpicture}\\
&=
\begin{tikzpicture}
	\begin{pgfonlayer}{nodelayer}
		\node [style=none] (15) at (86.75, 5) {};
		\node [style=none] (16) at (86.75, 8) {};
		\node [style=X] (18) at (86.75, 6) {};
		\node [style=X] (21) at (86.75, 7) {};
		\node [style=none] (22) at (87.75, 5) {};
		\node [style=none] (23) at (87.75, 8) {};
		\node [style=X] (24) at (87.75, 6) {};
		\node [style=X] (25) at (87.75, 7) {};
	\end{pgfonlayer}
	\begin{pgfonlayer}{edgelayer}
		\draw (21) to (16.center);
		\draw (18) to (15.center);
		\draw [bend left] (21) to (18);
		\draw [bend left] (18) to (21);
		\draw (25) to (23.center);
		\draw (24) to (22.center);
		\draw [bend left] (25) to (24);
		\draw [bend left] (24) to (25);
	\end{pgfonlayer}
\end{tikzpicture}
=
\begin{tikzpicture}
	\begin{pgfonlayer}{nodelayer}
		\node [style=none] (26) at (88.75, 5) {};
		\node [style=none] (27) at (88.75, 8) {};
		\node [style=none] (30) at (89.75, 5) {};
		\node [style=none] (31) at (89.75, 8) {};
	\end{pgfonlayer}
	\begin{pgfonlayer}{edgelayer}
		\draw (26.center) to (27.center);
		\draw (30.center) to (31.center);
	\end{pgfonlayer}
\end{tikzpicture}
=
\left\llbracket\
\begin{tikzpicture}
	\begin{pgfonlayer}{nodelayer}
		\node [style=none] (30) at (89.75, 5) {};
		\node [style=none] (31) at (89.75, 8) {};
	\end{pgfonlayer}
	\begin{pgfonlayer}{edgelayer}
		\draw [thick] (30.center) to (31.center);
	\end{pgfonlayer}
\end{tikzpicture}
\ \right\rrbracket
\end{align*}
\end{example}

Compare this with the graphical proof contained in \cite[p. 706]{pqp} which uses the Hopf-algebra structure.

Because $\Aff\Co\Isot\Rel_{\F_p}^M$ is a subcategory of relations, composable maps are ordered by subspace inclusion (ie, it is poset-enriched). Moreover, since all possible outcomes are equally likely we can identify when the measurement statistics of one process arise from the marginalization of the measurement statistics of another process:

\begin{definition}
Take two odd prime dimensional stabilizer circuits with state preparations and measurement $f,g$ interpreted as parallel maps in  $\Aff\Co\Isot\Rel_{\F_p}^M$.
Then $f$ is a {\bf coarse-graining} of $g$ when $f \subset g$ is a (strict)  affine subspace.
\end{definition}
\begin{example}
For an extreme example, the circuit obtained by preparing in the $Z$-basis and measuring in the $X$-basis  is a coarse-graining of the identity circuit on a classical wire:
$$
\left\llbracket\
\begin{tikzpicture}
	\begin{pgfonlayer}{nodelayer}
		\node [style=none] (13) at (31, -6) {};
		\node [style=none] (14) at (31, -3) {};
	\end{pgfonlayer}
	\begin{pgfonlayer}{edgelayer}
		\draw (13.center) to (14.center);
	\end{pgfonlayer}
\end{tikzpicture}
\
\right\rrbracket
=
\begin{tikzpicture}
	\begin{pgfonlayer}{nodelayer}
		\node [style=none] (0) at (59, 8.25) {};
		\node [style=none] (1) at (59, 5) {};
	\end{pgfonlayer}
	\begin{pgfonlayer}{edgelayer}
		\draw [style=classicalwire] (0.center) to (1.center);
	\end{pgfonlayer}
\end{tikzpicture}
\subset
\begin{tikzpicture}
	\begin{pgfonlayer}{nodelayer}
		\node [style=Z] (2) at (60, 7) {};
		\node [style=none] (3) at (60, 8.25) {};
		\node [style=Z] (4) at (60, 6.25) {};
		\node [style=none] (5) at (60, 5) {};
	\end{pgfonlayer}
	\begin{pgfonlayer}{edgelayer}
		\draw [style=classicalwire] (3.center) to (2);
		\draw [style=classicalwire] (5.center) to (4);
	\end{pgfonlayer}
\end{tikzpicture}
=
\begin{tikzpicture}
	\begin{pgfonlayer}{nodelayer}
		\node [style=Z] (6) at (61, 6.25) {};
		\node [style=none] (7) at (61, 7) {};
		\node [style=Z] (8) at (61.5, 7) {};
		\node [style=none] (9) at (61.5, 6.25) {};
		\node [style=none] (10) at (61, 8.25) {};
		\node [style=none] (11) at (61.5, 5) {};
	\end{pgfonlayer}
	\begin{pgfonlayer}{edgelayer}
		\draw (7.center) to (6);
		\draw (9.center) to (8);
		\draw [style=classicalwire] (11.center) to (9.center);
		\draw [style=classicalwire] (7.center) to (10.center);
	\end{pgfonlayer}
\end{tikzpicture}
=
\left\llbracket\
\begin{tikzpicture}
	\begin{pgfonlayer}{nodelayer}
		\node [style=Z] (12) at (31, -5) {};
		\node [style=none] (13) at (31, -6) {};
		\node [style=none] (14) at (31, -3) {};
		\node [style=X] (15) at (31, -4) {};
	\end{pgfonlayer}
	\begin{pgfonlayer}{edgelayer}
		\draw (13.center) to (12);
		\draw (15) to (14.center);
		\draw [thick] (12) to (15);
	\end{pgfonlayer}
\end{tikzpicture}
\
\right\rrbracket
$$
This is because, given any input state, the circuit on the right hand side can produce any output state; however, the identity circuit forces the inputs to be the same as the outputs.  
\end{example}
\begin{example}
Similarly,  the decoherence map is a coarse-graining of  the identity on a quantum wire :
$$
\left\llbracket\
\begin{tikzpicture}
	\begin{pgfonlayer}{nodelayer}
		\node [style=none] (13) at (31, -6) {};
		\node [style=none] (14) at (31, -3) {};
	\end{pgfonlayer}
	\begin{pgfonlayer}{edgelayer}
		\draw [thick] (13.center) to (14.center);
	\end{pgfonlayer}
\end{tikzpicture}
\
\right\rrbracket
=
\begin{tikzpicture}
	\begin{pgfonlayer}{nodelayer}
		\node [style=none] (0) at (69.25, 7.25) {};
		\node [style=none] (1) at (69.25, 4.75) {};
		\node [style=none] (2) at (69.75, 7.25) {};
		\node [style=none] (3) at (69.75, 4.75) {};
	\end{pgfonlayer}
	\begin{pgfonlayer}{edgelayer}
		\draw (3.center) to (2.center);
		\draw (1.center) to (0.center);
	\end{pgfonlayer}
\end{tikzpicture}
\subset
\begin{tikzpicture}
	\begin{pgfonlayer}{nodelayer}
		\node [style=Z] (4) at (70.75, 6.75) {};
		\node [style=none] (5) at (70.75, 7.25) {};
		\node [style=Z] (6) at (70.75, 5.25) {};
		\node [style=none] (7) at (70.75, 4.75) {};
		\node [style=none] (8) at (71.25, 7.25) {};
		\node [style=none] (9) at (71.25, 4.75) {};
		\node [style=none] (10) at (71.25, 6.75) {};
		\node [style=none] (11) at (71.25, 5.25) {};
	\end{pgfonlayer}
	\begin{pgfonlayer}{edgelayer}
		\draw (5.center) to (4);
		\draw (7.center) to (6);
		\draw [style=classicalwire] (11.center) to (10.center);
		\draw (9.center) to (11.center);
		\draw (10.center) to (8.center);
	\end{pgfonlayer}
\end{tikzpicture}
=
\left\llbracket\
\begin{tikzpicture}
	\begin{pgfonlayer}{nodelayer}
		\node [style=Z] (12) at (31, -5) {};
		\node [style=none] (13) at (31, -6) {};
		\node [style=none] (14) at (31, -3) {};
		\node [style=X] (15) at (31, -4) {};
	\end{pgfonlayer}
	\begin{pgfonlayer}{edgelayer}
		\draw [thick] (13.center) to (12);
		\draw [thick] (15) to (14.center);
		\draw  (12) to (15);
	\end{pgfonlayer}
\end{tikzpicture}
\
\right\rrbracket
$$
\end{example}

\subsection{Electrical circuits with control and measurement}
This multicoloured view of things recaptures the  semantics for controlled voltage and current sources as well as ammeters and voltmenters given in \cite{impedence}; however the situation is more nuanced than for quantum circuits.
\label{rem:electrical}
$\Aff\Co\Isot\Rel_{\R(x)}^M$ gives a semantics for all of the electrical circuit components we have discussed so far in addition to the controlled voltage and current sources:
$$
\left\llbracket\
\begin{tikzpicture}
	\begin{pgfonlayer}{nodelayer}
		\node [style=none] (0) at (0.5, 2) {};
		\node [style=vsourceAMshape, rotate=-90,fill=white] (10) at (0.5, 1) {};
		\node [style=none] (1) at (0.5, 1) {};
		\node [style=none] (2) at (0.5, 0) {};
		\node [style=none] (3) at (1.2, 0) {};
	\end{pgfonlayer}
	\begin{pgfonlayer}{edgelayer}
		\draw (2.center) to (1);
		\draw (1) to (0.center);
		\draw [style=electricalcontrol, in=-15, out=90] (3.center) to (1);
	\end{pgfonlayer}
\end{tikzpicture}
\ \right\rrbracket
:=
\begin{tikzpicture}
	\begin{pgfonlayer}{nodelayer}
		\node [style=none] (0) at (6, 0.5) {};
		\node [style=none] (1) at (6, -1.25) {};
		\node [style=none] (2) at (6.75, -1.25) {};
		\node [style=none] (3) at (6.75, 0.5) {};
		\node [style=none] (4) at (7.75, -1.25) {};
		\node [style=X] (5) at (6.75, -0.25) {};
	\end{pgfonlayer}
	\begin{pgfonlayer}{edgelayer}
		\draw [style=classicalwire, in=-15, out=90, looseness=1.25] (4.center) to (5);
		\draw (5) to (3.center);
		\draw (0.center) to (1.center);
		\draw (2.center) to (5);
	\end{pgfonlayer}
\end{tikzpicture}
=
\begin{tikzpicture}
	\begin{pgfonlayer}{nodelayer}
		\node [style=none] (6) at (2.5, 0.75) {};
		\node [style=none] (7) at (2, -0.75) {};
		\node [style=none] (8) at (3.5, -0.75) {};
		\node [style=none] (9) at (4, 0.75) {};
		\node [style=none] (10) at (4.5, -1.5) {};
		\node [style=X] (11) at (4, 0) {};
		\node [style=Z] (12) at (2.5, 0) {};
		\node [style=Z] (13) at (3, -0.75) {};
		\node [style=none] (14) at (4.5, -0.75) {};
		\node [style=none] (15) at (2, -1.5) {};
		\node [style=none] (17) at (3.5, -1.5) {};
	\end{pgfonlayer}
	\begin{pgfonlayer}{edgelayer}
		\draw (11) to (9.center);
		\draw [in=-150, out=90] (8.center) to (11);
		\draw [in=-150, out=90] (7.center) to (12);
		\draw (12) to (6.center);
		\draw (15.center) to (7.center);
		\draw [in=-30, out=90] (13) to (12);
		\draw (8.center) to (17.center);
		\draw [in=-30, out=90] (14.center) to (11);
		\draw [style=classicalwire] (10.center) to (14.center);
	\end{pgfonlayer}
\end{tikzpicture}
 \ ,
\hspace*{.5cm}
\left\llbracket\
\begin{tikzpicture}
	\begin{pgfonlayer}{nodelayer}
		\node [style=none] (4) at (2.9, 2) {};
		\node [style=isourceAMshape, rotate=-90,fill=white] (50) at (2.9, 1) {};
		\node [style=none] (5) at (2.9, 1) {};
		\node [style=none] (6) at (2.9, 0) {};
		\node [style=none] (7) at (2.25, 0) {};
	\end{pgfonlayer}
	\begin{pgfonlayer}{edgelayer}
		\draw (6.center) to (5);
		\draw (5) to (4.center);
		\draw [style=electricalcontrol, in=-165, out=90] (7.center) to (5);
	\end{pgfonlayer}
\end{tikzpicture}
\ \right\rrbracket
:=
\begin{tikzpicture}
	\begin{pgfonlayer}{nodelayer}
		\node [style=none] (0) at (7, -2.75) {};
		\node [style=none] (2) at (7.75, -2.75) {};
		\node [style=Z] (3) at (7, -3.5) {};
		\node [style=none] (4) at (7, -5) {};
		\node [style=none] (5) at (7.75, -5) {};
		\node [style=Z] (6) at (7.75, -4.25) {};
		\node [style=none] (7) at (6.25, -4.25) {};
		\node [style=none] (8) at (6.25, -5) {};
		\node [style=Z] (9) at (7.75, -3.5) {};
	\end{pgfonlayer}
	\begin{pgfonlayer}{edgelayer}
		\draw (3) to (0.center);
		\draw [in=-150, out=90] (7.center) to (3);
		\draw [style=classicalwire] (8.center) to (7.center);
		\draw (9) to (2.center);
		\draw (5.center) to (6);
		\draw (3) to (4.center);
	\end{pgfonlayer}
\end{tikzpicture}
=
\begin{tikzpicture}
	\begin{pgfonlayer}{nodelayer}
		\node [style=none] (19) at (2.75, -2.75) {};
		\node [style=none] (20) at (3.25, -4.25) {};
		\node [style=none] (21) at (4.75, -4.25) {};
		\node [style=none] (22) at (4.25, -2.75) {};
		\node [style=X] (24) at (4.25, -3.5) {};
		\node [style=Z] (25) at (2.75, -3.5) {};
		\node [style=none] (27) at (3.75, -4.25) {};
		\node [style=none] (28) at (3.25, -5) {};
		\node [style=none] (29) at (4.75, -5) {};
		\node [style=Z] (30) at (3.75, -4.25) {};
		\node [style=none] (31) at (2.25, -4.25) {};
		\node [style=none] (32) at (2.25, -5) {};
	\end{pgfonlayer}
	\begin{pgfonlayer}{edgelayer}
		\draw (24) to (22.center);
		\draw [in=-30, out=90] (21.center) to (24);
		\draw [in=-30, out=90] (20.center) to (25);
		\draw (25) to (19.center);
		\draw (28.center) to (20.center);
		\draw (21.center) to (29.center);
		\draw [in=-150, out=90] (27.center) to (24);
		\draw [in=-150, out=90] (31.center) to (25);
		\draw [style=classicalwire] (32.center) to (31.center);
	\end{pgfonlayer}
\end{tikzpicture}
$$
Or in bastard spider notation, the controlled voltage and current sources have the following form:
$$
\left\llbracket\
\begin{tikzpicture}
	\begin{pgfonlayer}{nodelayer}
		\node [style=none] (0) at (0.5, 2) {};
		\node [style=vsourceAMshape, rotate=-90,fill=white] (10) at (0.5, 1) {};
		\node [style=none] (1) at (0.5, 1) {};
		\node [style=none] (2) at (0.5, 0) {};
		\node [style=none] (3) at (1.2, 0) {};
	\end{pgfonlayer}
	\begin{pgfonlayer}{edgelayer}
		\draw (2.center) to (1);
		\draw (1) to (0.center);
		\draw [style=electricalcontrol, in=-15, out=90] (3.center) to (1);
	\end{pgfonlayer}
\end{tikzpicture}
\ \right\rrbracket
=
\left\llbracket\
\begin{tikzpicture}
	\begin{pgfonlayer}{nodelayer}
		\node [style=Xthick] (47) at (3, 3) {};
		\node [style=none] (48) at (3, 1.25) {};
		\node [style=none] (49) at (3, 4) {};
		\node [style=none] (50) at (3.75, 1.25) {};
		\node [style=none] (51) at (3, 4) {};
		\node [style=Z] (52) at (3.75, 2) {};
	\end{pgfonlayer}
	\begin{pgfonlayer}{edgelayer}
		\draw [thick] (47) to (51.center);
		\draw [thick] (47) to (48.center);
		\draw [thick, in=345, out=90] (52) to (47);
		\draw (50.center) to (52);
	\end{pgfonlayer}
\end{tikzpicture}
\ \right\rrbracket\ ,
\hspace*{.5cm}
\left\llbracket\
\begin{tikzpicture}
	\begin{pgfonlayer}{nodelayer}
		\node [style=none] (4) at (2.9, 2) {};
		\node [style=isourceAMshape, rotate=-90,fill=white] (50) at (2.9, 1) {};
		\node [style=none] (5) at (2.9, 1) {};
		\node [style=none] (6) at (2.9, 0) {};
		\node [style=none] (7) at (2.25, 0) {};
	\end{pgfonlayer}
	\begin{pgfonlayer}{edgelayer}
		\draw (6.center) to (5);
		\draw (5) to (4.center);
		\draw [style=electricalcontrol, in=-165, out=90] (7.center) to (5);
	\end{pgfonlayer}
\end{tikzpicture}
\ \right\rrbracket
=
\left\llbracket\
\begin{tikzpicture}
	\begin{pgfonlayer}{nodelayer}
		\node [style=X] (53) at (6, 3) {};
		\node [style=none] (54) at (6, 1.25) {};
		\node [style=none] (55) at (6, 4) {};
		\node [style=none] (56) at (5.25, 1.25) {};
	\end{pgfonlayer}
	\begin{pgfonlayer}{edgelayer}
		\draw [thick] (54.center) to (53);
		\draw [thick] (53) to (55.center);
		\draw [in=-135, out=90] (56.center) to (53);
	\end{pgfonlayer}
\end{tikzpicture}
\ \right\rrbracket
$$
On the other hand the dual multicoloured prop $\Aff\Isot\Rel_{\R(x)}^M$ gives a semantics for all of the uncontrolled components we have discussed so far as well as voltmeters and ammeters:
$$
\left\llbracket\
\begin{tikzpicture}
	\begin{pgfonlayer}{nodelayer}
		\node [style=none] (58) at (9.25, 1) {};
		\node [style=none] (59) at (9.25, -1) {};
		\node [style=none] (60) at (10, 1) {};
		\node [style=map] (61) at (9.25, 0) {$V$};
	\end{pgfonlayer}
	\begin{pgfonlayer}{edgelayer}
		\draw [style=electricalcontrol, in=-90, out=45, looseness=0.75] (61) to (60.center);
		\draw (59.center) to (61);
		\draw (61) to (58.center);
	\end{pgfonlayer}
\end{tikzpicture}
\ \right\rrbracket
=
\begin{tikzpicture}
	\begin{pgfonlayer}{nodelayer}
		\node [style=X] (62) at (12.25, 0) {};
		\node [style=X] (63) at (11.5, -0.5) {};
		\node [style=X] (64) at (11.5, 0.5) {};
		\node [style=s] (65) at (13, 1) {};
		\node [style=none] (66) at (13, 1.75) {};
		\node [style=none] (67) at (11.5, 1.75) {};
		\node [style=none] (68) at (11.5, -1.5) {};
		\node [style=none] (69) at (12.25, -1.5) {};
		\node [style=none] (70) at (12.25, 1.75) {};
	\end{pgfonlayer}
	\begin{pgfonlayer}{edgelayer}
		\draw (67.center) to (64);
		\draw (68.center) to (63);
		\draw (69.center) to (62);
		\draw (70.center) to (62);
		\draw [style=classicalwire, in=-90, out=30] (62) to (65);
		\draw [style=classicalwire] (65) to (66.center);
	\end{pgfonlayer}
\end{tikzpicture}
=
\begin{tikzpicture}
	\begin{pgfonlayer}{nodelayer}
		\node [style=s] (107) at (27, 1.75) {};
		\node [style=none] (108) at (27, 2.5) {};
		\node [style=none] (109) at (24.5, 1) {};
		\node [style=none] (110) at (25, -0.75) {};
		\node [style=none] (111) at (26.5, -0.75) {};
		\node [style=none] (112) at (26, 1) {};
		\node [style=Z] (113) at (25, 0) {};
		\node [style=X] (114) at (26.5, 0) {};
		\node [style=X] (115) at (25.5, 1) {};
		\node [style=none] (116) at (27, 1) {};
		\node [style=none] (117) at (26, 2.5) {};
		\node [style=none] (118) at (24.5, 2.5) {};
	\end{pgfonlayer}
	\begin{pgfonlayer}{edgelayer}
		\draw [style=classicalwire] (107) to (108.center);
		\draw [style=classicalwire] (116.center) to (107);
		\draw [in=-90, out=30] (114) to (116.center);
		\draw [in=-90, out=30] (113) to (115);
		\draw (114) to (111.center);
		\draw (113) to (110.center);
		\draw [in=-90, out=150] (114) to (112.center);
		\draw [in=-90, out=150] (113) to (109.center);
		\draw (112.center) to (117.center);
		\draw (109.center) to (118.center);
	\end{pgfonlayer}
\end{tikzpicture}
\ ,
\hspace*{.5cm}
\left\llbracket\
\begin{tikzpicture}
	\begin{pgfonlayer}{nodelayer}
		\node [style=none] (84) at (17.5, 1) {};
		\node [style=none] (85) at (17.5, -1) {};
		\node [style=none] (86) at (16.75, 1) {};
		\node [style=map] (87) at (17.5, 0) {$A$};
	\end{pgfonlayer}
	\begin{pgfonlayer}{edgelayer}
		\draw [style=electricalcontrol, in=-90, out=135, looseness=0.75] (87) to (86.center);
		\draw (85.center) to (87);
		\draw (87) to (84.center);
	\end{pgfonlayer}
\end{tikzpicture}
\ \right\rrbracket
=
\begin{tikzpicture}
	\begin{pgfonlayer}{nodelayer}
		\node [style=Z] (13) at (26.75, -0.25) {};
		\node [style=none] (15) at (26, 0.75) {};
		\node [style=none] (16) at (26.75, 0.75) {};
		\node [style=none] (18) at (27.5, 0.75) {};
		\node [style=none] (19) at (26.75, -1.25) {};
		\node [style=none] (20) at (27.5, -1.25) {};
	\end{pgfonlayer}
	\begin{pgfonlayer}{edgelayer}
		\draw (19.center) to (13);
		\draw (13) to (16.center);
		\draw [style=classicalwire, in=-90, out=150] (13) to (15.center);
		\draw (18.center) to (20.center);
	\end{pgfonlayer}
\end{tikzpicture}
=
\begin{tikzpicture}
	\begin{pgfonlayer}{nodelayer}
		\node [style=Z] (0) at (22.5, -0.25) {};
		\node [style=X] (1) at (24.5, -0.25) {};
		\node [style=none] (2) at (22, 0.75) {};
		\node [style=none] (3) at (23, 0.75) {};
		\node [style=none] (4) at (24, 0.75) {};
		\node [style=none] (5) at (25, 0.75) {};
		\node [style=none] (6) at (22.5, -1.25) {};
		\node [style=none] (7) at (24.5, -1.25) {};
		\node [style=X] (8) at (24, 0.75) {};
		\node [style=none] (9) at (22, 1.5) {};
		\node [style=none] (10) at (22, 1.5) {};
		\node [style=none] (11) at (23, 1.5) {};
		\node [style=none] (12) at (25, 1.5) {};
	\end{pgfonlayer}
	\begin{pgfonlayer}{edgelayer}
		\draw (6.center) to (0);
		\draw [in=-90, out=30] (0) to (3.center);
		\draw (7.center) to (1);
		\draw [in=-90, out=30] (1) to (5.center);
		\draw [in=-90, out=150] (1) to (4.center);
		\draw [in=-90, out=150] (0) to (2.center);
		\draw [style=classicalwire] (2.center) to (10.center);
		\draw (3.center) to (11.center);
		\draw (5.center) to (12.center);
	\end{pgfonlayer}
\end{tikzpicture}
$$
These are respectively the orthogonal complements of the interepretations of  following bastard spider diagrams in linear relations:
$$
\left\llbracket\
\begin{tikzpicture}
	\begin{pgfonlayer}{nodelayer}
		\node [style=none] (58) at (9.25, 1) {};
		\node [style=none] (59) at (9.25, -1) {};
		\node [style=none] (60) at (10, 1) {};
		\node [style=map] (61) at (9.25, 0) {$V$};
	\end{pgfonlayer}
	\begin{pgfonlayer}{edgelayer}
		\draw [style=electricalcontrol, in=-90, out=45, looseness=0.75] (61) to (60.center);
		\draw (59.center) to (61);
		\draw (61) to (58.center);
	\end{pgfonlayer}
\end{tikzpicture}
\ \right\rrbracket
=
\left\llbracket\
\begin{tikzpicture}
	\begin{pgfonlayer}{nodelayer}
		\node [style=Z] (21) at (28.5, 2.25) {};
		\node [style=none] (22) at (28.5, 4) {};
		\node [style=none] (23) at (28.5, 1.25) {};
		\node [style=none] (24) at (29.25, 4) {};
		\node [style=s] (25) at (29.25, 3) {};
	\end{pgfonlayer}
	\begin{pgfonlayer}{edgelayer}
		\draw [thick] (22.center) to (21);
		\draw [thick] (21) to (23.center);
		\draw [in=-90, out=30] (21) to (25);
		\draw (25) to (24.center);
	\end{pgfonlayer}
\end{tikzpicture}
\ \right\rrbracket^\perp
\ ,
\hspace*{.5cm}
\left\llbracket\
\begin{tikzpicture}
	\begin{pgfonlayer}{nodelayer}
		\node [style=none] (84) at (17.5, 1) {};
		\node [style=none] (85) at (17.5, -1) {};
		\node [style=none] (86) at (16.75, 1) {};
		\node [style=map] (87) at (17.5, 0) {$A$};
	\end{pgfonlayer}
	\begin{pgfonlayer}{edgelayer}
		\draw [style=electricalcontrol, in=-90, out=135, looseness=0.75] (87) to (86.center);
		\draw (85.center) to (87);
		\draw (87) to (84.center);
	\end{pgfonlayer}
\end{tikzpicture}
\ \right\rrbracket
=
\left\llbracket\
\begin{tikzpicture}
	\begin{pgfonlayer}{nodelayer}
		\node [style=Zthick] (26) at (31, 2.25) {};
		\node [style=none] (27) at (31, 4) {};
		\node [style=none] (28) at (31, 1.25) {};
		\node [style=none] (29) at (30.25, 4) {};
		\node [style=none] (30) at (31, 1.25) {};
		\node [style=X] (31) at (30.25, 3.25) {};
	\end{pgfonlayer}
	\begin{pgfonlayer}{edgelayer}
		\draw [thick] (26) to (30.center);
		\draw [thick] (26) to (27.center);
		\draw [thick, in=165, out=-90] (31) to (26);
		\draw (29.center) to (31);
	\end{pgfonlayer}
\end{tikzpicture}
\ \right\rrbracket^\perp
$$
The controlled voltage source is the electrical circuit analogue of a $X$-phase correction in quantum circuits.  Similarly, the controlled current source is the transpose of a measurement in the $X$-basis. However the voltmeters and ammeters have no quantum analogue. This is because, the ground is the zero-relation:

$$
\left\llbracket \
\begin{tikzpicture}[yscale=-1]
	\begin{pgfonlayer}{nodelayer}
		\node [style=none] (0) at (0.25, 0) {};
		\node [ground] (1) at (0.25, -0.5) {};
	\end{pgfonlayer}
	\begin{pgfonlayer}{edgelayer}
		\draw (1) to (0.center);
	\end{pgfonlayer}
\end{tikzpicture}\
\right\rrbracket
=
\begin{tikzpicture}
	\begin{pgfonlayer}{nodelayer}
		\node [style=X] (2) at (4, 0) {};
		\node [style=X] (3) at (4.5, 0) {};
		\node [style=none] (4) at (4, -1) {};
		\node [style=none] (5) at (4.5, -1) {};
	\end{pgfonlayer}
	\begin{pgfonlayer}{edgelayer}
		\draw (4.center) to (2);
		\draw (3) to (5.center);
	\end{pgfonlayer}
\end{tikzpicture}
=
\left\llbracket\
\begin{tikzpicture}
	\begin{pgfonlayer}{nodelayer}
		\node [style=Z] (0) at (31, 2.25) {};
		\node [style=none] (2) at (31, 1.25) {};
		\node [style=none] (4) at (31, 1.25) {};
	\end{pgfonlayer}
	\begin{pgfonlayer}{edgelayer}
		\draw [thick] (0) to (4.center);
	\end{pgfonlayer}
\end{tikzpicture}
\ \right\rrbracket^\perp
$$
 This is in contrast to the  quantum discard which is literally the discard relation; despite the fact that the quantum discard and ground are denoted by the same symbol in their respective disciplines, their semantics are orthogonal!

Just as Spekkens' toy model can be interpreted as an epistemically restricted toy theory of quantum circuits; by dualizing things, we could also ask what properties of quantum mechanics hold in $\Aff\Isot\Rel_{\F_p}^M$.  This  is an ``ontologically restricted'' toy theory of quantum mechanics; or equivalently an epistemically ``co-restricted'' toy theory. In other words, where there is a {\em minimum} amount of knowledge the observer has about the ontic state so that discarding a state imposes extra equations on the epistemic state.
We  showed how affine isotropic relations are not compatible with discarding; however this view of measurement is not compatible with quantum theory for the following reason. If we added the quantum analogue of  ammeters to our relational semantics of pure stabilizer circuits, we could compose the ammeter  with another ammeter conjugated by the symplectic Fourier transform as follows:
$$
\begin{tikzpicture}
	\begin{pgfonlayer}{nodelayer}
		\node [style=Z] (0) at (12.75, -9.25) {};
		\node [style=none] (1) at (11.75, -7.5) {};
		\node [style=none] (2) at (12.75, -10.25) {};
		\node [style=none] (3) at (13.5, -10.25) {};
		\node [style=Z] (4) at (13.5, -9.25) {};
		\node [style=none] (5) at (12.25, -7.5) {};
		\node [style=none] (6) at (13.5, -7.5) {};
		\node [style=none] (7) at (12.75, -7.5) {};
	\end{pgfonlayer}
	\begin{pgfonlayer}{edgelayer}
		\draw (2.center) to (0);
		\draw [style=classicalwire, in=-90, out=135] (0) to (1.center);
		\draw (4) to (6.center);
		\draw [style=classicalwire, in=-90, out=135] (4) to (5.center);
		\draw (4) to (3.center);
		\draw (0) to (7.center);
	\end{pgfonlayer}
\end{tikzpicture}
$$
This would allow us to simultaneously measure the  $Z$ and $X$ observables which is not possible in quantum mechanics due to the uncertainty principle.
\section{Error correction}
\label{sec:qec}
In this section, we show how to implement quantum error correction protocols for stabilizer codes using the string diagrams we developed in the previous section.  See \cite{gottesman} or \cite{nielsen} for reference on stabilizers codes and error correction.  Nothing in this section is particularly novel from a technical point of view; however, it is conceptually different from the way that stabilizer codes are usually explained.

The graphical algebra approach to error correction which we employ in this section strictly generalizes that of \cite{grok}; where they use linear relations over $\F_2$ as the semantics for CSS-codes. Although we give much more elementary examples.

Fix an odd prime local dimension  $p$.
Consider an affine coisotropic subspace $S=L+a \subseteq \F_p^{2n}$ where $L$ has dimension $n+k$ (or for qubits take an affine coisotropic subspace $S=L+a \subseteq \F_p^{2n}$ without linear phase).  Then the associated projector on $n$-qudits is called a $[n,k]$-stabilizer code, as it encodes $k$ logical qudits into $n$ physical qudits. 
The relationship between logical and physical qudits can be understood in terms of pictures.  We will draw the doubled string diagrams for calculation; accompanied by the qudit stabilizer bastard spider diagrams to give a less cluttered presentation.

Recalling Corollary \ref{cor:affsymstine}, fix a unitary dilation $U$ of $S$:
$$

$$
The tuple $d \in \F_p^{n-k}$ is called the {\bf syndrome}. The syndrome measures the displacement of the basis elements $b_i$ by errors.
An error $W(e)$ is {\bf undetectable} if and only if the syndrome is the zero vector; this is because $e$ commutes with everything in $L+a$ meaning that $e \in L^\omega+a$.  In particular, the trivial error is undetectable; so undetectable errors are indistinguishable from having no errors at all.

To correct for errors, given any nonzero syndrome measurement $d\in \F_p^{n-k}$, Bob picks an error ${\mathcal W}(e)$ which he wishes to correct. The choice of errors which Bob chooses to correct for determines a function $f:\F_p^{n-k}\to\F_p^{2n}$ sending $d\mapsto e$.
Given syndrome $d$, Bob applies the operation ${\mathcal W}(-f(d))$ to his $n$ qudits. Finally, Bob applies  $U^\dag$  to the quantum channel and then discards the last $n-k$ qudits.

If $f$ is an affine transformation, then there is a classically controlled operation
$c_f:C^{\otimes(n-k)} \otimes Q^{\otimes n}\to Q^{\otimes n}$ which imeplements this controlled operation, so that:
$$

$$
Suppose we want to correct for single  $X$ errors, then we find that:
\begin{itemize}
\item An $X$ error $(d,e,f) = (1,0,0)$ yields syndrome $(e+d,f+d) = (1,1)$
\item An $X$ error $(d,e,f) = (0,1,0)$ yields syndrome $(e+d,f+d) = (1,0)$
\item An $X$ error$(d,e,f) = (0,0,1)$ yields syndrome $(e+d,f+d) = (0,1)$
\item An $X$ error $(d,e,f) = (0,0,0)$ yields syndrome $(e+d,f+d) = (0,0)$
\end{itemize}
Therefore, we want to apply the correction $(s,t) \mapsto (s t, s (t+1),(s+1) t)$. This is a nonlinear function, so we have to leave category $\Aff\Co\Isot\Rel_{\F_p}^M$.  The error correction protocol then has the following form:
$$
\begin{tikzpicture}
	\begin{pgfonlayer}{nodelayer}
		\node [style=none] (2375) at (319, -25.5) {};
		\node [style=none] (2376) at (319.5, -25.25) {};
		\node [style=none] (2377) at (320, -25.25) {};
		\node [style=Z] (2378) at (319, -24.75) {};
		\node [style=Z] (2379) at (319, -24.25) {};
		\node [style=X] (2380) at (319.5, -24.5) {};
		\node [style=X] (2381) at (320, -23.75) {};
		\node [style=X] (2382) at (320, -25.25) {};
		\node [style=X] (2383) at (319.5, -25.25) {};
		\node [style=none] (2384) at (316.5, -25.5) {};
		\node [style=none] (2385) at (317, -25.25) {};
		\node [style=none] (2386) at (317.5, -25.25) {};
		\node [style=Z] (2387) at (317, -24.5) {};
		\node [style=X] (2388) at (316.5, -24.75) {};
		\node [style=X] (2389) at (316.5, -24.25) {};
		\node [style=Z] (2390) at (317.5, -23.75) {};
		\node [style=none] (2391) at (319, -23.75) {};
		\node [style=none] (2392) at (319.5, -23.75) {};
		\node [style=none] (2393) at (320, -23.75) {};
		\node [style=none] (2394) at (316.5, -23.75) {};
		\node [style=none] (2395) at (317, -23.75) {};
		\node [style=none] (2396) at (317.5, -23.75) {};
		\node [style=Z] (2397) at (317, -25.25) {};
		\node [style=Z] (2398) at (317.5, -25.25) {};
		\node [style=none] (2399) at (321.75, -20.75) {};
		\node [style=none] (2400) at (322.25, -20.75) {};
		\node [style=none] (2401) at (322.75, -20.75) {};
		\node [style=none] (2402) at (319.25, -20.75) {};
		\node [style=none] (2403) at (319.75, -20.75) {};
		\node [style=none] (2404) at (320.25, -20.75) {};
		\node [style=X] (2405) at (318, -22) {$a$};
		\node [style=X] (2406) at (318.5, -22) {$b$};
		\node [style=X] (2407) at (319, -22) {$c$};
		\node [style=X] (2408) at (320.5, -22) {$d$};
		\node [style=X] (2409) at (321.25, -22) {$e$};
		\node [style=X] (2410) at (321.75, -22) {$f$};
		\node [style=none] (2411) at (321.75, -20.75) {};
		\node [style=none] (2412) at (322.25, -20.75) {};
		\node [style=none] (2413) at (322.75, -20.75) {};
		\node [style=none] (2414) at (319.25, -20.75) {};
		\node [style=none] (2415) at (319.75, -20.75) {};
		\node [style=none] (2416) at (320.25, -20.75) {};
		\node [style=Z] (2417) at (321.75, -18.5) {};
		\node [style=Z] (2418) at (321.75, -18) {};
		\node [style=X] (2419) at (322.25, -18.25) {};
		\node [style=X] (2420) at (322.75, -17.5) {};
		\node [style=Z] (2421) at (319.75, -18.25) {};
		\node [style=X] (2422) at (319.25, -18.5) {};
		\node [style=X] (2423) at (319.25, -18) {};
		\node [style=Z] (2424) at (320.25, -17.5) {};
		\node [style=none] (2425) at (321.75, -11.5) {};
		\node [style=none] (2426) at (322.25, -11.75) {};
		\node [style=none] (2427) at (322.75, -11.75) {};
		\node [style=none] (2428) at (319.25, -11.5) {};
		\node [style=none] (2429) at (319.75, -11.75) {};
		\node [style=none] (2430) at (320.25, -11.75) {};
		\node [style=Z] (2431) at (321.75, -19.75) {};
		\node [style=Z] (2432) at (321.75, -20.25) {};
		\node [style=X] (2433) at (322.25, -20) {};
		\node [style=X] (2434) at (322.75, -20.75) {};
		\node [style=Z] (2435) at (319.75, -20) {};
		\node [style=X] (2436) at (319.25, -19.75) {};
		\node [style=X] (2437) at (319.25, -20.25) {};
		\node [style=Z] (2438) at (320.25, -20.75) {};
		\node [style=none] (2439) at (321.75, -20.75) {};
		\node [style=none] (2440) at (322.25, -20.75) {};
		\node [style=none] (2441) at (322.75, -20.75) {};
		\node [style=none] (2442) at (319.25, -20.75) {};
		\node [style=none] (2443) at (319.75, -20.75) {};
		\node [style=none] (2444) at (320.25, -20.75) {};
		\node [style=X] (2445) at (324.25, -17.75) {};
		\node [style=X] (2446) at (324.75, -18) {};
		\node [style=Z] (2447) at (320.75, -18.25) {};
		\node [style=Z] (2448) at (321.25, -18.5) {};
		\node [style=Z] (2449) at (320.75, -19.25) {};
		\node [style=Z] (2450) at (321.25, -19.25) {};
		\node [style=X] (2451) at (324.25, -19.25) {};
		\node [style=X] (2452) at (324.75, -19.25) {};
		\node [style=Z] (2453) at (320.75, -17.25) {};
		\node [style=Z] (2454) at (321.25, -17.25) {};
		\node [style=none] (2455) at (324.25, -17.25) {};
		\node [style=none] (2456) at (324.75, -17.25) {};
		\node [style=Z] (2457) at (322.25, -18.75) {};
		\node [style=Z] (2458) at (322.75, -19) {};
		\node [style=X] (2459) at (319.75, -18.75) {};
		\node [style=X] (2460) at (320.25, -19) {};
		\node [style=none] (2461) at (324.75, -16.75) {};
		\node [style=none] (2462) at (317.75, -16.75) {};
		\node [style=none] (2463) at (320.75, -25.5) {};
		\node [style=none] (2464) at (317.5, -20.75) {Alice};
		\node [style=none] (2465) at (324, -21) {Bob};
		\node [style=Z] (2466) at (324.25, -16.75) {};
		\node [style=Z] (2467) at (324.75, -16.75) {};
		\node [style=X] (2468) at (322.75, -13.75) {};
		\node [style=X] (2469) at (322.25, -14) {};
		\node [style=X] (2470) at (321.75, -14.25) {};
		\node [style=none] (2471) at (323.25, -15) {};
		\node [style=none] (2472) at (325.75, -11.5) {};
		\node [style=none] (2473) at (326.25, -11.5) {};
		\node [style=none] (2474) at (317.75, -11.5) {};
		\node [style=andin] (2475) at (325, -14.75) {};
		\node [style=Z] (2476) at (321.75, -12.25) {};
		\node [style=Z] (2477) at (321.75, -12.75) {};
		\node [style=X] (2478) at (322.25, -12.5) {};
		\node [style=X] (2479) at (322.75, -13.25) {};
		\node [style=Z] (2480) at (319.75, -12.5) {};
		\node [style=X] (2481) at (319.25, -12.25) {};
		\node [style=X] (2482) at (319.25, -12.75) {};
		\node [style=Z] (2483) at (320.25, -13.25) {};
		\node [style=Z] (2484) at (319.75, -11.75) {};
		\node [style=Z] (2485) at (322.25, -11.75) {};
		\node [style=Z] (2486) at (322.75, -11.75) {};
		\node [style=Z] (2487) at (320.25, -11.75) {};
		\node [style=none] (2488) at (324, -14.75) {};
		\node [style=andin] (2489) at (323.25, -15) {};
		\node [style=none] (2490) at (325, -14.75) {};
		\node [style=andin] (2491) at (324, -14.75) {};
		\node [style=X] (2492) at (324.25, -15.5) {$1$};
		\node [style=X] (2493) at (324.75, -15.5) {$1$};
	\end{pgfonlayer}
	\begin{pgfonlayer}{edgelayer}
		\draw (2375.center) to (2378);
		\draw (2376.center) to (2380);
		\draw (2377.center) to (2381);
		\draw (2378) to (2380);
		\draw (2381) to (2379);
		\draw (2378) to (2379);
		\draw (2389) to (2390);
		\draw (2388) to (2387);
		\draw (2387) to (2385.center);
		\draw (2384.center) to (2388);
		\draw (2388) to (2389);
		\draw (2386.center) to (2390);
		\draw (2392.center) to (2380);
		\draw (2379) to (2391.center);
		\draw (2395.center) to (2387);
		\draw (2389) to (2394.center);
		\draw [in=-135, out=90, looseness=1.25] (2394.center) to (2405);
		\draw [in=45, out=-90] (2402.center) to (2405);
		\draw [in=-135, out=90, looseness=1.25] (2395.center) to (2406);
		\draw [in=270, out=45] (2406) to (2403.center);
		\draw [in=45, out=-90] (2404.center) to (2407);
		\draw [in=90, out=-135, looseness=1.25] (2407) to (2396.center);
		\draw [in=45, out=-90] (2399.center) to (2408);
		\draw [in=90, out=-135, looseness=1.25] (2408) to (2391.center);
		\draw [in=-135, out=90, looseness=1.25] (2392.center) to (2409);
		\draw [in=270, out=45] (2409) to (2400.center);
		\draw [in=45, out=-90] (2401.center) to (2410);
		\draw [in=90, out=-135, looseness=1.25] (2410) to (2393.center);
		\draw (2417) to (2419);
		\draw (2420) to (2418);
		\draw (2417) to (2418);
		\draw (2423) to (2424);
		\draw (2422) to (2421);
		\draw (2422) to (2423);
		\draw (2420) to (2427.center);
		\draw (2426.center) to (2419);
		\draw (2418) to (2425.center);
		\draw (2424) to (2430.center);
		\draw (2429.center) to (2421);
		\draw (2423) to (2428.center);
		\draw (2431) to (2433);
		\draw (2434) to (2432);
		\draw (2431) to (2432);
		\draw (2437) to (2438);
		\draw (2436) to (2435);
		\draw (2436) to (2437);
		\draw (2440.center) to (2433);
		\draw (2432) to (2439.center);
		\draw (2443.center) to (2435);
		\draw (2437) to (2442.center);
		\draw (2431) to (2417);
		\draw (2450) to (2448);
		\draw (2448) to (2454);
		\draw (2453) to (2447);
		\draw (2447) to (2449);
		\draw (2451) to (2445);
		\draw (2446) to (2452);
		\draw (2445) to (2455.center);
		\draw (2446) to (2456.center);
		\draw (2438) to (2460);
		\draw (2460) to (2424);
		\draw (2459) to (2421);
		\draw (2435) to (2459);
		\draw (2436) to (2422);
		\draw (2457) to (2433);
		\draw (2420) to (2458);
		\draw (2458) to (2434);
		\draw (2457) to (2419);
		\draw (2460) to (2448);
		\draw (2459) to (2447);
		\draw (2457) to (2445);
		\draw (2458) to (2446);
		\draw [style=classicalwire] (2456.center) to (2461.center);
		\draw [style=dotted, in=90, out=-90, looseness=1.25] (2462.center) to (2463.center);
		\draw (2425.center) to (2470);
		\draw [style=classicalwire, in=-60, out=135] (2467) to (2471.center);
		\draw [style=classicalwire, in=-105, out=165] (2466) to (2471.center);
		\draw [style=classicalwire, in=90, out=-45] (2470) to (2471.center);
		\draw [style=classicalwire, in=-90, out=45] (2466) to (2472.center);
		\draw [style=classicalwire, in=270, out=15, looseness=0.75] (2467) to (2473.center);
		\draw (2476) to (2478);
		\draw (2479) to (2477);
		\draw (2482) to (2483);
		\draw (2481) to (2480);
		\draw [style=classicalwire] (2455.center) to (2466);
		\draw [style=dotted] (2462.center) to (2474.center);
		\draw [style=classicalwire, in=-135, out=135] (2466) to (2488.center);
		\draw [style=classicalwire, in=-90, out=90] (2467) to (2492);
		\draw [style=classicalwire, in=-45, out=90] (2492) to (2488.center);
		\draw [style=classicalwire, in=-105, out=90] (2466) to (2493);
		\draw [style=classicalwire, in=-135, out=90] (2493) to (2490.center);
		\draw [style=classicalwire, in=45, out=-30] (2490.center) to (2467);
		\draw [style=classicalwire, in=-30, out=90] (2490.center) to (2468);
		\draw [style=classicalwire, in=-30, out=90] (2488.center) to (2469);
	\end{pgfonlayer}
\end{tikzpicture}
=
\begin{tikzpicture}
	\begin{pgfonlayer}{nodelayer}
		\node [style=none] (0) at (242.5, -21.25) {};
		\node [style=none] (1) at (243, -21) {};
		\node [style=none] (2) at (243.5, -21) {};
		\node [style=Z] (3) at (242.5, -20.5) {};
		\node [style=Z] (4) at (242.5, -20) {};
		\node [style=X] (5) at (243, -20.25) {};
		\node [style=X] (6) at (243.5, -19.5) {};
		\node [style=X] (7) at (243.5, -21) {};
		\node [style=X] (8) at (243, -21) {};
		\node [style=none] (9) at (240.5, -21.25) {};
		\node [style=none] (10) at (241, -21) {};
		\node [style=none] (11) at (241.5, -21) {};
		\node [style=Z] (12) at (241, -20.25) {};
		\node [style=X] (13) at (240.5, -20.5) {};
		\node [style=X] (14) at (240.5, -20) {};
		\node [style=Z] (15) at (241.5, -19.5) {};
		\node [style=Z] (16) at (241, -21) {};
		\node [style=Z] (17) at (241.5, -21) {};
		\node [style=none] (18) at (242.5, -18) {};
		\node [style=none] (19) at (243, -18) {};
		\node [style=none] (20) at (243.5, -18) {};
		\node [style=none] (21) at (240.5, -18) {};
		\node [style=none] (22) at (241, -18) {};
		\node [style=none] (23) at (241.5, -18) {};
		\node [style=X] (24) at (243, -18.75) {$g$};
		\node [style=X] (25) at (243.5, -18.75) {$g$};
		\node [style=X] (26) at (242.5, -18.75) {$g$};
		\node [style=X] (27) at (244.5, -17.5) {$e+d$};
		\node [style=X] (28) at (246.25, -17.5) {$f+d$};
		\node [style=X] (29) at (240.5, -18.75) {$a$};
		\node [style=X] (30) at (241, -18.75) {$b$};
		\node [style=X] (31) at (241.5, -18.75) {$c$};
		\node [style=none] (32) at (244.5, -16.25) {};
		\node [style=none] (33) at (246.25, -16.25) {};
		\node [style=Z] (34) at (242.5, -17) {};
		\node [style=Z] (35) at (242.5, -17.5) {};
		\node [style=X] (36) at (243, -17.25) {};
		\node [style=X] (37) at (243.5, -18) {};
		\node [style=Z] (38) at (241, -17.25) {};
		\node [style=X] (39) at (240.5, -17) {};
		\node [style=X] (40) at (240.5, -17.5) {};
		\node [style=Z] (41) at (241.5, -18) {};
		\node [style=Z] (42) at (241, -16.5) {};
		\node [style=Z] (43) at (241.5, -16.5) {};
		\node [style=Z] (44) at (243, -16.5) {};
		\node [style=Z] (45) at (243.5, -16.5) {};
		\node [style=none] (46) at (242.5, -16.25) {};
		\node [style=none] (47) at (240.5, -16.25) {};
	\end{pgfonlayer}
	\begin{pgfonlayer}{edgelayer}
		\draw (0.center) to (3);
		\draw (1.center) to (5);
		\draw (2.center) to (6);
		\draw (3) to (5);
		\draw (6) to (4);
		\draw (3) to (4);
		\draw (14) to (15);
		\draw (13) to (12);
		\draw (12) to (10.center);
		\draw (9.center) to (13);
		\draw (13) to (14);
		\draw (11.center) to (15);
		\draw (12) to (22.center);
		\draw (15) to (23.center);
		\draw (4) to (26);
		\draw (26) to (18.center);
		\draw (19.center) to (24);
		\draw (24) to (5);
		\draw (6) to (25);
		\draw (25) to (20.center);
		\draw (14) to (29);
		\draw (29) to (21.center);
		\draw [style=classicalwire] (27) to (32.center);
		\draw [style=classicalwire] (28) to (33.center);
		\draw (34) to (36);
		\draw (37) to (35);
		\draw (34) to (35);
		\draw (40) to (41);
		\draw (39) to (38);
		\draw (39) to (40);
		\draw (39) to (47.center);
		\draw (22.center) to (42);
		\draw (43) to (41);
		\draw (34) to (46.center);
		\draw (44) to (36);
		\draw (37) to (45);
		\draw (36) to (19.center);
		\draw (18.center) to (35);
		\draw (40) to (21.center);
	\end{pgfonlayer}
\end{tikzpicture}
$$
Where 
$$
g:=d+(e+d)(f+d) = e+(e+d)(f+d+1) =f+(e+d+1)(f+d) = de+ef+fd
\mod 2$$
If no more than one of $d,e,f$ is $1$ then $g=0$.  If furthermore $a=b=c=0$, then:
$$
\begin{tikzpicture}
	\begin{pgfonlayer}{nodelayer}
		\node [style=none] (50) at (260.75, -21.25) {};
		\node [style=none] (51) at (261.25, -21) {};
		\node [style=none] (52) at (261.75, -21) {};
		\node [style=Z] (53) at (260.75, -20.5) {};
		\node [style=Z] (54) at (260.75, -20) {};
		\node [style=X] (55) at (261.25, -20.25) {};
		\node [style=X] (56) at (261.75, -19.5) {};
		\node [style=X] (57) at (261.75, -21) {};
		\node [style=X] (58) at (261.25, -21) {};
		\node [style=none] (59) at (258.75, -21.25) {};
		\node [style=none] (60) at (259.25, -21) {};
		\node [style=none] (61) at (259.75, -21) {};
		\node [style=Z] (62) at (259.25, -20.25) {};
		\node [style=X] (63) at (258.75, -20.5) {};
		\node [style=X] (64) at (258.75, -20) {};
		\node [style=Z] (65) at (259.75, -19.5) {};
		\node [style=Z] (66) at (259.25, -21) {};
		\node [style=Z] (67) at (259.75, -21) {};
		\node [style=none] (68) at (260.75, -18) {};
		\node [style=none] (69) at (261.25, -18) {};
		\node [style=none] (70) at (261.75, -18) {};
		\node [style=none] (71) at (258.75, -18) {};
		\node [style=none] (72) at (259.25, -18) {};
		\node [style=none] (73) at (259.75, -18) {};
		\node [style=X] (74) at (261.25, -18.75) {$g$};
		\node [style=X] (75) at (261.75, -18.75) {$g$};
		\node [style=X] (76) at (260.75, -18.75) {$g$};
		\node [style=X] (77) at (262.75, -17.5) {$e+d$};
		\node [style=X] (78) at (264.25, -17.5) {$f+d$};
		\node [style=X] (79) at (258.75, -18.75) {$a$};
		\node [style=X] (80) at (259.25, -18.75) {$b$};
		\node [style=X] (81) at (259.75, -18.75) {$c$};
		\node [style=none] (82) at (262.75, -16.25) {};
		\node [style=none] (83) at (264.25, -16.25) {};
		\node [style=Z] (84) at (260.75, -17) {};
		\node [style=Z] (85) at (260.75, -17.5) {};
		\node [style=X] (86) at (261.25, -17.25) {};
		\node [style=X] (87) at (261.75, -18) {};
		\node [style=Z] (88) at (259.25, -17.25) {};
		\node [style=X] (89) at (258.75, -17) {};
		\node [style=X] (90) at (258.75, -17.5) {};
		\node [style=Z] (91) at (259.75, -18) {};
		\node [style=Z] (92) at (259.25, -16.5) {};
		\node [style=Z] (93) at (259.75, -16.5) {};
		\node [style=Z] (94) at (261.25, -16.5) {};
		\node [style=Z] (95) at (261.75, -16.5) {};
		\node [style=none] (96) at (260.75, -16.25) {};
		\node [style=none] (97) at (258.75, -16.25) {};
	\end{pgfonlayer}
	\begin{pgfonlayer}{edgelayer}
		\draw (50.center) to (53);
		\draw (51.center) to (55);
		\draw (52.center) to (56);
		\draw (53) to (55);
		\draw (56) to (54);
		\draw (53) to (54);
		\draw (64) to (65);
		\draw (63) to (62);
		\draw (62) to (60.center);
		\draw (59.center) to (63);
		\draw (63) to (64);
		\draw (61.center) to (65);
		\draw (62) to (72.center);
		\draw (65) to (73.center);
		\draw (54) to (76);
		\draw (76) to (68.center);
		\draw (69.center) to (74);
		\draw (74) to (55);
		\draw (56) to (75);
		\draw (75) to (70.center);
		\draw (64) to (79);
		\draw (79) to (71.center);
		\draw [style=classicalwire] (77) to (82.center);
		\draw [style=classicalwire] (78) to (83.center);
		\draw (84) to (86);
		\draw (87) to (85);
		\draw (84) to (85);
		\draw (90) to (91);
		\draw (89) to (88);
		\draw (89) to (90);
		\draw (89) to (97.center);
		\draw (72.center) to (92);
		\draw (93) to (91);
		\draw (84) to (96.center);
		\draw (94) to (86);
		\draw (87) to (95);
		\draw (86) to (69.center);
		\draw (68.center) to (85);
		\draw (90) to (71.center);
	\end{pgfonlayer}
\end{tikzpicture}
=
\begin{tikzpicture}
	\begin{pgfonlayer}{nodelayer}
		\node [style=none] (8) at (254.25, -21.25) {};
		\node [style=none] (9) at (254.75, -21) {};
		\node [style=none] (10) at (255.25, -21) {};
		\node [style=Z] (11) at (254.25, -20.5) {};
		\node [style=Z] (12) at (254.25, -20) {};
		\node [style=X] (13) at (254.75, -20.25) {};
		\node [style=X] (14) at (255.25, -19.5) {};
		\node [style=X] (15) at (255.25, -21) {};
		\node [style=X] (16) at (254.75, -21) {};
		\node [style=none] (17) at (252.25, -21.25) {};
		\node [style=none] (18) at (252.75, -21) {};
		\node [style=none] (19) at (253.25, -21) {};
		\node [style=Z] (20) at (252.75, -20.25) {};
		\node [style=X] (21) at (252.25, -20.5) {};
		\node [style=X] (22) at (252.25, -20) {};
		\node [style=Z] (23) at (253.25, -19.5) {};
		\node [style=Z] (24) at (252.75, -21) {};
		\node [style=Z] (25) at (253.25, -21) {};
		\node [style=none] (26) at (254.25, -18) {};
		\node [style=none] (27) at (254.75, -18) {};
		\node [style=none] (28) at (255.25, -18) {};
		\node [style=none] (29) at (252.25, -18) {};
		\node [style=none] (30) at (252.75, -18) {};
		\node [style=none] (31) at (253.25, -18) {};
		\node [style=X] (32) at (256.25, -17.5) {$e+d$};
		\node [style=X] (33) at (257.75, -17.5) {$f+d$};
		\node [style=none] (34) at (256.25, -16.25) {};
		\node [style=none] (35) at (257.75, -16.25) {};
		\node [style=Z] (36) at (254.25, -17) {};
		\node [style=Z] (37) at (254.25, -17.5) {};
		\node [style=X] (38) at (254.75, -17.25) {};
		\node [style=X] (39) at (255.25, -18) {};
		\node [style=Z] (40) at (252.75, -17.25) {};
		\node [style=X] (41) at (252.25, -17) {};
		\node [style=X] (42) at (252.25, -17.5) {};
		\node [style=Z] (43) at (253.25, -18) {};
		\node [style=Z] (44) at (252.75, -16.5) {};
		\node [style=Z] (45) at (253.25, -16.5) {};
		\node [style=Z] (46) at (254.75, -16.5) {};
		\node [style=Z] (47) at (255.25, -16.5) {};
		\node [style=none] (48) at (254.25, -16.25) {};
		\node [style=none] (49) at (252.25, -16.25) {};
	\end{pgfonlayer}
	\begin{pgfonlayer}{edgelayer}
		\draw (8.center) to (11);
		\draw (9.center) to (13);
		\draw (10.center) to (14);
		\draw (11) to (13);
		\draw (14) to (12);
		\draw (11) to (12);
		\draw (22) to (23);
		\draw (21) to (20);
		\draw (20) to (18.center);
		\draw (17.center) to (21);
		\draw (21) to (22);
		\draw (19.center) to (23);
		\draw (20) to (30.center);
		\draw (23) to (31.center);
		\draw [style=classicalwire] (32) to (34.center);
		\draw [style=classicalwire] (33) to (35.center);
		\draw (36) to (38);
		\draw (39) to (37);
		\draw (36) to (37);
		\draw (42) to (43);
		\draw (41) to (40);
		\draw (41) to (42);
		\draw (41) to (49.center);
		\draw (30.center) to (44);
		\draw (45) to (43);
		\draw (36) to (48.center);
		\draw (46) to (38);
		\draw (39) to (47);
		\draw (38) to (27.center);
		\draw (26.center) to (37);
		\draw (42) to (29.center);
		\draw (22) to (29.center);
		\draw (13) to (27.center);
		\draw (12) to (26.center);
		\draw (14) to (39);
	\end{pgfonlayer}
\end{tikzpicture}
=
\begin{tikzpicture}
	\begin{pgfonlayer}{nodelayer}
		\node [style=none] (0) at (248.75, -21) {};
		\node [style=none] (1) at (248.25, -21) {};
		\node [style=X] (2) at (249.75, -17.25) {$e+d$};
		\node [style=X] (3) at (251.25, -17.25) {$f+d$};
		\node [style=none] (4) at (249.75, -16) {};
		\node [style=none] (5) at (251.25, -16) {};
		\node [style=none] (6) at (248.75, -16) {};
		\node [style=none] (7) at (248.25, -16) {};
	\end{pgfonlayer}
	\begin{pgfonlayer}{edgelayer}
		\draw [style=classicalwire] (2) to (4.center);
		\draw [style=classicalwire] (3) to (5.center);
		\draw (1.center) to (7.center);
		\draw (6.center) to (0.center);
	\end{pgfonlayer}
\end{tikzpicture}
$$
Therefore this error correction protocol corrects for at most one $X$-error.
\end{example}

\subsubsection*{Acknowledgments}
The author thanks Alex Cowtan for useful discussions about quantum error correction, and various people who have pointed out minor errors and typos.

\bibliographystyle{eptcs}

\bibliography{lagrel}

\end{document}